\documentclass[aps,prx,superscriptaddress,
tightenlines,
longbibliography,nofootinbib,twocolumn]{revtex4-1}
\usepackage{soul}

\newcommand{\bel}[1]{{\color{black}#1}}

\newcommand{\Q}[1]{\ensuremath \mathcal{Q}_{#1}}
\newcommand{\B}[1]{\ensuremath \mathcal{B}_{#1}}
\newcommand{\C}[1]{\ensuremath \mathcal{C}_{#1}}
\newcommand{\ReS}{\ensuremath \mathbb{R}}
\newcommand{\id}{\ensuremath \mathbb{I}}

\newcommand{\X}{\ensuremath \mathbb{X}}
\newcommand{\A}{\ensuremath \mathbb{A}}
\newcommand{\Y}{\ensuremath \mathbb{Y}}
\newcommand{\Bo}{\ensuremath \mathbb{B}}
\newcommand{\As}{\ensuremath \boldsymbol{\Sigma}}

\input{preamble.tex}
\usepackage{bm}
\usepackage{upgreek}
\usepackage[linktocpage=true, colorlinks=true, linkcolor=blue, urlcolor=blue, citecolor=blue]{hyperref}
\usepackage{paralist}
\usepackage{soul}
\usepackage{changes}

\newtheorem{defn}[theorem]{Definition}

\newtheorem{thm}[theorem]{Theorem}

\definecolor{darkgreen}{rgb}{0,.5,0}
\definecolor{darkblue}{rgb}{0,0,0.5}
\definecolor{darkred}{rgb}{0.5,0,0}

\usepackage{xr}

\allowdisplaybreaks

\begin{document}

\title{\bel{Post-quantum steering is a stronger-than-quantum resource for information processing}}

\author{Paulo J. Cavalcanti}
\email{paulojcvf@gmail.com}
\affiliation{International Centre for Theory of Quantum Technologies, University of Gda\'nsk, 80-309 Gda\'nsk, Poland}
\author{John H.~Selby}
\email{john.h.selby@gmail.com}
\affiliation{International Centre for Theory of Quantum Technologies, University of Gda\'nsk, 80-309 Gda\'nsk, Poland}
\author{Jamie Sikora}
\email{jamiesikora@gmail.com} 
\affiliation{Virginia Polytechnic Institute and State University, Blacksburg, VA 24061, USA}
\author{Thomas D.~Galley}
\email{tgalley1@perimeterinstitute.ca}
\affiliation{Perimeter Institute for Theoretical Physics, 31 Caroline St. N, Waterloo, Ontario, N2L 2Y5, Canada}
\author{Ana Bel\'en Sainz}
\email{sainz.ab@gmail.com}
\affiliation{International Centre for Theory of Quantum Technologies, University of Gda\'nsk, 80-309 Gda\'nsk, Poland}

\date{\today} 

\begin{abstract}
\bel{We present the first instance where post-quantum steering is a stronger-than-quantum resource for information processing -- remote state preparation. In addition, we show that the phenomenon of post-quantum steering is not just a mere mathematical curiosity allowed by the no-signalling principle, but it may arise within compositional theories beyond quantum theory, hence making its study fundamentally relevant. We show these results by formulating a new compositional general probabilistic theory -- which we call Witworld -- with strong post-quantum features, which proves to be a intuitive and useful tool for exploring steering and its applications beyond the quantum realm.} 
\end{abstract}

\maketitle

\section{Introduction}

A striking property of nature is that it is non-classical. Entanglement \cite{EPR35,schrodinger1935gegenwartige}, Bell nonlocality \cite{Bell64}, and steering \cite{schrodinger36,SteeWiseman07,SteeWiseman207} are examples of quantum phenomena that can be observed experimentally \cite{BellExp1,BellExp2,BellExp3,SteeExp1,SteeExp2,SteeExp3} and which cannot be explained by classical physics. Besides their foundational relevance, with the advent of quantum information theory we learned a valuable lesson: these seemingly bizarre quantum features can be exploited to process information more efficiently, even in ways that could never be possible with classical resources alone \cite{BellRev,SteeRevSDP,SteeRev}. 

\bel{A ubiquitous framework in which the scope of quantum advantage in information processing is explored is the so-called {device-independent {framework}}, where the parties executing protocols only rely on the classical inputs and outputs with which they operate their shared (and possibly quantum) devices. Such {a} framework is particularly well suited to the necessarily paranoid perspective on cryptographic tasks \cite{BellRev}, and is almost ubiquitously underpinned by a Bell nonlocality setup. A special milestone in the research of non-classical resources for device-independent information processing was the realisation that there exist correlations beyond what is quantumly admissible (a.k.a., {post-quantum} {correlations}), but which nonetheless are consistent with special relativity \cite{popescu1994quantum}. These so-called no-signalling correlations, moreover, were shown to be consistent with alternative theories of nature, confirming the necessity of their study. Exploring these general no-signalling correlations enabled, for example, the design of quantum cryptographic protocols that are robust against powerful adversaries that are not bounded by the laws of quantum theory \cite{QKD1,QKD2}, and also the formulation of physical principles that a quantum world must satisfy \cite{popescu1994quantum,brassard2006limit,pawlowski2009information,navascues2010glance,LO,LO2,afls,MNC}. These post-quantum correlations are hence studied beyond philosophical {motivations}, and from the perspective of the resources they provide for operational tasks.}

\bel{Device-independent frameworks for information processing, however, are substantially demanding to implement experimentally. {Indeed}, for practical purposes, even if cryptographically secure, device-independent protocols are yet to move beyond `proof of principle' applications into scalable and easily-accessible technologies. There are situations, however, where one may argue that the quantum description of some of the parties involved can be leveraged  in the protocols: in the simplest case where two parties are involved {and a single party is assigned a quantum description} this is usually referred to as a one-way device-independent framework \cite{SteeRev,SteeRevSDP}. In such scenarios, the non-classical phenomenon providing quantum advantage is steering rather than Bell nonlocality. Recently, it has been shown that steering beyond that {which} quantum theory allows, whilst still consistent with special relativity, may exist \cite{pqs,bpqs}, which opens a new plethora of questions, such as (i) can post-quantum steering provide an advantage beyond what is possible with quantum theory for some information processing task?; (ii) is post-quantum steering just a mathematical curiosity, or may it emerge within alternative physical theories?}

In this work we tackle those two questions. First, we show that there are alternative theories beyond quantum which feature post-quantum steering, making the phenomenon physically relevant for post-quantum information processing and motivating its exploration. Second, we find a task for which post-quantum steering is a stronger-than-quantum resource: remote state preparation. Remote state preparation (RSP) is a task similar in spirit to teleportation: the goal is to transmit quantum states from one party to another distant party using only shared entanglement and classical communication. Unlike teleportation, though, in RSP the sender has knowledge of the transmitted state, which makes RSP protocols more economical than teleportation in terms of resources (e.g., classical communication) needed to succeed at the task \cite{RSPlo,RSPdebbie}. In addition, RSP protocols do not necessarily require the ability to experimentally implement Bell (entangling) measurements, which makes them potentially {more feasible \bel{experimentally}  \cite{RSPexp}. The kind of RSP protocols that we focus on are so-called {oblivious} -- \bel{namely} those where no information about the state is leaked to the receiver, apart from the state itself, something that is relevant for certain applications such as blind quantum computation \cite{childs2005blind}. RSP is indeed an insightful task to explore from both a fundamental and applied viewpoint.} 

\bel{In order to prove  our results we define a generalised probabilistic theory (GPT) that we name {Witworld}, given its strong connection to entanglement witnesses. Witworld combines system types of three well-known GPTs (classical, quantum, and Boxworld) in a simple mathematical way, via the so-called max tensor product. Remarkably, even though Witworld cannot reproduce all the phenomenology of quantum theory, it does realise all quantum predictions for Bell and steering experiments. Hence, we can learn about the limitations of quantum advantage in one-sided and fully device-independent protocols by exploring the performance of Witworld. Despite its simplicity, Witworld displays powerful post-quantum features: not only can it realise all non-signalling correlations in Bell experiments, but it also displays post-quantum steering. \bel{ As Witworld is a fully compositional theory, it comes equipped with an intuitive diagrammatic calculus \cite{chiribella2010probabilistic,hardy2011reformulating}. This provides a convenient toolkit} for exploring other applications of post-quantum phenomena for information processing. }

The paper is organised as follows. \bel{In what remains of this section we present a brief introduction to the three main topics of this paper: generalised probabilistic theories, Bell nonlocality, and steering.} Section \ref{se:ww2} presents the definition of Witworld, assuming basic knowledge of GPTs --  the reader who is not familiar with them may consult Sec.~A of the Supplemental Material. Section \ref{se:pqps} discusses the post-quantum properties of Witworld with a focus on Bell and steering experiments, while section \ref{se:pqarsp} discusses how Witworld outperforms quantum theory at certain information processing tasks.  Section \ref{se:pqarsp} also presents the first task where post-quantum steering outperforms quantum steering as a resource. Technicalities, as well as a brief review on steering, are included in the appendices.

\noindent \textbf{Generalised probabilistic theories .--}
The framework of {generalised probabilistic theories} \cite{barrett2007gpt,hardy2001quantum} provides tools with which to explore the operational features of candidate theories in a unified fashion. Classical theory, as well as quantum theory, may be recast within the language of GPTs \cite{hardy2001quantum,barrett2007gpt}, which enables their unified and comparative study. 
The GPT framework has been proven useful not only from a foundational perspective (e.g., for developing axiomatic reconstructions of quantum theory  \cite{hardy2001quantum,
dakic2009quantum,chiribella2011informational,
hardy2011reformulating,
barnum2014higher}), but also when exploring the quantum information capabilities of post-quantum theories, such as their computational power \cite{barrett2019computational,lee2016deriving,lee2016generalised,barnum2018oracles,lee2015computation,krumm2019quantum,garner2018interferometric} or cryptographic security \cite{barrett2007gpt,barnum2011information,lami2018ultimate,sikora2020impossibility,selby2018make,sikora2020impossibility}.

\noindent \textbf{Bell nonlocality.--}
One example of a non-classical phenomenon of foundational and applied relevance is Bell nonlocality. Bell experiments are ubiquitous in the fields of quantum foundations and quantum information processing. On the one hand, Bell's Theorem \cite{Bell64} established a precise sense in which quantum theory requires a departure from a classical worldview, and violations of Bell inequalities provide a means for certifying the nonclassicality of nature. On the other hand, the correlations observed in a Bell test have become a resource for certain tasks \cite{BellRev}, and \bel{the} violation of so-called Bell inequalities \bel{by these correlations} \bel{has} become \bel{a} standard certification tool for security in cryptographic protocols \cite{QKD1,QKD2,QKD3}. In brief, a Bell scenario consists of a set of distant parties that perform space-like separated actions on their \bel{part} of a physical system, and the objects of study are the correlations they observe among their measurement outcomes. In the case of a bipartite scenario,  let $x \in \X$ and $a \in \A$ denote the classical variables that label the measurement choices and produced outcomes, respectively, corresponding to the first party (hereon, Alice), and, respectively,  $y \in \Y$ and $b \in \Bo$ those for the second party (hereon, Bob). The correlations observed in this bipartite Bell experiment are captured by the conditional probability distribution $\{p(ab|xy)\}$. It is therefore natural to ask ourselves which possible $\{p(ab|xy)\}$ may be \bel{generated}, and at what cost. 
Given the space-like separation constraints, the largest set of correlations observable in a Bell scenario corresponds to those that satisfy the No-Signalling Principle, and it is known that correlations allowed by quantum theory \bel{are} a strict subset of those correlations. Notably, a GPT colloquially referred to as Boxworld \bel{\cite{Short_2010,gross2010all}} has been defined \cite{barrett2007gpt}, which can realise all the correlations compatible with the no-signalling principle via its bipartite states and local measurements.

\noindent \textbf{Steering.--}
Steering is another non-classical phenomenon of foundational and practical relevance, which was identified back in the 1930s \cite{schrodinger36} but, unlike Bell nonlocality, only recently caught the attention of the quantum information community \cite{SteeWiseman07,SteeWiseman207}. Steering captures the idea that Alice seemingly remotely `steers' the state of a distant Bob, in a way which has no classical explanation. A main feature of a steering experiment is the asymmetric role that the parties play, which makes it particularly suitable as a resource for certain asymmetric information processing tasks \cite{SteeRevSDP,SteeRev}. In brief,  the simplest steering experiment consist of two distant parties -- Alice and Bob -- which perform local actions on their \bel{part} of a physical system. Unlike in a Bell experiment, though, the parties here perform different types of transformations in their labs: Alice performs a measurement, labelled by $x \in \X$, on her system, and obtains a classical outcome $a \in \A$, whereas Bob performs full tomography of the quantum system and so describes it via a density matrix $\rho^\mathrm{B}_{a|x}$ that is effectively prepared in his lab after Alice's actions. In this way, the object of study in these experiments are the ensembles of ensembles (a.k.a.~\textit{assemblages} \cite{pusey2013negativity}) given by $\{\{\sigma_{a|x}\}_{a\in \A}\}_{x \in \X}$, where $\tr{(\sigma_{a|x})}{} = p(a|x)$ and $\sigma_{a|x} = p(a|x) \, \rho^\mathrm{B}_{a|x}$. While nonclassical properties of steering within quantum theory have been considerably explored, not much is known about steering beyond quantum theory \cite{pqs,pqsc,pqso,bpqs}. One main obstacle for this is the complexity of capturing fundamentally what could be post-quantum about an assemblage of quantum states. An operational recast of the steering phenomenon has been recently put forward \cite{pqs,bpqs}, which facilitates a way to articulate the concept of post-quantum assemblages. The study of post-quantum steering has only just begun, and, unlike for Bell nonlocality, \bel{important} fundamental and practical questions are yet to be answered. \bel{One such question is:} does there exist a GPT that realises all these post-quantum assemblages?

\section{Results}
\subsection{Witworld}\label{se:ww2}

In this section we provide a simple and concise introduction to Witworld, which should enable the understanding of the subsequent results. 
We moreover provide a detailed formal definition in Sec.~B of the Supplemental Material.

In Witworld, there are three types of basic  systems, which can be composed to construct more general system types.
The basic systems are classical systems,  quantum systems,
and Boxworld systems \cite{barrett2007gpt}.
(One could easily modify the theory to allow for further system types. However, it is not clear that this will provide any further benefit to the study of steering).
Systems that are of one of those three types are called \emph{atomic}. Witworld features a composition rule (which we define shortly) by which these simple system types can form new ones that are neither classical, quantum, nor Boxworld. We denote the \bel{atomic} types diagrammatically with different types of wires by:
\begin{equation}\label{eq:thewireswit}
	\begin{tikzpicture}
	\begin{pgfonlayer}{nodelayer}
		\node [style=none] (0) at (0, 0.75) {};
		\node [style=none] (1) at (0, -0.75) {};
		\node [style={right label}] (2) at (0, -0.5) {$\C{v}$};
	\end{pgfonlayer}
	\begin{pgfonlayer}{edgelayer}
		\draw [c] (0.center) to (1.center);
	\end{pgfonlayer}
\end{tikzpicture},\quad \begin{tikzpicture}
	\begin{pgfonlayer}{nodelayer}
		\node [style=none] (0) at (0, 0.75) {};
		\node [style=none] (1) at (0, -0.75) {};
		\node [style={right label}] (2) at (0, -0.5) {$\Q{d}$};
	\end{pgfonlayer}
	\begin{pgfonlayer}{edgelayer}
		\draw [q] (0.center) to (1.center);
	\end{pgfonlayer}
\end{tikzpicture}, \quad \begin{tikzpicture}
	\begin{pgfonlayer}{nodelayer}
		\node [style=none] (0) at (0, 0.75) {};
		\node [style=none] (1) at (0, -0.75) {};
		\node [style={right label}] (2) at (0, -0.5) {$\B{n,k}$};
	\end{pgfonlayer}
	\begin{pgfonlayer}{edgelayer}
		\draw [b=12] (0.center) to (1.center);
	\end{pgfonlayer}
\end{tikzpicture}
,
\end{equation}
where $ \mathcal{C}_{v} $ denotes a classical system of dimension $v$, $ \mathcal{Q}_{d}$ denotes a quantum system of dimension $d$, and $ \mathcal{B}_{n,k}$ denotes a Boxworld system of dimension $(n,k)$ (These two integers relate to the input/output cardinality of the correlations in Bell scenarios that the system is tailored at \cite{barrett2007gpt}).
Moreover, when we need to use a generic system type (which can be either simple or composite), we denote this by 
\begin{equation}\label{}
	\begin{tikzpicture}
	\begin{pgfonlayer}{nodelayer}
		\node [style=none] (0) at (0, 0.75) {};
		\node [style=none] (1) at (0, -0.75) {};
		\node [style={right label}] (2) at (0, -0.5) {$S$};
	\end{pgfonlayer}
	\begin{pgfonlayer}{edgelayer}
		\draw [g] (0.center) to (1.center);
	\end{pgfonlayer}
\end{tikzpicture}.
\end{equation}

We can also explicitly denote the components of a composite system by using parallel wires, for example:
\beq
	\begin{tikzpicture}
	\begin{pgfonlayer}{nodelayer}
		\node [style=none] (0) at (-1, 0.75) {};
		\node [style=none] (1) at (-1, -0.75) {};
		\node [style=none] (2) at (0.25, -0.75) {};
		\node [style=none] (3) at (0.25, 0.75) {};
		\node [style=none] (4) at (1.5, -0.75) {};
		\node [style=none] (5) at (1.5, 0.75) {};
		\node [style=right label] (6) at (-1, -0.25) {$\Q{2}$};
		\node [style=right label] (7) at (0.25, -0.25) {$\B{2,2}$};
		\node [style=right label] (8) at (1.5, -0.25) {$\Q{3}$};
		\node [style=none] (9) at (2.75, 0.75) {};
		\node [style=none] (10) at (2.75, -0.75) {};
		\node [style=right label] (11) at (2.75, -0.25) {$\C{5}$};
	\end{pgfonlayer}
	\begin{pgfonlayer}{edgelayer}
		\draw [q] (0.center) to (1.center);
		\draw [b=12] (2.center) to (3.center);
		\draw [q] (4.center) to (5.center);
		\draw [c] (10.center) to (9.center);
	\end{pgfonlayer}
\end{tikzpicture}
\eeq
corresponds to a system composed of a qubit, a $(2,2)$ Boxworld system, a qutrit, and a classical system of dimension $5$.

The state space of a given system type is represented by a convex set living inside some real vector space.
For instance, an atomic quantum system $ \mathcal{Q}_{2}$ has states living inside a Bloch sphere in a 4-dimensional real vector space, an atomic classical state $ \mathcal{C}_{3}$ has states living inside a triangle in a 3-dimensional vector space, and an atomic Boxworld system $ \mathcal{B}_{2,2}$ has states inside a square in a 3-dimensional vector space. These examples are depicted in Figure~\ref{geometry}. Diagrammatically we denote a state $\sigma$ of a system $S$ by
\beq
\begin{tikzpicture}
	\begin{pgfonlayer}{nodelayer}
		\node [style=none] (0) at (-1, 0.75) {};
		\node [style=point] (1) at (-1, -0.5) {$\sigma$};
		\node [style=right label] (6) at (-1, 0.25) {$S$};
	\end{pgfonlayer}
	\begin{pgfonlayer}{edgelayer}
		\draw [g] (0.center) to (1);
	\end{pgfonlayer}
\end{tikzpicture}
\ .
\eeq

Regarding the effects, Witworld includes all the elements of the dual of the vector space which evaluate to valid probabilities for every state.
That is, Witworld satisfies the no-restriction hypothesis (NRH) \cite{chiribella2010probabilistic}.
For example, for a system of the type $ \mathcal{Q}_{2}$, the effects correspond to POVM elements and are represented as
a particular region of ${(\ReS^4)}^*$.
This region can be defined as the intersection of the cone of linear functionals which evaluate to positive reals on the set of state vectors with the set of linear functionals which evaluate to less than $1$ for all state vectors.
For $ \mathcal{C}_{3}$, the effects live in a cube, and for $ \mathcal{B}_{2,2}$, the effects live in an octahedron.
 A pictorial representation of these can be seen in Figure \ref{geometry}. An effect $e$ for a system $S$ is diagrammatically denoted by
\beq
\begin{tikzpicture}
	\begin{pgfonlayer}{nodelayer}
		\node [style=none] (0) at (-1, -0.75) {};
		\node [style=copoint] (1) at (-1, 0.5) {$e$};
		\node [style=right label] (6) at (-1, -0.25) {$S$};
	\end{pgfonlayer}
	\begin{pgfonlayer}{edgelayer}
		\draw [g] (0.center) to (1);
	\end{pgfonlayer}
\end{tikzpicture}
\ .
\eeq
Since effects belong to the dual vector space, when we compose them with a state we obtain a real number, which, by assumption, must give a valid probability. That is, for all states $\sigma^S$ and effects $e^S$ we have $e^S(\sigma^S) \in [0,1]$. Diagrammatically this is written as
\beq
\begin{tikzpicture}
	\begin{pgfonlayer}{nodelayer}
		\node [style=copoint] (0) at (-1, 0.75) {$e$};
		\node [style=point] (1) at (-1, -0.75) {$\sigma$};
		\node [style=right label] (6) at (-1, 0) {$S$};
	\end{pgfonlayer}
	\begin{pgfonlayer}{edgelayer}
		\draw [g] (0) to (1);
	\end{pgfonlayer}
\end{tikzpicture}
 \ \ \in \ \ [0,1]\ .
\eeq


As mentioned previously, we define Witworld to satisfy the NRH.
This, however, is not the only simplifying assumption that we make in this construction.
Additionally, we define the composition of systems to be via the so-called max tensor product \cite{barnum2011information} and, hence, that the theory is locally tomographic \cite{hardy2001quantum}. Moreover, we demand that Witworld
satisfies the generalised no-restriction hypothesis (GNRH).
Intuitively, the GNRH is the NRH together with the requirement that every transformation that takes every element of a valid state space to an element of another valid state space is a valid transformation in the theory, that is, every completely positive transformation is considered valid.

The max tensor product \bel{(see Def.~A.8 in the Supplemental Material)}  is a composition  rule that assigns as valid states of a composite system $A\cdot B$ any vector in the product vector space ($ V^{A\cdot B} = V^{A} \otimes V^{B}$) that is consistent with the separable effects. Formally, $ \rho^{AB}$ is a valid state of the composite system AB if and only if, for every effect $ e^{A}$ of A and $ {e'}^{B}$ of B, $ e^{A} \otimes {e'}^{B} ( \rho^{AB} ) \in [0,1]$. 
Diagrammatically we express this condition as:
\beq
\forall e, e' \ \ \begin{tikzpicture}
	\begin{pgfonlayer}{nodelayer}
		\node [style=none] (0) at (-1, -0.75) {};
		\node [style=copoint] (1) at (-1, 0.5) {$e$};
		\node [style=none] (2) at (1, -0.75) {};
		\node [style=copoint] (3) at (1, 0.5) {$e'$};
		\node [style=none] (4) at (-1.75, -0.75) {};
		\node [style=none] (5) at (0, -2) {};
		\node [style=none] (6) at (1.75, -0.75) {};
		\node [style=none] (7) at (0, -1.25) {$\rho$};
		\node [style=none] (8) at (-1, -0.25) {};
		\node [style=right label] (9) at (-1, -0.25) {$A$};
		\node [style=right label] (10) at (1, -0.25) {$B$};
	\end{pgfonlayer}
	\begin{pgfonlayer}{edgelayer}
		\draw [g] (0.center) to (1);
		\draw [g] (2.center) to (3);
		\draw (4.center) to (5.center);
		\draw (6.center) to (5.center);
		\draw (4.center) to (6.center);
	\end{pgfonlayer}
\end{tikzpicture}
 \ \in [0,1]\,.
\eeq
\bel{From an intuitive point of view, the max tensor product gives rise to a GPT that somehow maximises the set of states that the system can be prepared in, whilst strongly restricting the set of measurements that one may perform on it. As a matter of fact, even though Witworld might appear to be a more general theory than quantum theory, these two are actually incomparable: quantum theory allows for measurements that Witworld systems cannot be acted upon with (with the latter having a more restricted set of measurements on collections of quantum atomic systems types), whilst Witworld allows for more states on which the composition of quantum atomic system types can be prepared (Witworld allows for two qubits to be prepared on a state mathematically corresponding to a quantum entanglement witness, whereas in quantum theory this is not an allowed state of a two-qubit system).}

The fact that we have defined composition via the max tensor product and are demanding that the theory satisfies the GNRH, means that when we define the atomic states, we define the whole theory\bel{, since} from the atomic states and max tensor product every possible state is defined, and from the states and GNRH the effects and transformations are also determined.

Finally, as mentioned above, the max tensor product is tomographically local, that is to say that 
its states can be uniquely determined by the information obtained from performing local measurements on its parts. Using the example above, this means that $ \rho^{AB}$ is completely determined by a set of values $e_{i}^{A}\otimes e_{j}^{B} ( \rho^{AB})$.

At this point it is worth mentioning some further 
consequences of our definitions.
The first one is that the use of the max tensor product to compose systems implies that every effect in Witworld is separable (see Lem.~A.14).  
Therefore, an important \bel{feature of} Witworld is that it does not contain entangling effects.

A second important fact about Witworld is that the combination of two atomic quantum systems yields systems whose states spaces are larger than \bel{the joint state space obtained from the standard quantum composition rule} \bel{(see Thms.~B.5 and B.6 in the Supplemental Material)}. For example, in the bipartite case, we have that the composite of $ \mathcal{Q}_{n}$ and $ \mathcal{Q}_{m}$, denoted by $ \mathcal{Q}_{n} \cdot \mathcal{Q}_{m}$, has as its state space the set of entanglement witnesses (including density matrices \cite{toniOform})
for the quantum bipartite states, which strictly contains the set of bipartite quantum states $\Q{nm}$. Therefore, whilst Witworld does contain arbitrary quantum systems, quantum theory is not a compositional subtheory within Witworld.
Note that if one were to allow quantum systems in Witworld (where these quantum systems do not have the additional dynamics of Witworld quantum systems) to be composable both according to the standard quantum rule as well as to the Witworld composition rule, one could construct a protocol giving negative probabilities~\cite{barnum2005influence}. As such one cannot extend Witworld in such a manner as to contain quantum theory as a compositional subtheory.

A third important feature is that Boxworld-type (resp.~classical-type) systems in Witworld compose exactly as they do in Boxworld (resp.~classical theory). Therefore, both Boxworld and classical theory are indeed full  subtheories of Witworld. 
Here, by \textit{full subtheory} we mean that you can recover Boxworld or classical theory from Witworld by suitably restricting it to a particular collection of system types. This restriction recovers all and only the states, effects, transformations, and the composition rules of Boxworld or classical theory.

Finally, another important feature of Witworld is that because of the combination of GNRH and max tensor product, there is no difference between positive and completely positive maps (see Lem.~A.15). Of course, for systems that are not quantum, a more general notion (relative to that of positive operators in quantum theory) of positivity must be used in order to make that statement (see Def.~A.6 in the Supplemental Material). Now, in the case of atomic quantum systems, this means that the valid Witworld transformations correspond to positive, but not necessarily completely positive, quantum transformations.
 Hence,
this means that in Witworld there are more transformations available to local agents (i.e., to Alice and Bob) than would be available in quantum theory.

\subsection{Post-quantum phenomena: Bell non-classicality and steering}\label{se:pqps}

In this section we explore the non-classical features that Witworld displays, starting with the case of Bell scenarios. One can readily see that Witworld can realise all non-signalling correlations in arbitrary Bell scenarios (see Fig.~\ref{fig:NSpol}), since Boxworld is a full subtheory of Witworld. Therefore, one can leverage the Boxworld realisations of any non-signalling correlation, and translate them straightforwardly to a realisation within Witworld. For example, take the case of Popescu-Rohrlich (PR) correlations, which read $p_{PR}(ab|xy) = \frac{1}{2} \delta_{a\oplus b = xy}$ with $a,b,x,y \in \{0,1\}$ \bel{and $\oplus$ denoting modulo-2 addition;} these correlations can be realised in Witworld as follows: 
\begin{equation}\label{eq:PRbox}
p_{\mathrm{PR}}(ab|xy) = \begin{tikzpicture}
	\begin{pgfonlayer}{nodelayer}
		\node [style=none] (0) at (-1.25, -1.5) {};
		\node [style=none] (1) at (1.25, -1.5) {};
		\node [style=none] (2) at (0, -2.75) {};
		\node [style=none] (3) at (0, -2) {$s_{\text{PR}}$};
		\node [style=none] (4) at (-0.5, -1.5) {};
		\node [style=none] (5) at (0.5, -1.5) {};
		\node [style=none] (14) at (0.5, -1.5) {};
		\node [style=none] (15) at (1.5, 0.25) {};
		\node [style=none] (16) at (3, 0.25) {};
		\node [style=none] (17) at (1, 0.25) {};
		\node [style=none] (18) at (1.5, 1.25) {};
		\node [style=none] (19) at (3, 1.25) {};
		\node [style=none] (20) at (2.25, 0.75) {$M_{\mathrm{PR}}^\prime$};
		\node [style=none] (21) at (2.25, 0.25) {};
		\node [style=none] (22) at (2.25, 1.25) {};
		\node [style=copoint] (23) at (2.5, 2.25) {$b$};
		\node [style=none] (24) at (0.5, -1.5) {};
		\node [style=point] (25) at (3.25, -1) {$y$};
		\node [style=none] (26) at (-0.5, -1.5) {};
		\node [style=none] (27) at (-1.5, 0.25) {};
		\node [style=none] (28) at (-3, 0.25) {};
		\node [style=none] (29) at (-1, 0.25) {};
		\node [style=none] (30) at (-1.5, 1.25) {};
		\node [style=none] (31) at (-3, 1.25) {};
		\node [style=none] (32) at (-2.25, 0.75) {$M_{\mathrm{PR}}$};
		\node [style=none] (33) at (-2.25, 0.25) {};
		\node [style=none] (34) at (-2.25, 1.25) {};
		\node [style=copoint] (35) at (-2.5, 2.25) {$a$};
		\node [style=none] (36) at (-0.5, -1.5) {};
		\node [style=point] (37) at (-3.25, -1) {$x$};
		\node [style=right label] (38) at (-3, -0.5) {$\mathds{X}$};
		\node [style=right label] (39) at (3.25, -0.5) {$\mathds{Y}$};
		\node [style=right label] (40) at (2.5, 1.5) {$\mathds{B}$};
		\node [style=right label] (41) at (-2.25, 1.5) {$\mathds{A}$};
	\end{pgfonlayer}
	\begin{pgfonlayer}{edgelayer}
		\draw (2.center) to (1.center);
		\draw (0.center) to (2.center);
		\draw (0.center) to (4.center);
		\draw (4.center) to (5.center);
		\draw (5.center) to (1.center);
		\draw [b=12, in=90, out=-90] (15.center) to (14.center);
		\draw (19.center) to (18.center);
		\draw (18.center) to (17.center);
		\draw (17.center) to (16.center);
		\draw (16.center) to (19.center);
		\draw [c, in=90, out=-90] (23) to (22.center);
		\draw [c, in=90, out=-90] (21.center) to (25);
		\draw [b=12, in=90, out=-90] (27.center) to (26.center);
		\draw (31.center) to (30.center);
		\draw (30.center) to (29.center);
		\draw (29.center) to (28.center);
		\draw (28.center) to (31.center);
		\draw [c, in=90, out=-90] (35) to (34.center);
		\draw [c, in=90, out=-90] (33.center) to (37);
	\end{pgfonlayer}
\end{tikzpicture}
,
\end{equation}
with the state $s_\mathrm{PR}$  and controlled measurements $M_{\mathrm{PR}}$  and $M_{\mathrm{PR}}^{\prime}$  as introduced in Ref.~\cite{barrett2007gpt}, whose explicit form we present in Eqs.~C8, C3, and C6 in the Supplemental Material. Note that for simplicity of notation we will often label the classical systems by an outcome or setting variable such as $\mathds{X}$, in this case the relevant GPT system is $\C{|\mathds{X}|}$.

The situation slightly changes when we instead focus on the non-classical phenomenon of steering (see Sec.~D of the Supplemental Material for a comprehensive introduction). In brief, a traditional bipartite steering experiment consists of two distant parties, Alice and Bob, who share a physical system, perform space-like separated actions, and, unlike in Bell experiments, play asymmetric roles in the experiment. On the one hand, Alice (sometimes referred to as the black-box party in the steering literature) chooses a measurement labelled by $x \in \X$ to perform on her share of the system, and obtains a classical outcome $a \in \A$ with probability $p(a|x)$. Bob, on the other hand, merely characterises the quantum state $\rho^B_{a|x}$ to which his subsystem is steered. The information collected in this experiment (Alice's probabilities and Bob's conditional states) is expressed concisely as an assemblage \cite{pusey2013negativity}:
\begin{align}
\As_{\A|\X}= \left\{ \sigma^B_{a|x} \right\}_{a\in\A,\,x\in\X} \ \text{where} \quad \sigma^B_{a|x} := p(a|x) \, \rho^B_{a|x}.
\end{align}
Note that in Ref.~\cite{pqsc} it is shown how assemblages can be equivalently represented by so-called `causal' channels. With a slight abuse of notation we therefore  diagrammatically represent the assemblage by the causal classical-quantum channel:
\beq
\begin{tikzpicture}
	\begin{pgfonlayer}{nodelayer}
		\node [style=none] (0) at (-1, 0.5) {};
		\node [style=none] (1) at (1.5, 0.5) {};
		\node [style=none] (2) at (0.75, -0.5) {};
		\node [style=none] (3) at (-1, -0.5) {};
		\node [style=none] (4) at (0, 0) {$\Sigma$};
		\node [style=none] (5) at (-0.5, 0.5) {};
		\node [style=none] (6) at (-0.5, 1.25) {};
		\node [style=none] (7) at (0.75, 0.5) {};
		\node [style=none] (8) at (0.75, 1.25) {};
		\node [style=none] (9) at (-0.5, -0.5) {};
		\node [style=none] (10) at (-0.5, -1.25) {};
		\node [style=right label] (11) at (-0.5, -1.25) {$\mathds{X}$};
		\node [style=right label] (12) at (-0.5, 1) {$\mathds{A}$};
	\end{pgfonlayer}
	\begin{pgfonlayer}{edgelayer}
		\draw (0.center) to (3.center);
		\draw (3.center) to (2.center);
		\draw (2.center) to (1.center);
		\draw (1.center) to (0.center);
		\draw [c] (9.center) to (10.center);
		\draw [c] (5.center) to (6.center);
		\draw [q] (7.center) to (8.center);
	\end{pgfonlayer}
\end{tikzpicture}
\ .
\eeq
To see that this is indeed a good representation, note that we can extract the elements of the assemblage, i.e., the subnormalised steered states as:
\begin{equation}
\begin{tikzpicture}
	\begin{pgfonlayer}{nodelayer}
		\node [style=none] (38) at (4.5, 0) {};
		\node [style=none] (39) at (6.5, 0) {};
		\node [style=none] (40) at (5.5, -1.25) {};
		\node [style=none] (41) at (5.5, 0) {};
		\node [style=none] (42) at (5.5, 1) {};
		\node [style=none] (43) at (5.5, -0.5) {$\sigma_{a|x}$};
	\end{pgfonlayer}
	\begin{pgfonlayer}{edgelayer}
		\draw (38.center) to (41.center);
		\draw (41.center) to (39.center);
		\draw (39.center) to (40.center);
		\draw (40.center) to (38.center);
		\draw [q] (41.center) to (42.center);
	\end{pgfonlayer}
\end{tikzpicture}
 \ \ =	\ \ \begin{tikzpicture}
	\begin{pgfonlayer}{nodelayer}
		\node [style=none] (0) at (-1, 0.5) {};
		\node [style=none] (1) at (1.5, 0.5) {};
		\node [style=none] (2) at (0.75, -0.5) {};
		\node [style=none] (3) at (-1, -0.5) {};
		\node [style=none] (4) at (0, 0) {$\Sigma$};
		\node [style=none] (5) at (-0.5, 0.5) {};
		\node [style=copoint] (6) at (-0.5, 1.5) {$a$};
		\node [style=none] (7) at (0.75, 0.5) {};
		\node [style=none] (8) at (0.75, 2) {};
		\node [style=none] (9) at (-0.5, -0.5) {};
		\node [style=point] (10) at (-0.5, -1.75) {$x$};
		\node [style=right label] (11) at (-0.5, -1) {$\mathds{X}$};
		\node [style=right label] (12) at (-0.5, 0.75) {$\mathds{A}$};
	\end{pgfonlayer}
	\begin{pgfonlayer}{edgelayer}
		\draw (0.center) to (3.center);
		\draw (3.center) to (2.center);
		\draw (2.center) to (1.center);
		\draw (1.center) to (0.center);
		\draw [c] (9.center) to (10);
		\draw [c] (5.center) to (6);
		\draw [q] (7.center) to (8.center);
	\end{pgfonlayer}
\end{tikzpicture}
,
\end{equation}
and then the probabilities as:
\begin{equation}
p(a|x)\ \ = \ \ \begin{tikzpicture}
	\begin{pgfonlayer}{nodelayer}
		\node [style=none] (38) at (-1, -0.25) {};
		\node [style=none] (39) at (1, -0.25) {};
		\node [style=none] (40) at (0, -1.5) {};
		\node [style=none] (41) at (0, -0.25) {};
		\node [style=none] (42) at (0, 0.5) {};
		\node [style=none] (43) at (0, -0.75) {$\sigma_{a|x}$};
		\node [style=upground] (44) at (0, 0.75) {};
	\end{pgfonlayer}
	\begin{pgfonlayer}{edgelayer}
		\draw (38.center) to (41.center);
		\draw (41.center) to (39.center);
		\draw (39.center) to (40.center);
		\draw (40.center) to (38.center);
		\draw [q] (41.center) to (42.center);
	\end{pgfonlayer}
\end{tikzpicture}
 \ \ =	\ \ \begin{tikzpicture}
	\begin{pgfonlayer}{nodelayer}
		\node [style=none] (0) at (-1, 0.5) {};
		\node [style=none] (1) at (1.5, 0.5) {};
		\node [style=none] (2) at (0.75, -0.5) {};
		\node [style=none] (3) at (-1, -0.5) {};
		\node [style=none] (4) at (0, 0) {$\Sigma$};
		\node [style=none] (5) at (-0.5, 0.5) {};
		\node [style=copoint] (6) at (-0.5, 1.5) {$a$};
		\node [style=none] (7) at (0.75, 0.5) {};
		\node [style=none] (8) at (0.75, 1.25) {};
		\node [style=none] (9) at (-0.5, -0.5) {};
		\node [style=point] (10) at (-0.5, -1.75) {$x$};
		\node [style=right label] (11) at (-0.5, -1) {$\mathds{X}$};
		\node [style=right label] (12) at (-0.5, 0.75) {$\mathds{A}$};
		\node [style=upground] (13) at (0.75, 1.5) {};
	\end{pgfonlayer}
	\begin{pgfonlayer}{edgelayer}
		\draw (0.center) to (3.center);
		\draw (3.center) to (2.center);
		\draw (2.center) to (1.center);
		\draw (1.center) to (0.center);
		\draw [c] (9.center) to (10);
		\draw [c] (5.center) to (6);
		\draw [q] (7.center) to (8.center);
	\end{pgfonlayer}
\end{tikzpicture}
\,,
\end{equation}
\bel{where $\begin{tikzpicture}
	\begin{pgfonlayer}{nodelayer}
		\node [style=none] (0) at (0, -0.75) {};
		\node [style=none] (1) at (0, 0.25) {};
		\node [style=upground] (3) at (0, 0.5) {};
	\end{pgfonlayer}
	\begin{pgfonlayer}{edgelayer}
		\draw [double, in=90, out=-90] (1.center) to (0.center);
	\end{pgfonlayer}
\end{tikzpicture}
$ denotes the so-called unit effect (see Eq.~A11 and its precedding paragraph in the Supplemental Material), which in quantum theory corresponds to the \bel{partial trace of the relevant subsystem}.}
Analogously, we can view non-signalling correlations as particular channels, in this case channels with classical inputs and outputs which correspond to stochastic maps. Using this we can rewrite Eq.~\eqref{eq:PRbox} as
\beq\label{eq:PRBoxChannel}
\begin{tikzpicture}
	\begin{pgfonlayer}{nodelayer}
		\node [style=none] (16) at (1.25, -0.5) {};
		\node [style=none] (21) at (0.75, -0.5) {};
		\node [style=none] (22) at (0.75, 0.5) {};
		\node [style=none] (23) at (0.75, 1.5) {};
		\node [style=none] (25) at (0.75, -1.5) {};
		\node [style=none] (28) at (-1.25, -0.5) {};
		\node [style=none] (30) at (1.25, 0.5) {};
		\node [style=none] (31) at (-1.25, 0.5) {};
		\node [style=none] (33) at (-0.75, -0.5) {};
		\node [style=none] (34) at (-0.75, 0.5) {};
		\node [style=none] (35) at (-0.75, 1.5) {};
		\node [style=none] (37) at (-0.75, -1.5) {};
		\node [style=right label] (44) at (-0.75, -1.25) {$\X$};
		\node [style=right label] (45) at (0.75, -1.25) {$\Y$};
		\node [style=right label] (46) at (-0.75, 1.25) {$\A$};
		\node [style=right label] (47) at (0.75, 1.25) {$\mathds{B}$};
		\node [style=none] (48) at (0, 0) {$\mathrm{PR}$};
	\end{pgfonlayer}
	\begin{pgfonlayer}{edgelayer}
		\draw [c, in=90, out=-90] (23.center) to (22.center);
		\draw [c, in=90, out=-90] (21.center) to (25.center);
		\draw (31.center) to (30.center);
		\draw (28.center) to (31.center);
		\draw [c, in=90, out=-90] (35.center) to (34.center);
		\draw [c, in=90, out=-90] (33.center) to (37.center);
		\draw (28.center) to (16.center);
		\draw (16.center) to (30.center);
	\end{pgfonlayer}
\end{tikzpicture}
 \ \ = \ \ \begin{tikzpicture}
	\begin{pgfonlayer}{nodelayer}
		\node [style=none] (0) at (-1.25, -0.75) {};
		\node [style=none] (1) at (1.25, -0.75) {};
		\node [style=none] (2) at (0, -2) {};
		\node [style=none] (3) at (0, -1.25) {$s_{\mathrm{PR}}$};
		\node [style=none] (4) at (-0.5, -0.75) {};
		\node [style=none] (5) at (0.5, -0.75) {};
		\node [style=none] (14) at (0.5, -0.75) {};
		\node [style=none] (15) at (1.5, 0.25) {};
		\node [style=none] (16) at (3, 0.25) {};
		\node [style=none] (17) at (1, 0.25) {};
		\node [style=none] (18) at (1.5, 1.25) {};
		\node [style=none] (19) at (3, 1.25) {};
		\node [style=none] (20) at (2.25, 0.75) {$M_{\mathrm{PR}}^\prime$};
		\node [style=none] (21) at (2.25, 0.25) {};
		\node [style=none] (22) at (2.25, 1.25) {};
		\node [style=none] (23) at (2.25, 2.25) {};
		\node [style=none] (24) at (0.5, -0.75) {};
		\node [style=none] (25) at (2.25, -2.25) {};
		\node [style=none] (26) at (-0.5, -0.75) {};
		\node [style=none] (27) at (-1.5, 0.25) {};
		\node [style=none] (28) at (-3, 0.25) {};
		\node [style=none] (29) at (-1, 0.25) {};
		\node [style=none] (30) at (-1.5, 1.25) {};
		\node [style=none] (31) at (-3, 1.25) {};
		\node [style=none] (32) at (-2.25, 0.75) {$M_{\mathrm{PR}}$};
		\node [style=none] (33) at (-2.25, 0.25) {};
		\node [style=none] (34) at (-2.25, 1.25) {};
		\node [style=none] (35) at (-2.25, 2.25) {};
		\node [style=none] (36) at (-0.5, -0.75) {};
		\node [style=none] (37) at (-2.25, -2.25) {};
		\node [style=right label] (44) at (-2.25, -2) {$\X$};
		\node [style=right label] (45) at (2.25, -2) {$\Y$};
		\node [style=right label] (46) at (-2.25, 2) {$\A$};
		\node [style=right label] (47) at (2.25, 2) {$\mathds{B}$};
	\end{pgfonlayer}
	\begin{pgfonlayer}{edgelayer}
		\draw (2.center) to (1.center);
		\draw (0.center) to (2.center);
		\draw (0.center) to (4.center);
		\draw (4.center) to (5.center);
		\draw (5.center) to (1.center);
		\draw [b=12, in=90, out=-90] (15.center) to (14.center);
		\draw (19.center) to (18.center);
		\draw (18.center) to (17.center);
		\draw (17.center) to (16.center);
		\draw (16.center) to (19.center);
		\draw [c, in=90, out=-90] (23.center) to (22.center);
		\draw [c, in=90, out=-90] (21.center) to (25.center);
		\draw [b=12, in=90, out=-90] (27.center) to (26.center);
		\draw (31.center) to (30.center);
		\draw (30.center) to (29.center);
		\draw (29.center) to (28.center);
		\draw (28.center) to (31.center);
		\draw [c, in=90, out=-90] (35.center) to (34.center);
		\draw [c, in=90, out=-90] (33.center) to (37.center);
	\end{pgfonlayer}
\end{tikzpicture}
.
\eeq

Beyond the traditional scenario, one may have steering experiments with more black-box parties also in a space-like separated configuration \cite{cavalcanti2015detection,pqs}, or even situations where Bob may influence the state preparation of his system by choosing a classical variable $y$ (Bob-with-input scenarios) \cite{bpqs}.

In a similar fashion to Bell non-classicality, one can define what ``classical'' (a.k.a.~LHS), quantum, and non-signalling assemblages are \cite{pqs,bpqs}. Notice that the differences in all these kinds of steering are  not related to the type of system prepared in Bob's lab, but rather to the types of shared resources that are used to prepare those quantum systems in Bob's lab. From the point of view of Witworld, then, an assemblage in a steering experiment is produced by the parties performing local operations in a shared arbitrary composite multipartite system, which may include classical, quantum, and Boxworld systems. One fascinating property of Witworld is that it not only features all LHS and quantum assemblages (see Defs.~D.5 and D.6, respectively, in the Supplemental Material), but may also realise post-quantum assemblages. That is, Witworld features post-quantum steering. In this section we present a few key  examples of this. Whether Witworld can realise \textit{all} non-signalling assemblages is still an open question (see Fig.~\ref{fig:SteeSet}).

The first example we present is in a tripartite steering scenario, \bel{since in traditional bipartite steering scenarios post-quantum steering is forbidden by the Gisin \cite{gisin1989stochastic} and Hughston, Josza and Wootters \cite{hughston1993complete} theorems. In a tripartite scenario, it is enough to consider the simplest setup} with two black-box parties choosing among two dichotomic measurements each, so $\X=\A=\{0,1\}$, and where Bob's subsystem is a qubit. The particular assemblage \bel{we present} is the PR-box assemblage, defined by 
\begin{align}
&\As^{\mathrm{PR}}_{\A\A|\X\X} = \left\{ \sigma^{*B}_{a_1a_2|x_1x_2}\right\}_{a_j \in \A, x_j \in \X, j \in \{1,2\}}\,, \\
& \text{with} \quad \sigma^{*B}_{a_1a_2|x_1x_2} = p_{\mathrm{PR}}(a_1a_2|x_1x_2) \, \frac{\id}{2} \,.
\end{align}
This assemblage cannot be realised by the three parties sharing quantum resources \cite{pqs}, i.e., it is post-quantum. 
$\As^{\mathrm{PR}}_{\A\A|\X\X} $ can however be realised within Witworld when the parties share the following mutipartite system: a bipartite Boxworld system of dimension $(2,2)$ on a PR state shared by the black-box parties, composed in parallel with a quantum state $\rho^{*B}=\frac{\id}{2}$ for Bob. Leveraging the realisation of PR-box correlations as in Eq.~\eqref{eq:PRbox}, the  assemblage $\As^{\mathrm{PR}}_{\A\A|\X\X} $ can be realised by: 
\begin{equation}\label{pqprcor}
\begin{tikzpicture}
	\begin{pgfonlayer}{nodelayer}
		\node [style=none] (38) at (-2, 0) {};
		\node [style=none] (39) at (2, 0) {};
		\node [style=none] (41) at (1, 0) {};
		\node [style=none] (42) at (1, 1) {};
		\node [style=none] (43) at (0, -0.5) {$\Sigma^{\mathrm{PR}}$};
		\node [style=none] (44) at (-2, -1) {};
		\node [style=none] (45) at (1, -1) {};
		\node [style=none] (46) at (-1.5, 0) {};
		\node [style=none] (47) at (-1.5, 1) {};
		\node [style=none] (48) at (-0.5, 0) {};
		\node [style=none] (49) at (-0.5, 1) {};
		\node [style=none] (50) at (-1.5, -2) {};
		\node [style=none] (51) at (-1.5, -1) {};
		\node [style=none] (52) at (-0.5, -2) {};
		\node [style=none] (53) at (-0.5, -1) {};
		\node [style=right label] (54) at (-1.5, -1.75) {$\X$};
		\node [style=right label] (55) at (-0.5, -1.75) {$\X$};
		\node [style=right label] (56) at (-1.5, 0.75) {$\A$};
		\node [style=right label] (57) at (-0.5, 0.75) {$\A$};
	\end{pgfonlayer}
	\begin{pgfonlayer}{edgelayer}
		\draw (38.center) to (41.center);
		\draw (41.center) to (39.center);
		\draw [q] (41.center) to (42.center);
		\draw (38.center) to (44.center);
		\draw (44.center) to (45.center);
		\draw (45.center) to (39.center);
		\draw [c] (46.center) to (47.center);
		\draw [c] (48.center) to (49.center);
		\draw [c] (50.center) to (51.center);
		\draw [c] (52.center) to (53.center);
	\end{pgfonlayer}
\end{tikzpicture}
 \ \  = \ \ \begin{tikzpicture}
	\begin{pgfonlayer}{nodelayer}
		\node [style=none] (0) at (-1.25, -0.75) {};
		\node [style=none] (1) at (1.25, -0.75) {};
		\node [style=none] (2) at (0, -2) {};
		\node [style=none] (3) at (0, -1.25) {$s_{\mathrm{PR}}$};
		\node [style=none] (4) at (-0.5, -0.75) {};
		\node [style=none] (5) at (0.5, -0.75) {};
		\node [style=none] (14) at (0.5, -0.75) {};
		\node [style=none] (15) at (1.5, 0.25) {};
		\node [style=none] (16) at (3, 0.25) {};
		\node [style=none] (17) at (1, 0.25) {};
		\node [style=none] (18) at (1.5, 1.25) {};
		\node [style=none] (19) at (3, 1.25) {};
		\node [style=none] (20) at (2.25, 0.75) {$M_{\mathrm{PR}}^\prime$};
		\node [style=none] (21) at (2.25, 0.25) {};
		\node [style=none] (22) at (2.25, 1.25) {};
		\node [style=none] (23) at (2.25, 2.25) {};
		\node [style=none] (24) at (0.5, -0.75) {};
		\node [style=none] (25) at (2.25, -2.25) {};
		\node [style=none] (26) at (-0.5, -0.75) {};
		\node [style=none] (27) at (-1.5, 0.25) {};
		\node [style=none] (28) at (-3, 0.25) {};
		\node [style=none] (29) at (-1, 0.25) {};
		\node [style=none] (30) at (-1.5, 1.25) {};
		\node [style=none] (31) at (-3, 1.25) {};
		\node [style=none] (32) at (-2.25, 0.75) {$M_{\mathrm{PR}}$};
		\node [style=none] (33) at (-2.25, 0.25) {};
		\node [style=none] (34) at (-2.25, 1.25) {};
		\node [style=none] (35) at (-2.25, 2.25) {};
		\node [style=none] (36) at (-0.5, -0.75) {};
		\node [style=none] (37) at (-2.25, -2.25) {};
		\node [style=none] (38) at (4, 0.25) {};
		\node [style=none] (39) at (6, 0.25) {};
		\node [style=none] (40) at (5, -1) {};
		\node [style=none] (41) at (5, 0.25) {};
		\node [style=none] (42) at (5, 2.25) {};
		\node [style=none] (43) at (5, -0.25) {$\rho^*$};
		\node [style=right label] (44) at (-2.25, -2) {$\X$};
		\node [style=right label] (45) at (2.25, -2) {$\X$};
		\node [style=right label] (46) at (-2.25, 2) {$\A$};
		\node [style=right label] (47) at (2.25, 2) {$\A$};
	\end{pgfonlayer}
	\begin{pgfonlayer}{edgelayer}
		\draw (2.center) to (1.center);
		\draw (0.center) to (2.center);
		\draw (0.center) to (4.center);
		\draw (4.center) to (5.center);
		\draw (5.center) to (1.center);
		\draw [b=12, in=90, out=-90] (15.center) to (14.center);
		\draw (19.center) to (18.center);
		\draw (18.center) to (17.center);
		\draw (17.center) to (16.center);
		\draw (16.center) to (19.center);
		\draw [c, in=90, out=-90] (23.center) to (22.center);
		\draw [c, in=90, out=-90] (21.center) to (25.center);
		\draw [b=12, in=90, out=-90] (27.center) to (26.center);
		\draw (31.center) to (30.center);
		\draw (30.center) to (29.center);
		\draw (29.center) to (28.center);
		\draw (28.center) to (31.center);
		\draw [c, in=90, out=-90] (35.center) to (34.center);
		\draw [c, in=90, out=-90] (33.center) to (37.center);
		\draw (38.center) to (41.center);
		\draw (41.center) to (39.center);
		\draw (39.center) to (40.center);
		\draw (40.center) to (38.center);
		\draw [q] (41.center) to (42.center);
	\end{pgfonlayer}
\end{tikzpicture}
.
\end{equation}

The second example we present is in a bipartite Bob-with-input steering scenario, where Alice has $\X=\A=\{0,1\}$, Bob's subsystem is a qubit, and Bob's input is dichotomic (i.e., $y \in \Y=\{0,1\}$). The particular post-quantum assemblage $\As^*_{\A|\X\Y}$ we consider has elements defined by $\sigma^{*B}_{a|xy} = \frac{1}{2} ( \ket{a}\bra{a} \delta_{xy = 0} + \ket{a \oplus 1} \bra{a \oplus 1} \delta_{xy = 1})$ \cite{bpqs}. This assemblage can be realised in Witworld by Alice and Bob sharing a bipartite Boxworld system of dimension $(2,2)$ prepared in a PR state, and implementing the following protocol. On the one hand, Alice performs the measurement $M_{\mathrm{PR}}$ of Eq.~\eqref{eq:PRbox} controlled on her classical input $x$, and obtains the output $a$. On the other hand, here the state preparation of Bob's system further depends on Bob's choice of a classical variable $y$ which he inputs in a device. In this protocol, this device has a two-stage process: first it implements the measurement $M^{\prime}_{\mathrm{PR}}$ of Eq.~\eqref{eq:PRbox} on the Boxworld system, conditioned on $y$; second, there is a controlled state preparation $P$ which prepares the quantum state $\ket{b}\bra{b}$ conditioned on $b$, the classical output of $M^{\prime}_{\mathrm{PR}}$. Diagrammatically, the whole protocol reads: 
\begin{equation}\label{pqprprep}
\begin{tikzpicture}
	\begin{pgfonlayer}{nodelayer}
		\node [style=none] (38) at (-1.25, 0.5) {};
		\node [style=none] (39) at (1.25, 0.5) {};
		\node [style=none] (41) at (0.75, 0.5) {};
		\node [style=none] (42) at (0.75, 1.5) {};
		\node [style=none] (43) at (0, 0) {$\Sigma^{*}$};
		\node [style=none] (44) at (-1.25, -0.5) {};
		\node [style=none] (45) at (1.25, -0.5) {};
		\node [style=none] (46) at (-0.75, 0.5) {};
		\node [style=none] (47) at (-0.75, 1.5) {};
		\node [style=none] (50) at (-0.75, -1.5) {};
		\node [style=none] (51) at (-0.75, -0.5) {};
		\node [style=none] (52) at (0.75, -1.5) {};
		\node [style=none] (53) at (0.75, -0.5) {};
		\node [style=right label] (54) at (-0.75, -1.25) {$\X$};
		\node [style=right label] (55) at (0.75, -1.25) {$\Y$};
		\node [style=right label] (56) at (-0.75, 1.25) {$\A$};
	\end{pgfonlayer}
	\begin{pgfonlayer}{edgelayer}
		\draw (38.center) to (41.center);
		\draw (41.center) to (39.center);
		\draw [q] (41.center) to (42.center);
		\draw (38.center) to (44.center);
		\draw (44.center) to (45.center);
		\draw (45.center) to (39.center);
		\draw [c] (46.center) to (47.center);
		\draw [c] (50.center) to (51.center);
		\draw [c] (52.center) to (53.center);
	\end{pgfonlayer}
\end{tikzpicture}
 \ \  = \ \  \begin{tikzpicture}
	\begin{pgfonlayer}{nodelayer}
		\node [style=none] (20) at (2.25, -1) {};
		\node [style=none] (21) at (2.25, -4.25) {};
		\node [style=none] (22) at (0.5, -3) {};
		\node [style=none] (23) at (1.5, -1) {};
		\node [style=none] (24) at (3, -1) {};
		\node [style=none] (25) at (1, -1) {};
		\node [style=none] (26) at (1.5, 0) {};
		\node [style=none] (27) at (3, 0) {};
		\node [style=none] (28) at (2.25, -0.5) {$M^\prime_{\mathrm{PR}}$};
		\node [style=none] (29) at (2.25, -1) {};
		\node [style=none] (30) at (2.25, 0) {};
		\node [style=none] (31) at (2.25, 1) {};
		\node [style=none] (33) at (1.5, 1) {};
		\node [style=none] (34) at (3, 1) {};
		\node [style=none] (35) at (3, 2) {};
		\node [style=none] (36) at (1.5, 2) {};
		\node [style=none] (37) at (2.25, 1.5) {P};
		\node [style=none] (38) at (2.25, 2) {};
		\node [style=none] (39) at (2.25, 3) {};
		\node [style=none] (40) at (-2.25, -1) {};
		\node [style=none] (41) at (-2.25, -4.25) {};
		\node [style=none] (42) at (-0.5, -3) {};
		\node [style=none] (43) at (-1.5, -1) {};
		\node [style=none] (44) at (-3, -1) {};
		\node [style=none] (45) at (-1, -1) {};
		\node [style=none] (46) at (-1.5, 0) {};
		\node [style=none] (47) at (-3, 0) {};
		\node [style=none] (48) at (-2.25, -0.5) {$M_{\mathrm{PR}}$};
		\node [style=none] (49) at (-2.25, -1) {};
		\node [style=none] (50) at (-2.25, 0) {};
		\node [style=none] (51) at (-2.25, 3) {};
		\node [style=none] (52) at (-1.25, -3) {};
		\node [style=none] (53) at (1.25, -3) {};
		\node [style=none] (54) at (0, -4.25) {};
		\node [style=none] (55) at (0, -3.5) {$s_{\mathrm{PR}}$};
		\node [style=none] (56) at (-0.5, -3) {};
		\node [style=none] (57) at (0.5, -3) {};
		\node [style=right label] (58) at (-2.25, 2.75) {$\A$};
		\node [style=right label] (59) at (-2.25, -4) {$\X$};
		\node [style=right label] (60) at (2.25, -4) {$\Y$};
	\end{pgfonlayer}
	\begin{pgfonlayer}{edgelayer}
		\draw [c, in=90, out=-90] (20.center) to (21.center);
		\draw [b=12, in=90, out=-90] (23.center) to (22.center);
		\draw (27.center) to (26.center);
		\draw (26.center) to (25.center);
		\draw (25.center) to (24.center);
		\draw (24.center) to (27.center);
		\draw [c, in=90, out=-90] (31.center) to (30.center);
		\draw (34.center) to (31.center);
		\draw (31.center) to (33.center);
		\draw (33.center) to (36.center);
		\draw (36.center) to (38.center);
		\draw (38.center) to (35.center);
		\draw (35.center) to (34.center);
		\draw [q] (38.center) to (39.center);
		\draw [c, in=90, out=-90] (40.center) to (41.center);
		\draw [b=12, in=90, out=-90] (43.center) to (42.center);
		\draw (47.center) to (46.center);
		\draw (46.center) to (45.center);
		\draw (45.center) to (44.center);
		\draw (44.center) to (47.center);
		\draw [c, in=90, out=-90] (51.center) to (50.center);
		\draw (54.center) to (53.center);
		\draw (52.center) to (54.center);
		\draw (52.center) to (56.center);
		\draw (56.center) to (57.center);
		\draw (57.center) to (53.center);
	\end{pgfonlayer}
\end{tikzpicture}
\,.
\end{equation}
One can readily see that Eq.~\eqref{pqprprep} indeed holds, since the assemblage elements of $\As^*_{\A|\X\Y}$ can be rewritten as  $\sigma^{*B}_{a|xy} = \frac{1}{2} \ket{a \oplus xy}\bra{a \oplus xy}$, and PR-box correlations satisfy $b=a\oplus xy$ and $\frac{1}{2} = \sum_b p_{\mathrm{PR}}(ab|xy)$. 

The third example we present is that of {Gleason assemblages} \cite{pqso}. In short, Gleason assemblages are those that can be mathematically expressed in the language of quantum theory by having the parties measure a shared system whose state preparation is represented by a normalised quantum entanglement witness. Gleason assemblages are particularly useful, since there are constructions that yield provably post-quantum Gleason assemblages. More importantly, the post-quantumness of some Gleason assemblages is not implied by post-quantum Bell non-locality, which renders post-quantum steering as a genuinely new effect \cite{pqso}. Witworld readily provides realisations of any Gleason assemblage, by noticing two facts: (i) any quantum entanglement witness is a valid state of composite quantum-type systems in Witworld (Thm.~B.6), and (ii) in Witworld, any local quantum measurement is a valid Witworld measurement  (Lem.~8.14). The explicit diagram for a Witworld realisation of a generic Gleason assemblage $\As^\mathrm{G}_{\A_1\A_2|\X_1\X_2}$ in a tripartite scenario with two black-box parties is:
\begin{equation}
\begin{tikzpicture}
	\begin{pgfonlayer}{nodelayer}
		\node [style=none] (38) at (-1.25, 0.5) {};
		\node [style=none] (39) at (2.25, 0.5) {};
		\node [style=none] (41) at (1.75, 0.5) {};
		\node [style=none] (42) at (1.75, 1.5) {};
		\node [style=none] (43) at (0.5, 0) {$\Sigma^{G}$};
		\node [style=none] (44) at (-1.25, -0.5) {};
		\node [style=none] (45) at (1.75, -0.5) {};
		\node [style=none] (46) at (-0.75, 0.5) {};
		\node [style=none] (47) at (-0.75, 1.5) {};
		\node [style=none] (50) at (-0.75, -1.5) {};
		\node [style=none] (51) at (-0.75, -0.5) {};
		\node [style=none] (52) at (0.25, -1.5) {};
		\node [style=none] (53) at (0.25, -0.5) {};
		\node [style=right label] (54) at (-0.75, -1.25) {$\X_1$};
		\node [style=right label] (55) at (0.25, -1.25) {$\X_2$};
		\node [style=right label] (56) at (-0.75, 1.25) {$\A_1$};
		\node [style=none] (57) at (0.25, 0.5) {};
		\node [style=none] (58) at (0.25, 1.5) {};
		\node [style=right label] (59) at (0.25, 1.25) {$\A_2$};
	\end{pgfonlayer}
	\begin{pgfonlayer}{edgelayer}
		\draw (38.center) to (41.center);
		\draw (41.center) to (39.center);
		\draw [q] (41.center) to (42.center);
		\draw (38.center) to (44.center);
		\draw (44.center) to (45.center);
		\draw (45.center) to (39.center);
		\draw [c] (46.center) to (47.center);
		\draw [c] (50.center) to (51.center);
		\draw [c] (52.center) to (53.center);
		\draw [c] (57.center) to (58.center);
	\end{pgfonlayer}
\end{tikzpicture}
 \ \  = \ \ \begin{tikzpicture}
	\begin{pgfonlayer}{nodelayer}
		\node [style=none] (0) at (-3, -1.5) {};
		\node [style=none] (1) at (7, -1.5) {};
		\node [style=none] (2) at (2, -2.75) {};
		\node [style=none] (3) at (2, -2) {Witness};
		\node [style=none] (4) at (-0.5, -1.5) {};
		\node [style=none] (5) at (0.5, -1.5) {};
		\node [style=none] (14) at (1.75, -1.5) {};
		\node [style=none] (15) at (1.25, 0.25) {};
		\node [style=none] (16) at (-0.25, 0.25) {};
		\node [style=none] (17) at (1.75, 0.25) {};
		\node [style=none] (18) at (1.25, 1.25) {};
		\node [style=none] (19) at (-0.25, 1.25) {};
		\node [style=none] (20) at (0.5, 0.75) {$N^\prime$};
		\node [style=none] (21) at (0.5, 0.25) {};
		\node [style=none] (22) at (0.5, 1.25) {};
		\node [style=none] (23) at (0.5, 2.25) {};
		\node [style=none] (24) at (1.75, -1.5) {};
		\node [style=none] (25) at (0.5, -3.5) {};
		\node [style=none] (26) at (-0.5, -1.5) {};
		\node [style=none] (27) at (-1.5, 0.25) {};
		\node [style=none] (28) at (-3, 0.25) {};
		\node [style=none] (29) at (-1, 0.25) {};
		\node [style=none] (30) at (-1.5, 1.25) {};
		\node [style=none] (31) at (-3, 1.25) {};
		\node [style=none] (32) at (-2.25, 0.75) {$N$};
		\node [style=none] (33) at (-2.25, 0.25) {};
		\node [style=none] (34) at (-2.25, 1.25) {};
		\node [style=none] (35) at (-2.25, 2.25) {};
		\node [style=none] (36) at (-0.5, -1.5) {};
		\node [style=none] (37) at (-2.25, -3.5) {};
		\node [style=none] (38) at (5.75, -1.5) {};
		\node [style=none] (39) at (5.75, 2.25) {};
		\node [style=none] (40) at (5.75, -1.5) {};
		\node [style=right label] (41) at (-2.25, -3.25) {$\X_1$};
		\node [style=right label] (42) at (0.5, -3.25) {$\X_2$};
		\node [style=right label] (43) at (-2.25, 2) {$\A_1$};
		\node [style=right label] (44) at (0.5, 2) {$\A_2$};
	\end{pgfonlayer}
	\begin{pgfonlayer}{edgelayer}
		\draw (2.center) to (1.center);
		\draw (0.center) to (2.center);
		\draw (0.center) to (4.center);
		\draw (4.center) to (5.center);
		\draw (5.center) to (1.center);
		\draw [q, in=90, out=-90] (15.center) to (14.center);
		\draw (19.center) to (18.center);
		\draw (18.center) to (17.center);
		\draw (17.center) to (16.center);
		\draw (16.center) to (19.center);
		\draw [c, in=90, out=-90] (23.center) to (22.center);
		\draw [c, in=90, out=-90] (21.center) to (25.center);
		\draw [q, in=90, out=-90] (27.center) to (26.center);
		\draw (31.center) to (30.center);
		\draw (30.center) to (29.center);
		\draw (29.center) to (28.center);
		\draw (28.center) to (31.center);
		\draw [c, in=90, out=-90] (35.center) to (34.center);
		\draw [c, in=90, out=-90] (33.center) to (37.center);
		\draw [q, in=90, out=-90] (39.center) to (38.center);
	\end{pgfonlayer}
\end{tikzpicture}
.
\end{equation} 

The fourth example that we present is in a bipartite Bob-with-input steering scenario, with $\A=\Y=\{0,1\}$ and $\X = \{1,2,3\}$. The particular post-quantum assemblage $\As^{**}_{\A|\X\Y}$ here has elements defined by \cite{bpqs}: 
\begin{align}\label{eq:BWIAssemblage}
\sigma^{**B}_{a|xy} = \frac{1}{4} \left( \id + (-1)^{a+\delta_{x,2}\delta_{y,1}} \Sigma_x \right) \,,
\end{align}
where $(\Sigma_1,\Sigma_2,\Sigma_3)$ are the Pauli X, Y, and Z operators, respectively. $\As^{**}_{\A|\X\Y}$ is the first assemblage found in the Bob-with-input scenario whose post-quantumness cannot be proven directly from leveraging post-quantum Bell non-locality, which renders this type of post-quantum steering as a genuinely new effect. To see that Witworld can realise this assemblage, first notice that its elements can be mathematically written as $\sigma^{**B}_{a|xy} = (\overline{\sigma}^{B}_{a|xy})^{\top_y}$, where $\overline{\sigma}^{B}_{a|xy} = \frac{1}{4} \left( \id + (-1)^{a} \Sigma_x \right)$ are the elements of a quantum assemblage (see Def.~D.6 in the Supplemental Material), and $\top_y$ is the identity operator for $y=0$ and the Transpose operator (denoted $\top$) for $y=1$. The final step is to observe that all of these mathematical objects are acceptable physical operations in Witworld: (i) the maximally entangled quantum state that realises $\{\overline{\sigma}^{B}_{a|xy}\}$ is a valid Witworld state preparation for Alice and Bob by Thm.~B.6, (ii) the quantum Pauli measurements for Alice that realise $\{\overline{\sigma}^{B}_{a|xy}\}$ are valid Witworld measurements by Lem.~A.14, where we denote the classically controlled Pauli measurement by $\textsc{Pauli}$, and (iii) the operations $\{\top_y\}$, which are positive quantum maps, are valid Witworld transformations by Thm.~A.15, and, hence, so too is the controlled transformation $c\top$ which implements $\top$ conditioned on a classical input system. Diagrammatically, this is represented as:

\begin{equation}\label{eq:BWIinWitworld}
\begin{tikzpicture}
	\begin{pgfonlayer}{nodelayer}
		\node [style=none] (38) at (-1.25, 0.5) {};
		\node [style=none] (39) at (1.25, 0.5) {};
		\node [style=none] (41) at (0.75, 0.5) {};
		\node [style=none] (42) at (0.75, 1.5) {};
		\node [style=none] (43) at (0, 0) {$\Sigma^{**}$};
		\node [style=none] (44) at (-1.25, -0.5) {};
		\node [style=none] (45) at (1.25, -0.5) {};
		\node [style=none] (46) at (-0.75, 0.5) {};
		\node [style=none] (47) at (-0.75, 1.5) {};
		\node [style=none] (50) at (-0.75, -1.5) {};
		\node [style=none] (51) at (-0.75, -0.5) {};
		\node [style=none] (52) at (0.75, -1.5) {};
		\node [style=none] (53) at (0.75, -0.5) {};
		\node [style=right label] (54) at (-0.75, -1.25) {$\X$};
		\node [style=right label] (55) at (0.75, -1.25) {$\Y$};
		\node [style=right label] (56) at (-0.75, 1.25) {$\A$};
	\end{pgfonlayer}
	\begin{pgfonlayer}{edgelayer}
		\draw (38.center) to (41.center);
		\draw (41.center) to (39.center);
		\draw [q] (41.center) to (42.center);
		\draw (38.center) to (44.center);
		\draw (44.center) to (45.center);
		\draw (45.center) to (39.center);
		\draw [c] (46.center) to (47.center);
		\draw [c] (50.center) to (51.center);
		\draw [c] (52.center) to (53.center);
	\end{pgfonlayer}
\end{tikzpicture}
 \ \ = \ \ \begin{tikzpicture}
	\begin{pgfonlayer}{nodelayer}
		\node [style=none] (21) at (2.75, 1) {};
		\node [style=none] (23) at (1, -1) {};
		\node [style=none] (24) at (2, 1) {};
		\node [style=none] (25) at (3.5, 1) {};
		\node [style=none] (26) at (1.5, 1) {};
		\node [style=none] (27) at (2, 2) {};
		\node [style=none] (28) at (3.5, 2) {};
		\node [style=none] (29) at (2.75, 1.5) {${c\top}$};
		\node [style=none] (30) at (2.75, 1) {};
		\node [style=none] (31) at (2.75, 2) {};
		\node [style=none] (32) at (2.75, 3) {};
		\node [style=none] (33) at (0, -1.5) {$\Phi^+$};
		\node [style=none] (34) at (-1.75, -1) {};
		\node [style=none] (35) at (0, -2.25) {};
		\node [style=none] (36) at (1.75, -1) {};
		\node [style=none] (37) at (-1, -1) {};
		\node [style=none] (38) at (1, -1) {};
		\node [style=none] (40) at (1, -1) {};
		\node [style=none] (42) at (-2.75, 1) {};
		\node [style=none] (43) at (-2.75, -2.5) {};
		\node [style=none] (44) at (-1, -1) {};
		\node [style=none] (45) at (-1.75, 1) {};
		\node [style=none] (46) at (-3.75, 1) {};
		\node [style=none] (47) at (-1.25, 1) {};
		\node [style=none] (48) at (-1.75, 2) {};
		\node [style=none] (49) at (-3.75, 2) {};
		\node [style=none] (50) at (-2.75, 1.5) {${\textsc{Pauli}}$};
		\node [style=none] (51) at (-2.75, 1) {};
		\node [style=none] (52) at (-2.75, 2) {};
		\node [style=none] (53) at (-2.75, 3) {};
		\node [style=none] (54) at (2.75, -2.5) {};
		\node [style=none] (55) at (2.75, 1) {};
		\node [style=right label] (56) at (-2.75, 2.75) {$\A$};
		\node [style=right label] (57) at (-2.75, -2.25) {$\X$};
		\node [style=right label] (58) at (2.75, -2.25) {$\Y$};
	\end{pgfonlayer}
	\begin{pgfonlayer}{edgelayer}
		\draw [q, in=90, out=-90] (24.center) to (23.center);
		\draw (28.center) to (27.center);
		\draw (27.center) to (26.center);
		\draw (26.center) to (25.center);
		\draw (25.center) to (28.center);
		\draw [q, in=90, out=-90] (32.center) to (31.center);
		\draw (35.center) to (36.center);
		\draw (36.center) to (34.center);
		\draw (34.center) to (35.center);
		\draw [c, in=90, out=-90] (42.center) to (43.center);
		\draw [q, in=90, out=-90] (45.center) to (44.center);
		\draw (49.center) to (48.center);
		\draw (48.center) to (47.center);
		\draw (47.center) to (46.center);
		\draw (46.center) to (49.center);
		\draw [c, in=90, out=-90] (53.center) to (52.center);
		\draw [c, in=-90, out=90] (54.center) to (55.center);
	\end{pgfonlayer}
\end{tikzpicture}
\,.
\end{equation}

The final example that we consider is   steering in the instrumental scenario \cite{bpqs}. This can be seen as an adaptation of the Bob-with-input scenario, in which Alice's output $a$ determines the setting $y$ for Bob. The particular example of post-quantum steering in this scenario that we present here is given by modifying our previous example, by wiring Alice's output to Bob's input. That is, it can be shown that the assemblage 
\begin{align}
\sigma^{IB}_{a|x} = \frac{1}{4} \left( \id + (-1)^{a+\delta_{x,2}\delta_{a,1}} \Sigma_x \right) \,,
\end{align}
which is obtained by setting $y=a$ in Eq.~\eqref{eq:BWIAssemblage}, is post-quantum \cite{bpqs}. It is then a simple modification of Eq.~\eqref{eq:BWIinWitworld} to see that this too can be realised in Witworld:
\beq
\begin{tikzpicture}
	\begin{pgfonlayer}{nodelayer}
		\node [style=none] (38) at (-1.25, 0.5) {};
		\node [style=none] (39) at (1.25, 0.5) {};
		\node [style=none] (41) at (0.75, 0.5) {};
		\node [style=none] (42) at (0.75, 1.5) {};
		\node [style=none] (43) at (0, 0) {$\Sigma^{I}$};
		\node [style=none] (44) at (-1.25, -0.5) {};
		\node [style=none] (45) at (0.75, -0.5) {};
		\node [style=none] (46) at (-0.75, 0.5) {};
		\node [style=none] (47) at (-0.75, 1.5) {};
		\node [style=none] (50) at (-0.75, -1.5) {};
		\node [style=none] (51) at (-0.75, -0.5) {};
		\node [style=right label] (54) at (-0.75, -1.25) {$\X$};
		\node [style=right label] (56) at (-0.75, 1.25) {$\A$};
	\end{pgfonlayer}
	\begin{pgfonlayer}{edgelayer}
		\draw (38.center) to (41.center);
		\draw (41.center) to (39.center);
		\draw [q] (41.center) to (42.center);
		\draw (38.center) to (44.center);
		\draw (44.center) to (45.center);
		\draw (45.center) to (39.center);
		\draw [c] (46.center) to (47.center);
		\draw [c] (50.center) to (51.center);
	\end{pgfonlayer}
\end{tikzpicture}
 \ \ = \ \ \begin{tikzpicture}
	\begin{pgfonlayer}{nodelayer}
		\node [style=none] (23) at (1, -2.5) {};
		\node [style=none] (24) at (1.75, 3) {};
		\node [style=none] (25) at (3.25, 3) {};
		\node [style=none] (26) at (1.25, 3) {};
		\node [style=none] (27) at (1.75, 4) {};
		\node [style=none] (28) at (3.25, 4) {};
		\node [style=none] (29) at (2.5, 3.5) {${c\top}$};
		\node [style=none] (31) at (2.5, 4) {};
		\node [style=none] (32) at (2.5, 5) {};
		\node [style=none] (33) at (0, -3) {$\Phi^+$};
		\node [style=none] (34) at (-1.75, -2.5) {};
		\node [style=none] (35) at (0, -3.75) {};
		\node [style=none] (36) at (1.75, -2.5) {};
		\node [style=none] (37) at (-1, -2.5) {};
		\node [style=none] (38) at (1, -2.5) {};
		\node [style=none] (40) at (1, -2.5) {};
		\node [style=none] (42) at (-2.75, -0.5) {};
		\node [style=none] (43) at (-2.75, -4) {};
		\node [style=none] (44) at (-1, -2.5) {};
		\node [style=none] (45) at (-1.75, -0.5) {};
		\node [style=none] (46) at (-3.75, -0.5) {};
		\node [style=none] (47) at (-1.25, -0.5) {};
		\node [style=none] (48) at (-1.75, 0.5) {};
		\node [style=none] (49) at (-3.75, 0.5) {};
		\node [style=none] (50) at (-2.75, 0) {$\textsc{Pauli}$};
		\node [style=none] (51) at (-2.75, -0.5) {};
		\node [style=none] (53) at (-1.75, 3.25) {};
		\node [style=none] (54) at (-2.75, 0.5) {};
		\node [style=white dot] (56) at (-0.5, 1.75) {};
		\node [style=none] (57) at (-0.5, 1.75) {};
		\node [style=none] (58) at (-2.75, -4) {};
		\node [style=right label] (59) at (-2.75, -3.75) {$\X$};
		\node [style=right label] (60) at (-1.75, 4.75) {$\A$};
		\node [style=none] (61) at (-1.75, 5) {};
		\node [style=none] (62) at (2.75, 3) {};
	\end{pgfonlayer}
	\begin{pgfonlayer}{edgelayer}
		\draw [q, in=90, out=-90] (24.center) to (23.center);
		\draw (28.center) to (27.center);
		\draw (27.center) to (26.center);
		\draw (26.center) to (25.center);
		\draw (25.center) to (28.center);
		\draw [q, in=90, out=-90] (32.center) to (31.center);
		\draw (35.center) to (36.center);
		\draw (36.center) to (34.center);
		\draw (34.center) to (35.center);
		\draw [c, in=90, out=-90] (42.center) to (43.center);
		\draw [q, in=90, out=-90] (45.center) to (44.center);
		\draw (49.center) to (48.center);
		\draw (48.center) to (47.center);
		\draw (47.center) to (46.center);
		\draw (46.center) to (49.center);
		\draw [c, in=-105, out=45, looseness=0.75] (56) to (57.center);
		\draw [c, in=150, out=-90] (53.center) to (56);
		\draw [c, in=270, out=90] (54.center) to (56);
		\draw [c] (53.center) to (61.center);
		\draw [c, in=60, out=-90] (62.center) to (57.center);
	\end{pgfonlayer}
\end{tikzpicture}
\,,
\eeq
where the small white circle splitting the classical system is the copy operation.

With this we see that Witworld features a variety of non-classical and post-quantum properties, both in Bell and steering scenarios, and hence is the first GPT that has been shown to display post-quantum steering.

\subsection{Post-quantum advantage for information processing}\label{se:pqarsp}

Post-quantum resources may outperform quantum ones for information processing tasks \cite{QKD1,QKD2,QKD3,RAND1,RAND2,BellRev}. A natural question then is whether the post-quantum features of Witworld enable this theory to be more powerful than quantum theory in this respect. First, one can focus on device-independent information processing tasks, such as quantum cryptography \cite{QKD1,QKD2,QKD3}, which rely on the use of correlations in Bell scenarios. Here, it is known that Boxworld may outperform quantum theory, since it realises any non-signalling correlation. Since Boxworld is a subtheory of Witworld, then, the latter inherits these properties; that is, Witworld outperforms quantum theory in those device-independent information processing tasks. A more relevant question then is whether such advantage persists when moving on from device-independent tasks. Hence in this section we investigate whether Witworld provides an advantage for tasks that go beyond the processing of Bell-type correlations.

There are two features of Witworld that go beyond Boxworld which are noteworthy when looking for an information task where Witworld is resourceful. One is the fact that Witworld has quantum systems as \bel{atomic} system types, and the other is the fact that positive (but not necessarily completely positive) quantum operations are allowed physical operations in Witworld. \bel{Using these two facts} we first show that Witworld outperforms Quantum Theory in the task of Remote State Preparation, and then we show that the resource underlying this advantage is post-quantum steering.

Remote State Preparation (RSP) \cite{RSPlo,RSPdebbie} is a protocol with a similar flavour to state teleportation.
A main difference between teleportation and RSP is that in the former, Alice can send to Bob a state she knows nothing about, whereas in the latter she may require a complete classical description of $\ket{\psi}$. We denote this complete classical description by $\bm{\uppsi}$.
In both cases, the main goal is \bel{for} Alice \bel{to} deterministically prepare a state $ | \psi \rangle $ in Bob's lab, \bel{such that} he gets no additional information \bel{about} $ | \psi \rangle $. In RSP (see Fig.~\ref{rspfig}), however, Alice does not need to perform experimentally challenging entangling measurements (\bel{as in a} full Bell-state analysis) \cite{RSPexp}. Instead, she can directly encode the information about the state she wishes to send onto her share of an entangled state shared with Bob. When Alice and Bob use quantum resources, the minimum amount of classical information that she needs to send him for the protocol to succeed is $2 \log{d}$ bits of information, where $d$ is the dimension of the Hilbert space containing $ | \psi \rangle $ \cite{RSPdebbie}. Here we present a protocol using Witworld resources which may prepare an arbitrary qubit state in Bob's lab using only $1$ bit (instead of $2$) of classical communication.

Consider the following protocol in Witworld.  Alice and Bob share the two qubit state $ | \Phi^{s}\rangle = (| 01 \rangle - | 10 \rangle )/\sqrt{2}$. Alice performs the unitary $ U_{\bm{\uppsi}} = | 0 \rangle \langle \psi^{\perp} | + | 1 \rangle \langle \psi | $ on her qubit, which encodes the state $ | \psi \rangle $ to be sent. 
This effectively applies $U_{\bm{\uppsi}}^{\dagger}$ to Bob's half of the state (the singlet state $|\Phi^{s}\rangle$ transforms trivially under $U \otimes U$; implying that $( U_{\bm{\uppsi}} \otimes \mathbb{I} ) |\Phi^{s}\rangle =   (\mathbb{I} \otimes U_{\bm{\uppsi}}^{\dagger}) |\Phi^{s}\rangle$).
Next, she performs the measurement given by $ B = \{ | 0 \rangle \langle 0 | , | 1 \rangle \langle 1 |   \} $, whose outcome consists of one classical bit $a$ 
which indicates exactly whether Bob now has the post measured state $-\ket{\psi}$ (if $a = 0$) or $\ket{\psi^{\perp}}$ (if $a = 1$).
Then, Alice sends $a$ to Bob, who now knows whether he is holding $-\ket{\psi}$ or $\ket{\psi^{\perp}}$.  
The task can be completed if Bob has access to a {universal-\textup{NOT} operation}, which maps an arbitrary input $\ket{\phi}$ into an orthogonal state to it (which is unique up to global phases for qubits). 
The universal-NOT operation is not valid in quantum theory since it is a positive transformation, but not a completely positive transformation. 
However, in Witworld, this is an allowable transformation. 
Thus in Witworld Bob can apply the universal-NOT gate when $a=1$, leaving him with a perfect copy of $\ket{\psi}$ (up to a physically irrelevant global phase).
Diagrammatically, this protocol is represented as follows: 
\begin{equation}\label{eq:RSPwit}
	\begin{tikzpicture}
	\begin{pgfonlayer}{nodelayer}
		\node [style=none] (0) at (-2.25, -2.25) {};
		\node [style=none] (1) at (-0.25, -3.5) {};
		\node [style=none] (2) at (1.75, -2.25) {};
		\node [style=none] (3) at (-0.25, -2.75) {$\Phi^s$};
		\node [style=none] (4) at (-1.25, -2.25) {};
		\node [style=none] (5) at (0.75, -2.25) {};
		\node [style=small box] (6) at (-1.25, -0.75) {$U_{\bm{\uppsi}}$};
		\node [style=none] (7) at (0.75, 3) {};
		\node [style=right label] (8) at (-1.25, -1.75) {$\Q{2}$};
		\node [style=right label] (9) at (0.75, -1.75) {$\Q{2}$};
		\node [style=small box] (10) at (-1.25, 1) {$B$};
		\node [style=right label] (11) at (-1.25, 0) {$\Q{2}$};
		\node [style=none] (12) at (-0.25, 3) {};
		\node [style=right label] (13) at (-0.5, 2) {$\C{2}$};
		\node [style=none] (14) at (-1, 3) {};
		\node [style=none] (15) at (-0.5, 4) {};
		\node [style=none] (16) at (2, 4) {};
		\node [style=none] (17) at (2, 3) {};
		\node [style=none] (18) at (0.75, 3.5) {$\text{cUNOT}$};
		\node [style=none] (19) at (0.75, 4) {};
		\node [style=none] (20) at (0.75, 5.25) {};
		\node [style=right label] (21) at (0.75, 4.75) {$\Q{2}$};
	\end{pgfonlayer}
	\begin{pgfonlayer}{edgelayer}
		\draw (0.center) to (2.center);
		\draw (2.center) to (1.center);
		\draw (1.center) to (0.center);
		\draw [q] (4.center) to (6);
		\draw [q] (7.center) to (5.center);
		\draw [q] (10) to (6);
		\draw [c, in=90, out=-90] (12.center) to (10);
		\draw [in=180, out=0] (14.center) to (17.center);
		\draw (17.center) to (16.center);
		\draw (16.center) to (15.center);
		\draw (15.center) to (14.center);
		\draw [q] (20.center) to (19.center);
	\end{pgfonlayer}
\end{tikzpicture}
 = \begin{tikzpicture}
	\begin{pgfonlayer}{nodelayer}
		\node [style=point] (5) at (0.75, -0.75) {$\psi$};
		\node [style=none] (7) at (0.75, 1) {};
		\node [style=right label] (9) at (0.75, 0) {$\Q{2}$};
	\end{pgfonlayer}
	\begin{pgfonlayer}{edgelayer}
		\draw [q] (7.center) to (5);
	\end{pgfonlayer}
\end{tikzpicture}
\end{equation} 
where cUNOT is the controlled-universal-NOT operation.
The diagrammatic manipulations that prove that Eq.~\eqref{eq:RSPwit} holds are presented in Sec.~E of the Supplemental Material.   
Through this protocol, Witworld performs RSP of a qubit deterministically with the transmission of only one classical bit from Alice to Bob, outperforming quantum theory at the task. 

We now move on to unveiling what the critical resource is underlying the success of RSP in Witworld. For this, it is convenient to rewrite the diagram in the left hand side of Eq.~\eqref{eq:RSPwit} as: 

\begin{equation}\label{}
	\begin{aligned}
	\begin{tikzpicture}
	\begin{pgfonlayer}{nodelayer}
		\node [style=none] (0) at (-2.25, -2.25) {};
		\node [style=none] (1) at (-0.25, -3.5) {};
		\node [style=none] (2) at (1.75, -2.25) {};
		\node [style=none] (3) at (-0.25, -2.75) {$\Phi^s$};
		\node [style=none] (4) at (-1.25, -2.25) {};
		\node [style=none] (5) at (0.75, -2.25) {};
		\node [style=small box] (6) at (-1.25, -0.75) {$U_{\bm{\uppsi}}$};
		\node [style=none] (7) at (0.75, 3) {};
		\node [style=right label] (8) at (-1.25, -1.75) {$\Q{2}$};
		\node [style=right label] (9) at (0.75, -1.75) {$\Q{2}$};
		\node [style=small box] (10) at (-1.25, 1) {$B$};
		\node [style=right label] (11) at (-1.25, 0) {$\Q{2}$};
		\node [style=none] (12) at (-0.25, 3) {};
		\node [style=right label] (13) at (-0.5, 2) {$\C{2}$};
		\node [style=none] (14) at (-1, 3) {};
		\node [style=none] (15) at (-0.5, 4) {};
		\node [style=none] (16) at (2, 4) {};
		\node [style=none] (17) at (2, 3) {};
		\node [style=none] (18) at (0.75, 3.5) {$\text{cUNOT}$};
		\node [style=none] (19) at (0.75, 4) {};
		\node [style=none] (20) at (0.75, 5.25) {};
		\node [style=right label] (21) at (0.75, 4.75) {$\Q{2}$};
	\end{pgfonlayer}
	\begin{pgfonlayer}{edgelayer}
		\draw (0.center) to (2.center);
		\draw (2.center) to (1.center);
		\draw (1.center) to (0.center);
		\draw [q] (4.center) to (6);
		\draw [q] (7.center) to (5.center);
		\draw [q] (10) to (6);
		\draw [c, in=90, out=-90] (12.center) to (10);
		\draw [in=180, out=0] (14.center) to (17.center);
		\draw (17.center) to (16.center);
		\draw (16.center) to (15.center);
		\draw (15.center) to (14.center);
		\draw [q] (20.center) to (19.center);
	\end{pgfonlayer}
\end{tikzpicture}
 &= \begin{tikzpicture}
	\begin{pgfonlayer}{nodelayer}
		\node [style=none] (0) at (-2.25, -2.5) {};
		\node [style=none] (1) at (-0.25, -3.75) {};
		\node [style=none] (2) at (1.75, -2.5) {};
		\node [style=none] (3) at (-0.25, -3) {$\Phi^s$};
		\node [style=none] (4) at (-1.25, -2.5) {};
		\node [style=none] (5) at (0.75, -2.5) {};
		\node [style=none] (7) at (0.75, 3) {};
		\node [style=right label] (8) at (-1.25, -2) {$\Q{2}$};
		\node [style=right label] (9) at (0.75, -2) {$\Q{2}$};
		\node [style=small box] (10) at (-1.25, 1) {$B$};
		\node [style=white dot] (12) at (-1.25, 2) {};
		\node [style=right label] (13) at (-0.5, 2.5) {$\C{2}$};
		\node [style=none] (14) at (-1, 3) {};
		\node [style=none] (15) at (-0.5, 4) {};
		\node [style=none] (16) at (2, 4) {};
		\node [style=none] (17) at (2, 3) {};
		\node [style=none] (18) at (0.75, 3.5) {$\text{cUNOT}$};
		\node [style=none] (19) at (0.75, 4) {};
		\node [style=none] (20) at (0.75, 5.25) {};
		\node [style=right label] (21) at (0.75, 4.75) {$\Q{2}$};
		\node [style=point] (22) at (-3.5, -2.5) {$\bm{\uppsi}$};
		\node [style=none] (23) at (-0.25, 3) {};
		\node [style=none] (24) at (-2.5, 3) {};
		\node [style=none] (25) at (-2.5, 3) {};
		\node [style=upground] (26) at (-2.5, 3.25) {};
		\node [style=none] (27) at (-2.25, -0.25) {};
		\node [style=none] (28) at (-2.75, -1.25) {};
		\node [style=none] (29) at (-0.25, -1.25) {};
		\node [style=none] (30) at (-0.25, -0.25) {};
		\node [style=none] (31) at (-1.25, -0.25) {};
		\node [style=none] (32) at (-1.25, -1.25) {};
		\node [style=none] (33) at (-2, -1.25) {};
		\node [style=right label] (34) at (-1.25, 0) {$\Q{2}$};
		\node [style=none] (35) at (-1.25, -0.75) {$\text{c}U$};
	\end{pgfonlayer}
	\begin{pgfonlayer}{edgelayer}
		\draw (0.center) to (2.center);
		\draw (2.center) to (1.center);
		\draw (1.center) to (0.center);
		\draw [q] (7.center) to (5.center);
		\draw [c, in=90, out=-90] (12) to (10);
		\draw [in=180, out=0] (14.center) to (17.center);
		\draw (17.center) to (16.center);
		\draw (16.center) to (15.center);
		\draw (15.center) to (14.center);
		\draw [q] (20.center) to (19.center);
		\draw [c, in=-90, out=60] (12) to (23.center);
		\draw [c, in=120, out=-90] (24.center) to (12);
		\draw (28.center) to (29.center);
		\draw (29.center) to (30.center);
		\draw (30.center) to (27.center);
		\draw (27.center) to (28.center);
		\draw [q, in=270, out=90] (4.center) to (32.center);
		\draw [c,in=-90, out=90, looseness=0.50] (22) to (33.center);
		\draw [q, in=270, out=90] (31.center) to (10);
	\end{pgfonlayer}
\end{tikzpicture}
\\
		&= \sum_{a} \begin{tikzpicture}
	\begin{pgfonlayer}{nodelayer}
		\node [style=none] (0) at (-2.25, -2.5) {};
		\node [style=none] (1) at (-0.25, -3.75) {};
		\node [style=none] (2) at (1.75, -2.5) {};
		\node [style=none] (3) at (-0.25, -3) {$\Phi^s$};
		\node [style=none] (4) at (-1.25, -2.5) {};
		\node [style=none] (5) at (0.75, -2.5) {};
		\node [style=none] (7) at (0.75, 2.25) {};
		\node [style=right label] (8) at (-1.25, -2) {$\Q{2}$};
		\node [style=right label] (9) at (0.75, -2) {$\Q{2}$};
		\node [style=none] (10) at (-1.25, 0) {};
		\node [style=white dot] (12) at (-1.25, 1) {};
		\node [style=right label] (13) at (-1, 0.5) {$\C{2}$};
		\node [style=none] (14) at (-1, 2.25) {};
		\node [style=none] (15) at (-0.5, 3.25) {};
		\node [style=none] (16) at (2, 3.25) {};
		\node [style=none] (17) at (2, 2.25) {};
		\node [style=none] (18) at (0.75, 2.75) {$\text{cUNOT}$};
		\node [style=none] (19) at (0.75, 3.25) {};
		\node [style=none] (20) at (0.75, 4.5) {};
		\node [style=right label] (21) at (0.75, 4) {$\Q{2}$};
		\node [style=point] (22) at (-2.75, -1.75) {$\bm{\uppsi}$};
		\node [style=none] (23) at (-0.25, 2.25) {};
		\node [style=copoint] (24) at (-2.5, 2.25) {$a$};
		\node [style=none] (27) at (-2.5, 0) {};
		\node [style=none] (28) at (-3, -1) {};
		\node [style=none] (29) at (0, -1) {};
		\node [style=none] (30) at (0, 0) {};
		\node [style=none] (31) at (-1.25, 0) {};
		\node [style=none] (32) at (-1.25, -1) {};
		\node [style=none] (33) at (-2.25, -1) {};
		\node [style=none] (35) at (-1.25, -0.5) {$B\circ \text{c}U$};
	\end{pgfonlayer}
	\begin{pgfonlayer}{edgelayer}
		\draw (0.center) to (2.center);
		\draw (2.center) to (1.center);
		\draw (1.center) to (0.center);
		\draw [q] (7.center) to (5.center);
		\draw [c, in=90, out=-90] (12) to (10.center);
		\draw [in=180, out=0] (14.center) to (17.center);
		\draw (17.center) to (16.center);
		\draw (16.center) to (15.center);
		\draw (15.center) to (14.center);
		\draw [q] (20.center) to (19.center);
		\draw [c, in=-90, out=60] (12) to (23.center);
		\draw [c, in=120, out=-90, looseness=0.75] (24) to (12);
		\draw (28.center) to (29.center);
		\draw (29.center) to (30.center);
		\draw (30.center) to (27.center);
		\draw (27.center) to (28.center);
		\draw [q, in=270, out=90] (4.center) to (32.center);
		\draw [c, in=-90, out=90, looseness=0.50] (22) to (33.center);
		\draw [q, in=270, out=90] (31.center) to (10.center);
	\end{pgfonlayer}
\end{tikzpicture}
, \label{eq:thesum}
	\end{aligned}
\end{equation}
where
\begin{equation}\label{}
	\begin{tikzpicture}
	\begin{pgfonlayer}{nodelayer}
		\node [style=point] (0) at (0, -0.75) {$\bm{\uppsi}$};
		\node [style=none] (1) at (0, 1) {};
	\end{pgfonlayer}
	\begin{pgfonlayer}{edgelayer}
		\draw [c] (0) to (1.center);
	\end{pgfonlayer}
\end{tikzpicture}
\end{equation}
is not the state $ |\psi \rangle $, but simply a classical label corresponding to it, used to determine the unitary $U_{\psi}$ that the transformation $\text{c}U$ implements. In addition, $B\circ\text{c}U$ is the process that first implements the controlled unitary $\text{c}U$ and then the measurement $B$.

The crucial step here is to notice that each term in the sum in Eq.~\eqref{eq:thesum} can be identified with an element of an assemblage $\{\sigma_{a| \bm{\uppsi}}\}$ in an instrumental steering scenario (see Def.~D.4 in the Supplemental Material) as follows:
\beq
\begin{tikzpicture}
	\begin{pgfonlayer}{nodelayer}
		\node [style=none] (38) at (-1.5, 0) {};
		\node [style=none] (39) at (1.5, 0) {};
		\node [style=none] (40) at (0, -1.75) {};
		\node [style=none] (41) at (0, 0) {};
		\node [style=none] (42) at (0, 1) {};
		\node [style=none] (43) at (0, -0.5) {\small{$\sigma_{a|\bm{\uppsi}}$}};
	\end{pgfonlayer}
	\begin{pgfonlayer}{edgelayer}
		\draw (38.center) to (41.center);
		\draw (41.center) to (39.center);
		\draw (39.center) to (40.center);
		\draw [in=315, out=135] (40.center) to (38.center);
		\draw [q] (41.center) to (42.center);
	\end{pgfonlayer}
\end{tikzpicture}
 \ \ := \ \ \begin{tikzpicture}
	\begin{pgfonlayer}{nodelayer}
		\node [style=none] (0) at (-2.25, -2.5) {};
		\node [style=none] (1) at (-0.25, -3.75) {};
		\node [style=none] (2) at (1.75, -2.5) {};
		\node [style=none] (3) at (-0.25, -3) {$\Phi^s$};
		\node [style=none] (4) at (-1.25, -2.5) {};
		\node [style=none] (5) at (0.75, -2.5) {};
		\node [style=none] (7) at (0.75, 2.25) {};
		\node [style=right label] (8) at (-1.25, -2) {$\Q{2}$};
		\node [style=right label] (9) at (0.75, -2) {$\Q{2}$};
		\node [style=none] (10) at (-1.25, 0) {};
		\node [style=white dot] (12) at (-1.25, 1) {};
		\node [style=right label] (13) at (-1, 0.5) {$\C{2}$};
		\node [style=none] (14) at (-1, 2.25) {};
		\node [style=none] (15) at (-0.5, 3.25) {};
		\node [style=none] (16) at (2, 3.25) {};
		\node [style=none] (17) at (2, 2.25) {};
		\node [style=none] (18) at (0.75, 2.75) {$\text{cUNOT}$};
		\node [style=none] (19) at (0.75, 3.25) {};
		\node [style=none] (20) at (0.75, 4.5) {};
		\node [style=right label] (21) at (0.75, 4) {$\Q{2}$};
		\node [style=point] (22) at (-2.75, -1.75) {$\bm{\uppsi}$};
		\node [style=none] (23) at (-0.25, 2.25) {};
		\node [style=copoint] (24) at (-2.5, 2.25) {$a$};
		\node [style=none] (27) at (-2.5, 0) {};
		\node [style=none] (28) at (-3, -1) {};
		\node [style=none] (29) at (0, -1) {};
		\node [style=none] (30) at (0, 0) {};
		\node [style=none] (31) at (-1.25, 0) {};
		\node [style=none] (32) at (-1.25, -1) {};
		\node [style=none] (33) at (-2.25, -1) {};
		\node [style=none] (35) at (-1.25, -0.5) {$B\circ \text{c}U$};
	\end{pgfonlayer}
	\begin{pgfonlayer}{edgelayer}
		\draw (0.center) to (2.center);
		\draw (2.center) to (1.center);
		\draw (1.center) to (0.center);
		\draw [q] (7.center) to (5.center);
		\draw [c, in=90, out=-90] (12) to (10.center);
		\draw [in=180, out=0] (14.center) to (17.center);
		\draw (17.center) to (16.center);
		\draw (16.center) to (15.center);
		\draw (15.center) to (14.center);
		\draw [q] (20.center) to (19.center);
		\draw [c, in=-90, out=60] (12) to (23.center);
		\draw [c, in=120, out=-90, looseness=0.75] (24) to (12);
		\draw (28.center) to (29.center);
		\draw (29.center) to (30.center);
		\draw (30.center) to (27.center);
		\draw (27.center) to (28.center);
		\draw [q, in=270, out=90] (4.center) to (32.center);
		\draw [c, in=-90, out=90, looseness=0.50] (22) to (33.center);
		\draw [q, in=270, out=90] (31.center) to (10.center);
	\end{pgfonlayer}
\end{tikzpicture}
,
\eeq
where $a$ denotes Alice's dichotomic outcome, and $\bm{\uppsi}$ is the classical variable that denotes her measurement choice. That is, RSP is ultimately an instance of an instrumental steering scenario, and the possible assemblages that Alice can prepare dictates whether RSP is possible for the given cardinality of $a$. 
For the particular RSP protocol discussed above, 
\beq
\begin{tikzpicture}
	\begin{pgfonlayer}{nodelayer}
		\node [style=none] (38) at (-1.5, 0) {};
		\node [style=none] (39) at (1.5, 0) {};
		\node [style=none] (40) at (0, -1.75) {};
		\node [style=none] (41) at (0, 0) {};
		\node [style=none] (42) at (0, 1) {};
		\node [style=none] (43) at (0, -0.5) {\small{$\sigma_{a|\bm{\uppsi}}$}};
	\end{pgfonlayer}
	\begin{pgfonlayer}{edgelayer}
		\draw (38.center) to (41.center);
		\draw (41.center) to (39.center);
		\draw (39.center) to (40.center);
		\draw [in=315, out=135] (40.center) to (38.center);
		\draw [q] (41.center) to (42.center);
	\end{pgfonlayer}
\end{tikzpicture}
 \ \ := \ \ \frac{1}{2} \, \begin{tikzpicture}
	\begin{pgfonlayer}{nodelayer}
		\node [style=point] (5) at (0.75, -0.75) {$\psi$};
		\node [style=none] (7) at (0.75, 1) {};
		\node [style=right label] (9) at (0.75, 0) {$\Q{2}$};
	\end{pgfonlayer}
	\begin{pgfonlayer}{edgelayer}
		\draw [q] (7.center) to (5);
	\end{pgfonlayer}
\end{tikzpicture}
\,.
\eeq
It is readily seen that the assemblage $\{\sigma_{a| \bm{\uppsi}}\}$ has no quantum realisation: if this was instead the case, this assemblage would provide a quantum RSP protocol that succeeds deterministically with 1 bit of communication, which is fundamentally impossible. 

We see therefore how instrumental steering powers RSP, and how the post-quantum steering featured in Witworld makes this theory more efficient than quantum theory at the task of Remote State Preparation.

Let us observe that quantum theory restricted to the reals \cite{hardy2012limited}, which has mixed states given  by symmetric matrices (a subset of quantum states), also requires a single bit of communication for RSP. A rebit (2 dimensional real quantum system) has mixed states given by the $X-Z$ plane of the Bloch sphere (a disk). The universal NOT is just rotation by $\pi$ around the $Y$ axis, and is completely positive. Since the singlet state $\ketbra{\Phi^{s}}{\Phi^{s}}$ is a real valued density operator (i.e. it is a symmetric matrix) it follows that it is a valid entangled state of two rebits. Hence the protocol outlined above in Witworld can be applied to real quantum theory as well, to give RSP with a single bit of communication.

\section{Discussion}

\bel{In this work we explored the scope of post-quantum steering as a stronger-than-quantum resource for information processing. We particularly focused the search on tasks beyond device-independent ones or those that ultimately rely on Bell correlations (such as random access codes \cite{RAC1,RAC2} or device-independent quantum key distribution): we aimed at finding tasks that intrinsically leveraged quantum systems and non-classical steering. We discovered that remote state preparation of qubits systems provides a friendly proof-of-principle of a general phenomenology: steering assemblages in the instrumental scenario serve as a resource for the task, and post-quantum assemblages perform better than quantum ones at it. This is the first time that post-quantum steering  -- as opposed to post-quantum Bell non-classicality -- has been identified as a resource powering information processing which can provably outperform quantum theory. }

In order to prove our claims, we defined a  generalised probabilistic theory, that we call Witworld, by combining classical, quantum, and Boxworld systems in a simple mathematical way, via the max tensor product. The intuitive formulation of Witworld allowed us to present its  powerful post-quantum features in an accessible way: one can readily see how post-quantum Bell nonlocality, post-quantum steering, and post-quantum states emerge within Witworld. \bel{The task of remote state preparation can be studied diagrammatically within Witworld, and by doing so we showed how the post-quantum assemblages allowed by the theory makes Witworld perform better at it than quantum theory does. }

\bel{A feature of Witworld is that, even though it is built in part on quantum systems, it does not contain quantum theory as a subtheory: there are tasks, such as quantum teleportation, that quantum theory can perform whilst Witworld cannot. The reason for this is the choice of composition rule: Witworld composes via the max tensor product, and hence no entangling measurements are allowed in the theory. Nonetheless, Witworld remarkably succeeds at reproducing all the quantum entangled states, quantum steering assemblages, and quantum correlations in Bell scenarios. That is, for the non-classical phenomena usually leveraged in quantum information tasks, Witworld is \bel{at least} as good as quantum theory at \bel{manifesting} them.    }

If we turn our attention to a particular subtheory of Witworld by restricting the system types to classical and quantum only -- that is, by removing Boxworld from the theory -- we find that this subtheory still features post-quantum properties, such as Bell non-classicality in multipartite scenarios (for example, by utilizing the results of Ref.~\cite{toniOform}), as well as post-quantum steering and post-quantum states even in bipartite scenarios. \bel{Remarkably, the post-quantum advantage for remote state preparation is also featured by this subtheory of Witworld, since the post-quantum advantage provided by it stems from the enlarged set of operations allowed on local quantum systems. We leave it as an open question whether other previously defined GPTs (e.g., Refs.~\cite{barnum2020composites,hefford2020hyper}) may provide such an advantage for this information processing task. }

\bel{It is worth noticing that Witworld's simple formulation \bel{does not make} the theory intrinsically groundbreaking from the perspective of generalised probabilitic theories, however its relevance \bel{is not grounded in its appeal as a standalone GPT}. \bel{Rather}, Witworld shows that there exists a compositional theory that could underpin post-quantum effects such as post-quantum steering. This shows that the latter phenomenon in not \bel{in principle} \bel{ un-realisable}, and hence its study should not be simply dismissed.  }

Looking ahead, there \bel{are} a variety of open questions that can be studied, \bel{ especially about the extent to which post-quantum steering compatible with special relativity \bel{can} be underpinned by some generalised theory.} We know that Witworld may display post-quantum steering but, unlike the case of Bell non-classicality, it is still unknown whether any no-signalling assemblage may have a realisation within Witworld. Any answer to this question would be of interest: if Witworld can realise all no-signalling assemblages, then this theory becomes the first GPT to accommodate steering fully in a common-cause resource theoretic framework \cite{schmid2020postquantum}; otherwise,  understanding the reason behind the gap between no-signalling realisable and Witworld realisable assemblages \bel{may lead to}  an operational principle that could shed light on the characterisation of quantum phenomena.

Finally, the exploration of the information processing power of \bel{steering (quantum and beyond)} has only just begun. Since Witworld is formulated in an intuitive way leveraging a diagrammatic representation \cite{chiribella2010probabilistic,
hardy2011reformulating,coecke2018picturing,gogioso2017categorical,selby2018reconstructing}, there is plenty of scope for investigating other post-quantum advantages of this theory, and of post-quantum steering, for information processing. 

\bigskip
\section*{Data Availability}

Data sharing not applicable to this article as no datasets were generated or analysed during the current study.

\section*{Code Availability}

Code sharing not applicable to this article as no codes were used during the current study.

\section*{Acknowledgments}

We thank an anonymous referee for their useful feedback on our manuscript. 
PJC, JHS, and ABS acknowledge support by the Foundation for Polish Science (IRAP project, ICTQT, contract no.2018/MAB/5, co-financed by EU within Smart Growth Operational Programme). 
This research was supported by Perimeter Institute for Theoretical Physics. Research at Perimeter Institute is supported by the Government of Canada through the Department of Innovation, Science and Economic Development Canada and by the Province of Ontario through the Ministry of Research, Innovation and Science.
All of the diagrams within this manuscript were prepared using TikZit.

\section*{Author Contributions}

All authors contributed equally to the writing and editing of the manuscript. TDG, ABS, JHS, JS conceived of this project during the PIMan workshop at Chapman University in 2018. The bulk of the results were formally proven by PJC under the supervision of ABS and JHS and in collaboration with TDG and JS. Diagrams were created by PJC and JHS using TikZiT. Figures were created by PJC and ABS.

\section*{Competing interests}

The authors declare no competing financial or non-financial interests.

\section*{Figure Captions}

\begin{figure}[ht]
	\centering
	\includegraphics[width=0.5\columnwidth]{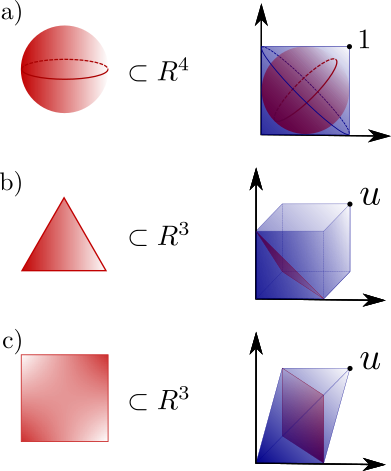}
	\caption{The geometry of the set of states (to the left) and effects (to the right) for a) atomic quantum systems of dimension $d=2$, b) atomic classical systems of dimension $v=3$, and c) atomic Boxworld systems of dimension $(n,k)=(2,2)$.}
	\label{geometry}
\end{figure}

\begin{figure}
\begin{center}
			\includegraphics[width=\columnwidth,trim={2cm 0 2cm 0}, clip]{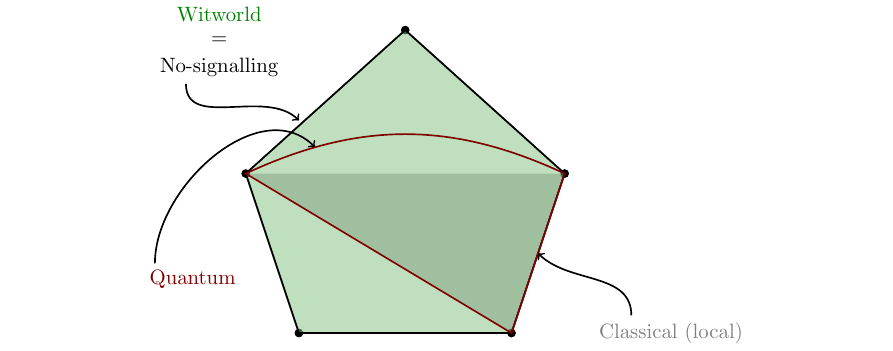}
\end{center}
\caption{Pictorial representation of the sets of correlations in Bell scenarios. The polytope of classical (Local Hidden State) correlations is depicted by the gray triangle. The convex set of quantum correlations is depicted by the red-bordered convex region. The set of Witworld correlations is depicted by the green transparent region. The polytope of no-signalling correlations is depicted by the black-bordered pentagon.  The sets of no-signalling and Witworld correlations are equivalent. Witworld correlations strictly contain the quantum set, which strictly contains the classical set.}
\label{fig:NSpol}
\end{figure}

\begin{figure}
\begin{center}
			\includegraphics[width=\columnwidth,trim={2cm 0 2cm 0}, clip]{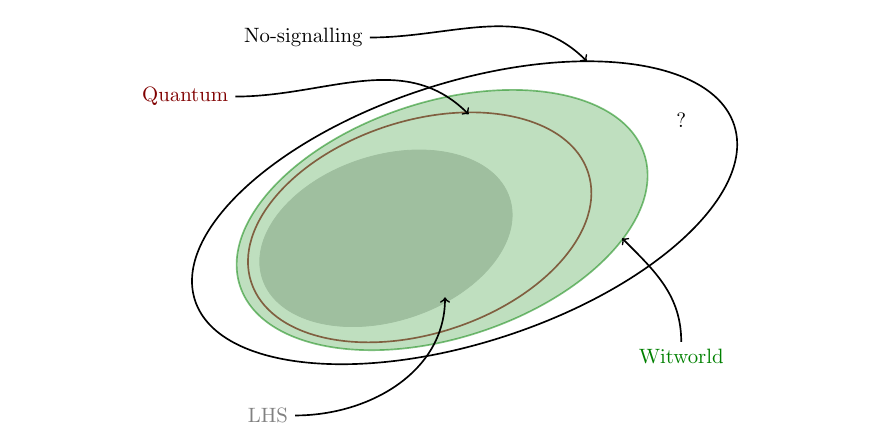}
\end{center}
\caption{Pictorial representation of the sets of assemblages in steering scenarios. The set of classical (Local Hidden State) assemblages is depicted by the gray ellipse. The set of quantum assemblages is depicted by the red-bordered region. The set of Witworld assemblages is depicted by the green transparent region. The set of no-signalling assemblages is depicted by the black-bordered ellipse. The set of Witworld assemblages strictly contains the quantum set, which strictly contains the LHS set. The set of no-signalling assemblages contains the Witworld set, and an open question  \bel{(pictorially depicted by the question mark)} is whether this inclusion is strict. }
\label{fig:SteeSet}
\end{figure}

\begin{figure}[]
	\centering
	\includegraphics[width=\columnwidth]{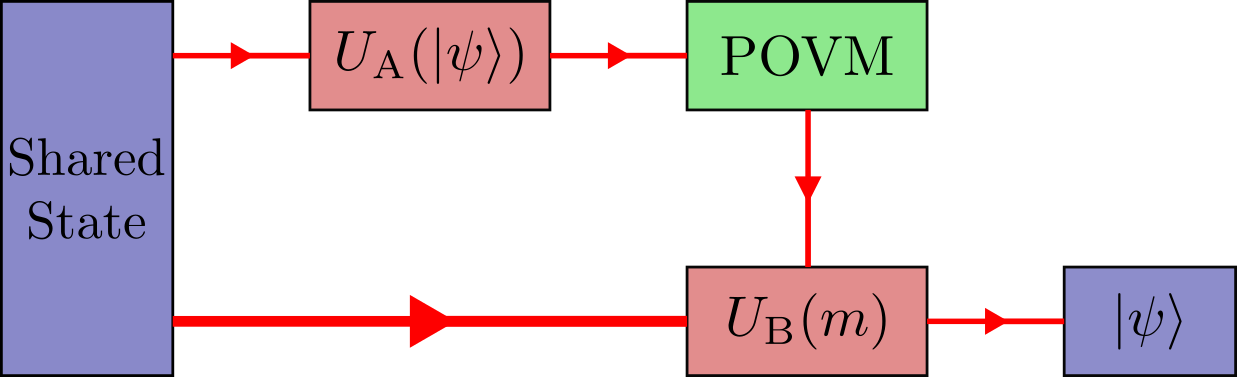}
	\caption{Pictorial representation of the remote state preparation protocol. Alice performs $ U_{A} ( | \psi \rangle  ) $ on her share of a physical system --  a unitary operation that depends on $\ket{\psi}$ -- and then a  measurement (POVM). She sends a classical message $m$ to Bob, who, in turn, performs an unitary transformation (which depends on $m$) on his share of the system. The outcome of the protocol is a quantum state on state $\ket{\psi}$ on Bob's lab.}
	\label{rspfig}
\end{figure}

\bibliography{bibliography}

\begin{thebibliography}{73}%
\makeatletter
\providecommand \@ifxundefined [1]{%
 \@ifx{#1\undefined}
}%
\providecommand \@ifnum [1]{%
 \ifnum #1\expandafter \@firstoftwo
 \else \expandafter \@secondoftwo
 \fi
}%
\providecommand \@ifx [1]{%
 \ifx #1\expandafter \@firstoftwo
 \else \expandafter \@secondoftwo
 \fi
}%
\providecommand \natexlab [1]{#1}%
\providecommand \enquote  [1]{``#1''}%
\providecommand \bibnamefont  [1]{#1}%
\providecommand \bibfnamefont [1]{#1}%
\providecommand \citenamefont [1]{#1}%
\providecommand \href@noop [0]{\@secondoftwo}%
\providecommand \href [0]{\begingroup \@sanitize@url \@href}%
\providecommand \@href[1]{\@@startlink{#1}\@@href}%
\providecommand \@@href[1]{\endgroup#1\@@endlink}%
\providecommand \@sanitize@url [0]{\catcode `\\12\catcode `\$12\catcode
  `\&12\catcode `\#12\catcode `\^12\catcode `\_12\catcode `\%12\relax}%
\providecommand \@@startlink[1]{}%
\providecommand \@@endlink[0]{}%
\providecommand \url  [0]{\begingroup\@sanitize@url \@url }%
\providecommand \@url [1]{\endgroup\@href {#1}{\urlprefix }}%
\providecommand \urlprefix  [0]{URL }%
\providecommand \Eprint [0]{\href }%
\providecommand \doibase [0]{http://dx.doi.org/}%
\providecommand \selectlanguage [0]{\@gobble}%
\providecommand \bibinfo  [0]{\@secondoftwo}%
\providecommand \bibfield  [0]{\@secondoftwo}%
\providecommand \translation [1]{[#1]}%
\providecommand \BibitemOpen [0]{}%
\providecommand \bibitemStop [0]{}%
\providecommand \bibitemNoStop [0]{.\EOS\space}%
\providecommand \EOS [0]{\spacefactor3000\relax}%
\providecommand \BibitemShut  [1]{\csname bibitem#1\endcsname}%
\let\auto@bib@innerbib\@empty
\bibitem [{\citenamefont {Einstein}\ \emph {et~al.}(1935)\citenamefont
  {Einstein}, \citenamefont {Podolsky},\ and\ \citenamefont {Rosen}}]{EPR35}%
  \BibitemOpen
  \bibfield  {author} {\bibinfo {author} {\bibfnamefont {A.}~\bibnamefont
  {Einstein}}, \bibinfo {author} {\bibfnamefont {B.}~\bibnamefont {Podolsky}},
  \ and\ \bibinfo {author} {\bibfnamefont {N.}~\bibnamefont {Rosen}},\
  }\bibfield  {title} {\enquote {\bibinfo {title} {Can quantum-mechanical
  description of physical reality be considered complete?}}\ }\href {\doibase
  10.1103/PhysRev.47.777} {\bibfield  {journal} {\bibinfo  {journal} {Phys.
  Rev.}\ }\textbf {\bibinfo {volume} {47}},\ \bibinfo {pages} {777--780}
  (\bibinfo {year} {1935})}\BibitemShut {NoStop}%
\bibitem [{\citenamefont
  {Schr{\"o}dinger}(1935)}]{schrodinger1935gegenwartige}%
  \BibitemOpen
  \bibfield  {author} {\bibinfo {author} {\bibfnamefont {Erwin}\ \bibnamefont
  {Schr{\"o}dinger}},\ }\bibfield  {title} {\enquote {\bibinfo {title} {Die
  gegenw{\"a}rtige situation in der quantenmechanik},}\ }\href {\doibase
  https://doi.org/10.1007/BF01491891} {\bibfield  {journal} {\bibinfo
  {journal} {Naturwissenschaften}\ }\textbf {\bibinfo {volume} {23}},\ \bibinfo
  {pages} {807--812} (\bibinfo {year} {1935})}\BibitemShut {NoStop}%
\bibitem [{\citenamefont {Bell}(1964)}]{Bell64}%
  \BibitemOpen
  \bibfield  {author} {\bibinfo {author} {\bibfnamefont {J.~S.}\ \bibnamefont
  {Bell}},\ }\bibfield  {title} {\enquote {\bibinfo {title} {On the einstein
  podolsky rosen paradox},}\ }\href {\doibase
  10.1103/PhysicsPhysiqueFizika.1.195} {\bibfield  {journal} {\bibinfo
  {journal} {Phys. Phys. Fiz.}\ }\textbf {\bibinfo {volume} {1}},\ \bibinfo
  {pages} {195--200} (\bibinfo {year} {1964})}\BibitemShut {NoStop}%
\bibitem [{\citenamefont {Schrödinger}(1936)}]{schrodinger36}%
  \BibitemOpen
  \bibfield  {author} {\bibinfo {author} {\bibfnamefont {E.}~\bibnamefont
  {Schrödinger}},\ }\bibfield  {title} {\enquote {\bibinfo {title}
  {Probability relations between separated systems},}\ }\href {\doibase
  10.1017/S0305004100019137} {\bibfield  {journal} {\bibinfo  {journal}
  {Math. Proc. Camb. Philos. Soc.}\ }\textbf
  {\bibinfo {volume} {32}},\ \bibinfo {pages} {446–452} (\bibinfo {year}
  {1936})}\BibitemShut {NoStop}%
\bibitem [{\citenamefont {Wiseman}\ \emph {et~al.}(2007)\citenamefont
  {Wiseman}, \citenamefont {Jones},\ and\ \citenamefont
  {Doherty}}]{SteeWiseman07}%
  \BibitemOpen
  \bibfield  {author} {\bibinfo {author} {\bibfnamefont {H.~M.}\ \bibnamefont
  {Wiseman}}, \bibinfo {author} {\bibfnamefont {S.~J.}\ \bibnamefont {Jones}},
  \ and\ \bibinfo {author} {\bibfnamefont {A.~C.}\ \bibnamefont {Doherty}},\
  }\bibfield  {title} {\enquote {\bibinfo {title} {Steering, entanglement,
  nonlocality, and the einstein-podolsky-rosen paradox},}\ }\href {\doibase
  10.1103/PhysRevLett.98.140402} {\bibfield  {journal} {\bibinfo  {journal}
  {Phys. Rev. Lett.}\ }\textbf {\bibinfo {volume} {98}},\ \bibinfo {pages}
  {140402} (\bibinfo {year} {2007})}\BibitemShut {NoStop}%
\bibitem [{\citenamefont {Jones}\ \emph {et~al.}(2007)\citenamefont {Jones},
  \citenamefont {Wiseman},\ and\ \citenamefont {Doherty}}]{SteeWiseman207}%
  \BibitemOpen
  \bibfield  {author} {\bibinfo {author} {\bibfnamefont {S.~J.}\ \bibnamefont
  {Jones}}, \bibinfo {author} {\bibfnamefont {H.~M.}\ \bibnamefont {Wiseman}},
  \ and\ \bibinfo {author} {\bibfnamefont {A.~C.}\ \bibnamefont {Doherty}},\
  }\bibfield  {title} {\enquote {\bibinfo {title} {Entanglement,
  einstein-podolsky-rosen correlations, bell nonlocality, and steering},}\
  }\href {\doibase 10.1103/PhysRevA.76.052116} {\bibfield  {journal} {\bibinfo
  {journal} {Phys. Rev. A}\ }\textbf {\bibinfo {volume} {76}},\ \bibinfo
  {pages} {052116} (\bibinfo {year} {2007})}\BibitemShut {NoStop}%
\bibitem [{\citenamefont {Hensen}\ \emph {et~al.}(2015)\citenamefont {Hensen}
  \emph {et~al.}}]{BellExp1}%
  \BibitemOpen
  \bibfield  {author} {\bibinfo {author} {\bibfnamefont {B.}~\bibnamefont
  {Hensen}} \emph {et~al.},\ }\bibfield  {title} {\enquote {\bibinfo {title}
  {Loophole-free bell inequality violation using electron spins separated by
  1.3 kilometres},}\ }\href {\doibase https://doi.org/10.1038/nature15759}
  {\bibfield  {journal} {\bibinfo  {journal} {Nature}\ }\textbf {\bibinfo
  {volume} {526}},\ \bibinfo {pages} {682--686} (\bibinfo {year}
  {2015})}\BibitemShut {NoStop}%
\bibitem [{\citenamefont {Shalm}\ \emph {et~al.}(2015)\citenamefont {Shalm}
  \emph {et~al.}}]{BellExp2}%
  \BibitemOpen
  \bibfield  {author} {\bibinfo {author} {\bibfnamefont {Lynden~K.}\
  \bibnamefont {Shalm}} \emph {et~al.},\ }\bibfield  {title} {\enquote
  {\bibinfo {title} {Strong loophole-free test of local realism},}\ }\href
  {\doibase 10.1103/PhysRevLett.115.250402} {\bibfield  {journal} {\bibinfo
  {journal} {Phys. Rev. Lett.}\ }\textbf {\bibinfo {volume} {115}},\ \bibinfo
  {pages} {250402} (\bibinfo {year} {2015})}\BibitemShut {NoStop}%
\bibitem [{\citenamefont {Giustina}\ \emph {et~al.}(2015)\citenamefont
  {Giustina} \emph {et~al.}}]{BellExp3}%
  \BibitemOpen
  \bibfield  {author} {\bibinfo {author} {\bibfnamefont {Marissa}\ \bibnamefont
  {Giustina}} \emph {et~al.},\ }\bibfield  {title} {\enquote {\bibinfo {title}
  {Significant-loophole-free test of bell's theorem with entangled photons},}\
  }\href {\doibase 10.1103/PhysRevLett.115.250401} {\bibfield  {journal}
  {\bibinfo  {journal} {Phys. Rev. Lett.}\ }\textbf {\bibinfo {volume} {115}},\
  \bibinfo {pages} {250401} (\bibinfo {year} {2015})}\BibitemShut {NoStop}%
\bibitem [{\citenamefont {Bennet}\ \emph {et~al.}(2012)\citenamefont {Bennet}
  \emph {et~al.}}]{SteeExp1}%
  \BibitemOpen
  \bibfield  {author} {\bibinfo {author} {\bibfnamefont {A.~J.}\ \bibnamefont
  {Bennet}} \emph {et~al.},\ }\bibfield  {title} {\enquote {\bibinfo {title}
  {Arbitrarily loss-tolerant einstein-podolsky-rosen steering allowing a
  demonstration over 1 km of optical fiber with no detection loophole},}\
  }\href {\doibase 10.1103/PhysRevX.2.031003} {\bibfield  {journal} {\bibinfo
  {journal} {Phys. Rev. X}\ }\textbf {\bibinfo {volume} {2}},\ \bibinfo {pages}
  {031003} (\bibinfo {year} {2012})}\BibitemShut {NoStop}%
\bibitem [{\citenamefont {Wittmann}\ \emph {et~al.}(2012)\citenamefont
  {Wittmann} \emph {et~al.}}]{SteeExp2}%
  \BibitemOpen
  \bibfield  {author} {\bibinfo {author} {\bibfnamefont {Bernhard}\
  \bibnamefont {Wittmann}} \emph {et~al.},\ }\bibfield  {title} {\enquote
  {\bibinfo {title} {Loophole-free einstein--podolsky--rosen experiment via
  quantum steering},}\ }\href {\doibase
  https://doi.org/10.1088/1367-2630/14/5/053030} {\bibfield  {journal}
  {\bibinfo  {journal} {New J. Phys.}\ }\textbf {\bibinfo {volume} {14}},\
  \bibinfo {pages} {053030} (\bibinfo {year} {2012})}\BibitemShut {NoStop}%
\bibitem [{\citenamefont {Smith}\ \emph {et~al.}(2012)\citenamefont {Smith}
  \emph {et~al.}}]{SteeExp3}%
  \BibitemOpen
  \bibfield  {author} {\bibinfo {author} {\bibfnamefont {Devin}\ \bibnamefont
  {Smith}} \emph {et~al.},\ }\bibfield  {title} {\enquote {\bibinfo {title}
  {Conclusive quantum steering with superconducting transition-edge sensors},}\
  }\href {\doibase https://doi.org/10.1038/ncomms1628} {\bibfield  {journal}
  {\bibinfo  {journal} {Nat. Commun.}\ }\textbf {\bibinfo {volume} {3}},\
  \bibinfo {pages} {1--6} (\bibinfo {year} {2012})}\BibitemShut {NoStop}%
\bibitem [{\citenamefont {Brunner}\ \emph {et~al.}(2014)\citenamefont
  {Brunner}, \citenamefont {Cavalcanti}, \citenamefont {Pironio}, \citenamefont
  {Scarani},\ and\ \citenamefont {Wehner}}]{BellRev}%
  \BibitemOpen
  \bibfield  {author} {\bibinfo {author} {\bibfnamefont {Nicolas}\ \bibnamefont
  {Brunner}}, \bibinfo {author} {\bibfnamefont {Daniel}\ \bibnamefont
  {Cavalcanti}}, \bibinfo {author} {\bibfnamefont {Stefano}\ \bibnamefont
  {Pironio}}, \bibinfo {author} {\bibfnamefont {Valerio}\ \bibnamefont
  {Scarani}}, \ and\ \bibinfo {author} {\bibfnamefont {Stephanie}\ \bibnamefont
  {Wehner}},\ }\bibfield  {title} {\enquote {\bibinfo {title} {Bell
  nonlocality},}\ }\href {\doibase 10.1103/RevModPhys.86.419} {\bibfield
  {journal} {\bibinfo  {journal} {Rev. Mod. Phys.}\ }\textbf {\bibinfo {volume}
  {86}},\ \bibinfo {pages} {419--478} (\bibinfo {year} {2014})}\BibitemShut
  {NoStop}%
\bibitem [{\citenamefont {Cavalcanti}\ and\ \citenamefont
  {Skrzypczyk}(2016)}]{SteeRevSDP}%
  \BibitemOpen
  \bibfield  {author} {\bibinfo {author} {\bibfnamefont {Daniel}\ \bibnamefont
  {Cavalcanti}}\ and\ \bibinfo {author} {\bibfnamefont {Paul}\ \bibnamefont
  {Skrzypczyk}},\ }\bibfield  {title} {\enquote {\bibinfo {title} {Quantum
  steering: a review with focus on semidefinite programming},}\ }\href
  {\doibase https://doi.org/10.1088/1361-6633/80/2/024001} {\bibfield
  {journal} {\bibinfo  {journal} {Rep. Prog. Phys.}\ }\textbf {\bibinfo
  {volume} {80}},\ \bibinfo {pages} {024001} (\bibinfo {year}
  {2016})}\BibitemShut {NoStop}%
\bibitem [{\citenamefont {Uola}\ \emph {et~al.}(2020)\citenamefont {Uola},
  \citenamefont {Costa}, \citenamefont {Nguyen},\ and\ \citenamefont
  {G\"uhne}}]{SteeRev}%
  \BibitemOpen
  \bibfield  {author} {\bibinfo {author} {\bibfnamefont {Roope}\ \bibnamefont
  {Uola}}, \bibinfo {author} {\bibfnamefont {Ana C.~S.}\ \bibnamefont {Costa}},
  \bibinfo {author} {\bibfnamefont {H.~Chau}\ \bibnamefont {Nguyen}}, \ and\
  \bibinfo {author} {\bibfnamefont {Otfried}\ \bibnamefont {G\"uhne}},\
  }\bibfield  {title} {\enquote {\bibinfo {title} {Quantum steering},}\ }\href
  {\doibase 10.1103/RevModPhys.92.015001} {\bibfield  {journal} {\bibinfo
  {journal} {Rev. Mod. Phys.}\ }\textbf {\bibinfo {volume} {92}},\ \bibinfo
  {pages} {015001} (\bibinfo {year} {2020})}\BibitemShut {NoStop}%
\bibitem [{\citenamefont {Popescu}\ and\ \citenamefont
  {Rohrlich}(1994)}]{popescu1994quantum}%
  \BibitemOpen
  \bibfield  {author} {\bibinfo {author} {\bibfnamefont {Sandu}\ \bibnamefont
  {Popescu}}\ and\ \bibinfo {author} {\bibfnamefont {Daniel}\ \bibnamefont
  {Rohrlich}},\ }\bibfield  {title} {\enquote {\bibinfo {title} {Quantum
  nonlocality as an axiom},}\ }\href {\doibase
  https://doi.org/10.1007/BF02058098} {\bibfield  {journal} {\bibinfo
  {journal} {Found. Phys.}\ }\textbf {\bibinfo {volume} {24}},\ \bibinfo
  {pages} {379--385} (\bibinfo {year} {1994})}\BibitemShut {NoStop}%
\bibitem [{\citenamefont {Barrett}\ \emph {et~al.}(2005)\citenamefont
  {Barrett}, \citenamefont {Hardy},\ and\ \citenamefont {Kent}}]{QKD1}%
  \BibitemOpen
  \bibfield  {author} {\bibinfo {author} {\bibfnamefont {Jonathan}\
  \bibnamefont {Barrett}}, \bibinfo {author} {\bibfnamefont {Lucien}\
  \bibnamefont {Hardy}}, \ and\ \bibinfo {author} {\bibfnamefont {Adrian}\
  \bibnamefont {Kent}},\ }\bibfield  {title} {\enquote {\bibinfo {title} {No
  signaling and quantum key distribution},}\ }\href {\doibase
  10.1103/PhysRevLett.95.010503} {\bibfield  {journal} {\bibinfo  {journal}
  {Phys. Rev. Lett.}\ }\textbf {\bibinfo {volume} {95}},\ \bibinfo {pages}
  {010503} (\bibinfo {year} {2005})}\BibitemShut {NoStop}%
\bibitem [{\citenamefont {Ac\'{\i}n}\ \emph {et~al.}(2006)\citenamefont
  {Ac\'{\i}n}, \citenamefont {Gisin},\ and\ \citenamefont {Masanes}}]{QKD2}%
  \BibitemOpen
  \bibfield  {author} {\bibinfo {author} {\bibfnamefont {Antonio}\ \bibnamefont
  {Ac\'{\i}n}}, \bibinfo {author} {\bibfnamefont {Nicolas}\ \bibnamefont
  {Gisin}}, \ and\ \bibinfo {author} {\bibfnamefont {Lluis}\ \bibnamefont
  {Masanes}},\ }\bibfield  {title} {\enquote {\bibinfo {title} {From bell's
  theorem to secure quantum key distribution},}\ }\href {\doibase
  10.1103/PhysRevLett.97.120405} {\bibfield  {journal} {\bibinfo  {journal}
  {Phys. Rev. Lett.}\ }\textbf {\bibinfo {volume} {97}},\ \bibinfo {pages}
  {120405} (\bibinfo {year} {2006})}\BibitemShut {NoStop}%
\bibitem [{\citenamefont {Brassard}\ \emph {et~al.}(2006)\citenamefont
  {Brassard}, \citenamefont {Buhrman}, \citenamefont {Linden}, \citenamefont
  {M{\'e}thot}, \citenamefont {Tapp},\ and\ \citenamefont
  {Unger}}]{brassard2006limit}%
  \BibitemOpen
  \bibfield  {author} {\bibinfo {author} {\bibfnamefont {Gilles}\ \bibnamefont
  {Brassard}}, \bibinfo {author} {\bibfnamefont {Harry}\ \bibnamefont
  {Buhrman}}, \bibinfo {author} {\bibfnamefont {Noah}\ \bibnamefont {Linden}},
  \bibinfo {author} {\bibfnamefont {Andr{\'e}~Allan}\ \bibnamefont
  {M{\'e}thot}}, \bibinfo {author} {\bibfnamefont {Alain}\ \bibnamefont
  {Tapp}}, \ and\ \bibinfo {author} {\bibfnamefont {Falk}\ \bibnamefont
  {Unger}},\ }\bibfield  {title} {\enquote {\bibinfo {title} {Limit on
  nonlocality in any world in which communication complexity is not trivial},}\
  }\href {\doibase 10.1103/PhysRevLett.96.250401} {\bibfield  {journal}
  {\bibinfo  {journal} {Phys. Rev. Lett.}\ }\textbf {\bibinfo {volume} {96}},\
  \bibinfo {pages} {250401} (\bibinfo {year} {2006})}\BibitemShut {NoStop}%
\bibitem [{\citenamefont {Paw{\l}owski}\ \emph {et~al.}(2009)\citenamefont
  {Paw{\l}owski}, \citenamefont {Paterek}, \citenamefont {Kaszlikowski},
  \citenamefont {Scarani}, \citenamefont {Winter},\ and\ \citenamefont
  {{\.Z}ukowski}}]{pawlowski2009information}%
  \BibitemOpen
  \bibfield  {author} {\bibinfo {author} {\bibfnamefont {Marcin}\ \bibnamefont
  {Paw{\l}owski}}, \bibinfo {author} {\bibfnamefont {Tomasz}\ \bibnamefont
  {Paterek}}, \bibinfo {author} {\bibfnamefont {Dagomir}\ \bibnamefont
  {Kaszlikowski}}, \bibinfo {author} {\bibfnamefont {Valerio}\ \bibnamefont
  {Scarani}}, \bibinfo {author} {\bibfnamefont {Andreas}\ \bibnamefont
  {Winter}}, \ and\ \bibinfo {author} {\bibfnamefont {Marek}\ \bibnamefont
  {{\.Z}ukowski}},\ }\bibfield  {title} {\enquote {\bibinfo {title}
  {Information causality as a physical principle},}\ }\href {\doibase
  https://doi.org/10.1038/nature08400} {\bibfield  {journal} {\bibinfo
  {journal} {Nature}\ }\textbf {\bibinfo {volume} {461}},\ \bibinfo {pages}
  {1101--1104} (\bibinfo {year} {2009})}\BibitemShut {NoStop}%
\bibitem [{\citenamefont {Navascu{\'e}s}\ and\ \citenamefont
  {Wunderlich}(2010)}]{navascues2010glance}%
  \BibitemOpen
  \bibfield  {author} {\bibinfo {author} {\bibfnamefont {Miguel}\ \bibnamefont
  {Navascu{\'e}s}}\ and\ \bibinfo {author} {\bibfnamefont {Harald}\
  \bibnamefont {Wunderlich}},\ }\bibfield  {title} {\enquote {\bibinfo {title}
  {A glance beyond the quantum model},}\ }\href {\doibase
  https://doi.org/10.1098/rspa.2009.0453} {\bibfield  {journal} {\bibinfo
  {journal} {Proc. R. Soc. A}\ }\textbf {\bibinfo {volume} {466}},\ \bibinfo
  {pages} {881--890} (\bibinfo {year} {2010})}\BibitemShut {NoStop}%
\bibitem [{\citenamefont {Fritz}\ \emph {et~al.}(2013)\citenamefont {Fritz}
  \emph {et~al.}}]{LO}%
  \BibitemOpen
  \bibfield  {author} {\bibinfo {author} {\bibfnamefont {Tobias}\ \bibnamefont
  {Fritz}} \emph {et~al.},\ }\bibfield  {title} {\enquote {\bibinfo {title}
  {Local orthogonality as a multipartite principle for quantum correlations},}\
  }\href {\doibase https://doi.org/10.1038/ncomms3263} {\bibfield  {journal}
  {\bibinfo  {journal} {Nat. Commun.}\ }\textbf {\bibinfo {volume} {4}},\
  \bibinfo {pages} {1--7} (\bibinfo {year} {2013})}\BibitemShut {NoStop}%
\bibitem [{\citenamefont {Sainz}\ \emph {et~al.}(2014)\citenamefont {Sainz}
  \emph {et~al.}}]{LO2}%
  \BibitemOpen
  \bibfield  {author} {\bibinfo {author} {\bibfnamefont {Ana~Bel{\'e}n}\
  \bibnamefont {Sainz}} \emph {et~al.},\ }\bibfield  {title} {\enquote
  {\bibinfo {title} {Exploring the local orthogonality principle},}\ }\href
  {\doibase 10.1103/PhysRevA.89.032117} {\bibfield  {journal} {\bibinfo
  {journal} {Phys. Rev. A}\ }\textbf {\bibinfo {volume} {89}},\ \bibinfo
  {pages} {032117} (\bibinfo {year} {2014})}\BibitemShut {NoStop}%
\bibitem [{\citenamefont {Ac{\'\i}n}\ \emph {et~al.}(2015)\citenamefont
  {Ac{\'\i}n}, \citenamefont {Fritz}, \citenamefont {Leverrier},\ and\
  \citenamefont {Sainz}}]{afls}%
  \BibitemOpen
  \bibfield  {author} {\bibinfo {author} {\bibfnamefont {Antonio}\ \bibnamefont
  {Ac{\'\i}n}}, \bibinfo {author} {\bibfnamefont {Tobias}\ \bibnamefont
  {Fritz}}, \bibinfo {author} {\bibfnamefont {Anthony}\ \bibnamefont
  {Leverrier}}, \ and\ \bibinfo {author} {\bibfnamefont {Ana~Bel{\'e}n}\
  \bibnamefont {Sainz}},\ }\bibfield  {title} {\enquote {\bibinfo {title} {A
  combinatorial approach to nonlocality and contextuality},}\ }\href {\doibase
  https://doi.org/10.1007/s00220-014-2260-1} {\bibfield  {journal} {\bibinfo
  {journal} {Commun. Math. Phys.}\ }\textbf {\bibinfo {volume} {334}},\
  \bibinfo {pages} {533--628} (\bibinfo {year} {2015})}\BibitemShut {NoStop}%
\bibitem [{\citenamefont {Henson}\ and\ \citenamefont {Sainz}(2015)}]{MNC}%
  \BibitemOpen
  \bibfield  {author} {\bibinfo {author} {\bibfnamefont {Joe}\ \bibnamefont
  {Henson}}\ and\ \bibinfo {author} {\bibfnamefont {Ana~Bel{\'e}n}\
  \bibnamefont {Sainz}},\ }\bibfield  {title} {\enquote {\bibinfo {title}
  {Macroscopic noncontextuality as a principle for almost-quantum
  correlations},}\ }\href {\doibase https://doi.org/10.1103/PhysRevA.91.042114}
  {\bibfield  {journal} {\bibinfo  {journal} {Phys. Rev. A}\ }\textbf {\bibinfo
  {volume} {91}},\ \bibinfo {pages} {042114} (\bibinfo {year}
  {2015})}\BibitemShut {NoStop}%
\bibitem [{\citenamefont {{Ana Bel{\'e}n Sainz, Nicolas Brunner, Daniel
  Cavalcanti, Paul Skrzypczyk, and Tam{\'a}s V{\'e}rtesi}}(2015)}]{pqs}%
  \BibitemOpen
  \bibfield  {author} {\bibinfo {author} {\bibnamefont {{Ana Bel{\'e}n Sainz,
  Nicolas Brunner, Daniel Cavalcanti, Paul Skrzypczyk, and Tam{\'a}s
  V{\'e}rtesi}}},\ }\bibfield  {title} {\enquote {\bibinfo {title} {Postquantum
  steering},}\ }\href {\doibase 10.1103/PhysRevLett.115.190403} {\bibfield
  {journal} {\bibinfo  {journal} {Phys. Rev. Lett.}\ }\textbf {\bibinfo
  {volume} {115}},\ \bibinfo {pages} {190403} (\bibinfo {year}
  {2015})}\BibitemShut {NoStop}%
\bibitem [{\citenamefont {Sainz}\ \emph {et~al.}(2020)\citenamefont {Sainz},
  \citenamefont {Hoban}, \citenamefont {Skrzypczyk},\ and\ \citenamefont
  {Aolita}}]{bpqs}%
  \BibitemOpen
  \bibfield  {author} {\bibinfo {author} {\bibfnamefont {Ana~Bel\'en}\
  \bibnamefont {Sainz}}, \bibinfo {author} {\bibfnamefont {Matty~J.}\
  \bibnamefont {Hoban}}, \bibinfo {author} {\bibfnamefont {Paul}\ \bibnamefont
  {Skrzypczyk}}, \ and\ \bibinfo {author} {\bibfnamefont {Leandro}\
  \bibnamefont {Aolita}},\ }\bibfield  {title} {\enquote {\bibinfo {title}
  {Bipartite postquantum steering in generalized scenarios},}\ }\href {\doibase
  10.1103/PhysRevLett.125.050404} {\bibfield  {journal} {\bibinfo  {journal}
  {Phys. Rev. Lett.}\ }\textbf {\bibinfo {volume} {125}},\ \bibinfo {pages}
  {050404} (\bibinfo {year} {2020})}\BibitemShut {NoStop}%
\bibitem [{\citenamefont {Lo}(2000)}]{RSPlo}%
  \BibitemOpen
  \bibfield  {author} {\bibinfo {author} {\bibfnamefont {Hoi-Kwong}\
  \bibnamefont {Lo}},\ }\bibfield  {title} {\enquote {\bibinfo {title}
  {Classical-communication cost in distributed quantum-information processing:
  A generalization of quantum-communication complexity},}\ }\href {\doibase
  10.1103/PhysRevA.62.012313} {\bibfield  {journal} {\bibinfo  {journal} {Phys.
  Rev. A}\ }\textbf {\bibinfo {volume} {62}},\ \bibinfo {pages} {012313}
  (\bibinfo {year} {2000})}\BibitemShut {NoStop}%
\bibitem [{\citenamefont {Leung}\ and\ \citenamefont {Shor}(2003)}]{RSPdebbie}%
  \BibitemOpen
  \bibfield  {author} {\bibinfo {author} {\bibfnamefont {Debbie~W.}\
  \bibnamefont {Leung}}\ and\ \bibinfo {author} {\bibfnamefont {Peter~W.}\
  \bibnamefont {Shor}},\ }\bibfield  {title} {\enquote {\bibinfo {title}
  {Oblivious remote state preparation},}\ }\href {\doibase
  10.1103/PhysRevLett.90.127905} {\bibfield  {journal} {\bibinfo  {journal}
  {Phys. Rev. Lett.}\ }\textbf {\bibinfo {volume} {90}},\ \bibinfo {pages}
  {127905} (\bibinfo {year} {2003})}\BibitemShut {NoStop}%
\bibitem [{\citenamefont {Peters}\ \emph {et~al.}(2005)\citenamefont {Peters},
  \citenamefont {Barreiro}, \citenamefont {Goggin}, \citenamefont {Wei},\ and\
  \citenamefont {Kwiat}}]{RSPexp}%
  \BibitemOpen
  \bibfield  {author} {\bibinfo {author} {\bibfnamefont {Nicholas~A.}\
  \bibnamefont {Peters}}, \bibinfo {author} {\bibfnamefont {Julio~T.}\
  \bibnamefont {Barreiro}}, \bibinfo {author} {\bibfnamefont {Michael~E.}\
  \bibnamefont {Goggin}}, \bibinfo {author} {\bibfnamefont {Tzu-Chieh}\
  \bibnamefont {Wei}}, \ and\ \bibinfo {author} {\bibfnamefont {Paul~G.}\
  \bibnamefont {Kwiat}},\ }\bibfield  {title} {\enquote {\bibinfo {title}
  {Remote state preparation: Arbitrary remote control of photon
  polarization},}\ }\href {\doibase 10.1103/PhysRevLett.94.150502} {\bibfield
  {journal} {\bibinfo  {journal} {Phys. Rev. Lett.}\ }\textbf {\bibinfo
  {volume} {94}},\ \bibinfo {pages} {150502} (\bibinfo {year}
  {2005})}\BibitemShut {NoStop}%
\bibitem [{\citenamefont {Childs}(2005)}]{childs2005blind}%
  \BibitemOpen
  \bibfield  {author} {\bibinfo {author} {\bibfnamefont {Andrew~M.}\
  \bibnamefont {Childs}},\ }\bibfield  {title} {\enquote {\bibinfo {title}
  {Secure assisted quantum computation},}\ }\href {\doibase
  10.5555/2011670.2011674} {\bibfield  {journal} {\bibinfo  {journal} {Quantum
  Info. Comput.}\ }\textbf {\bibinfo {volume} {5}},\ \bibinfo {pages}
  {456–466} (\bibinfo {year} {2005})}\BibitemShut {NoStop}%
\bibitem [{\citenamefont {Chiribella}\ \emph {et~al.}(2010)\citenamefont
  {Chiribella}, \citenamefont {D’Ariano},\ and\ \citenamefont
  {Perinotti}}]{chiribella2010probabilistic}%
  \BibitemOpen
  \bibfield  {author} {\bibinfo {author} {\bibfnamefont {Giulio}\ \bibnamefont
  {Chiribella}}, \bibinfo {author} {\bibfnamefont {Giacomo~Mauro}\ \bibnamefont
  {D’Ariano}}, \ and\ \bibinfo {author} {\bibfnamefont {Paolo}\ \bibnamefont
  {Perinotti}},\ }\bibfield  {title} {\enquote {\bibinfo {title} {Probabilistic
  theories with purification},}\ }\href {\doibase 10.1103/PhysRevA.81.062348}
  {\bibfield  {journal} {\bibinfo  {journal} {Phys. Rev. A}\ }\textbf {\bibinfo
  {volume} {81}},\ \bibinfo {pages} {062348} (\bibinfo {year}
  {2010})}\BibitemShut {NoStop}%
\bibitem [{\citenamefont {Hardy}(2011)}]{hardy2011reformulating}%
  \BibitemOpen
  \bibfield  {author} {\bibinfo {author} {\bibfnamefont {Lucien}\ \bibnamefont
  {Hardy}},\ }\bibfield  {title} {\enquote {\bibinfo {title} {Reformulating and
  reconstructing quantum theory},}\ }\href@noop {} {\bibfield  {journal}
  {\bibinfo  {journal} {Preprint at \url{https://arxiv.org/abs/1104.2066}}\ }
  (\bibinfo {year} {2011})}\BibitemShut {NoStop}%
\bibitem [{\citenamefont {Barrett}(2007)}]{barrett2007gpt}%
  \BibitemOpen
  \bibfield  {author} {\bibinfo {author} {\bibfnamefont {Jonathan}\
  \bibnamefont {Barrett}},\ }\bibfield  {title} {\enquote {\bibinfo {title}
  {Information processing in generalized probabilistic theories},}\ }\href
  {\doibase 10.1103/PhysRevA.75.032304} {\bibfield  {journal} {\bibinfo
  {journal} {Phys. Rev. A}\ }\textbf {\bibinfo {volume} {75}},\ \bibinfo
  {pages} {032304} (\bibinfo {year} {2007})}\BibitemShut {NoStop}%
\bibitem [{\citenamefont {Hardy}(2001)}]{hardy2001quantum}%
  \BibitemOpen
  \bibfield  {author} {\bibinfo {author} {\bibfnamefont {Lucien}\ \bibnamefont
  {Hardy}},\ }\bibfield  {title} {\enquote {\bibinfo {title} {Quantum theory
  from five reasonable axioms},}\ }\href@noop {} {\bibfield  {journal}
  {\bibinfo  {journal} {Preprint at
  \url{https://arxiv.org/abs/quant-ph/0101012}}\ } (\bibinfo {year}
  {2001})}\BibitemShut {NoStop}%
\bibitem [{\citenamefont {Dakić}\ and\ \citenamefont
  {Brukner}(2011)}]{dakic2009quantum}%
  \BibitemOpen
  \bibfield  {author} {\bibinfo {author} {\bibfnamefont {Borivoje}\
  \bibnamefont {Dakić}}\ and\ \bibinfo {author} {\bibfnamefont {Časlav}\
  \bibnamefont {Brukner}},\ }\enquote {\bibinfo {title} {Quantum theory and
  beyond: Is entanglement special?}}\ in\ \href {\doibase
  10.1017/CBO9780511976971.011} {\emph {\bibinfo {booktitle} {Deep Beauty:
  Understanding the Quantum World through Mathematical Innovation}}},
  \ \bibinfo  {publisher} {Cambridge University
  Press},\ \bibinfo {year} {2011}
  \BibitemShut {NoStop}%
\bibitem [{\citenamefont {Chiribella}\ \emph {et~al.}(2011)\citenamefont
  {Chiribella}, \citenamefont {D’Ariano},\ and\ \citenamefont
  {Perinotti}}]{chiribella2011informational}%
  \BibitemOpen
  \bibfield  {author} {\bibinfo {author} {\bibfnamefont {Giulio}\ \bibnamefont
  {Chiribella}}, \bibinfo {author} {\bibfnamefont {Giacomo~Mauro}\ \bibnamefont
  {D’Ariano}}, \ and\ \bibinfo {author} {\bibfnamefont {Paolo}\ \bibnamefont
  {Perinotti}},\ }\bibfield  {title} {\enquote {\bibinfo {title} {Informational
  derivation of quantum theory},}\ }\href {\doibase
  https://doi.org/10.1103/PhysRevA.84.012311} {\bibfield  {journal} {\bibinfo
  {journal} {Phys. Rev. A}\ }\textbf {\bibinfo {volume} {84}},\ \bibinfo
  {pages} {012311} (\bibinfo {year} {2011})}\BibitemShut {NoStop}%
\bibitem [{\citenamefont {Barnum}\ \emph {et~al.}(2014)\citenamefont {Barnum},
  \citenamefont {M{\"u}ller},\ and\ \citenamefont {Ududec}}]{barnum2014higher}%
  \BibitemOpen
  \bibfield  {author} {\bibinfo {author} {\bibfnamefont {Howard}\ \bibnamefont
  {Barnum}}, \bibinfo {author} {\bibfnamefont {Markus~P}\ \bibnamefont
  {M{\"u}ller}}, \ and\ \bibinfo {author} {\bibfnamefont {Cozmin}\ \bibnamefont
  {Ududec}},\ }\bibfield  {title} {\enquote {\bibinfo {title} {Higher-order
  interference and single-system postulates characterizing quantum theory},}\
  }\href {\doibase https://doi.org/10.1088/1367-2630/16/12/123029} {\bibfield
  {journal} {\bibinfo  {journal} {New Journal of Physics}\ }\textbf {\bibinfo
  {volume} {16}},\ \bibinfo {pages} {123029} (\bibinfo {year}
  {2014})}\BibitemShut {NoStop}%
\bibitem [{\citenamefont {Barrett}\ \emph {et~al.}(2019)\citenamefont
  {Barrett}, \citenamefont {de~Beaudrap}, \citenamefont {Hoban},\ and\
  \citenamefont {Lee}}]{barrett2019computational}%
  \BibitemOpen
  \bibfield  {author} {\bibinfo {author} {\bibfnamefont {Jonathan}\
  \bibnamefont {Barrett}}, \bibinfo {author} {\bibfnamefont {Niel}\
  \bibnamefont {de~Beaudrap}}, \bibinfo {author} {\bibfnamefont {Matty~J}\
  \bibnamefont {Hoban}}, \ and\ \bibinfo {author} {\bibfnamefont
  {Ciar{\'a}n~M}\ \bibnamefont {Lee}},\ }\bibfield  {title} {\enquote {\bibinfo
  {title} {The computational landscape of general physical theories},}\ }\href
  {\doibase https://doi.org/10.1038/s41534-019-0156-9} {\bibfield  {journal}
  {\bibinfo  {journal} {npj Quantum Inf.}\ }\textbf {\bibinfo {volume} {5}},\
  \bibinfo {pages} {1--10} (\bibinfo {year} {2019})}\BibitemShut {NoStop}%
\bibitem [{\citenamefont {Lee}\ and\ \citenamefont
  {Selby}(2016)}]{lee2016generalised}%
  \BibitemOpen
  \bibfield  {author} {\bibinfo {author} {\bibfnamefont {Ciar{\'a}n~M}\
  \bibnamefont {Lee}}\ and\ \bibinfo {author} {\bibfnamefont {John~H}\
  \bibnamefont {Selby}},\ }\bibfield  {title} {\enquote {\bibinfo {title}
  {Generalised phase kick-back: the structure of computational algorithms from
  physical principles},}\ }\href {\doibase
  https://doi.org/10.1088/1367-2630/18/3/033023} {\bibfield  {journal}
  {\bibinfo  {journal} {New J. Phys.}\ }\textbf {\bibinfo {volume} {18}},\
  \bibinfo {pages} {033023} (\bibinfo {year} {2016})}\BibitemShut {NoStop}%
\bibitem [{\citenamefont {Barnum}\ \emph {et~al.}(2018)\citenamefont {Barnum},
  \citenamefont {Lee},\ and\ \citenamefont {Selby}}]{barnum2018oracles}%
  \BibitemOpen
  \bibfield  {author} {\bibinfo {author} {\bibfnamefont {Howard}\ \bibnamefont
  {Barnum}}, \bibinfo {author} {\bibfnamefont {Ciar{\'a}n~M}\ \bibnamefont
  {Lee}}, \ and\ \bibinfo {author} {\bibfnamefont {John~H}\ \bibnamefont
  {Selby}},\ }\bibfield  {title} {\enquote {\bibinfo {title} {Oracles and query
  lower bounds in generalised probabilistic theories},}\ }\href {\doibase
  https://doi.org/10.1007/s10701-018-0198-4} {\bibfield  {journal} {\bibinfo
  {journal} {Found. Phys.}\ }\textbf {\bibinfo {volume} {48}},\ \bibinfo
  {pages} {954--981} (\bibinfo {year} {2018})}\BibitemShut {NoStop}%
\bibitem [{\citenamefont {Lee}\ and\ \citenamefont
  {Barrett}(2015)}]{lee2015computation}%
  \BibitemOpen
  \bibfield  {author} {\bibinfo {author} {\bibfnamefont {Ciar{\'a}n~M}\
  \bibnamefont {Lee}}\ and\ \bibinfo {author} {\bibfnamefont {Jonathan}\
  \bibnamefont {Barrett}},\ }\bibfield  {title} {\enquote {\bibinfo {title}
  {Computation in generalised probabilisitic theories},}\ }\href {\doibase
  https://doi.org/10.1088/1367-2630/17/8/083001} {\bibfield  {journal}
  {\bibinfo  {journal} {New J. Phys.}\ }\textbf {\bibinfo {volume} {17}},\
  \bibinfo {pages} {083001} (\bibinfo {year} {2015})}\BibitemShut {NoStop}%
\bibitem [{\citenamefont {Krumm}\ and\ \citenamefont
  {M{\"u}ller}(2019)}]{krumm2019quantum}%
  \BibitemOpen
  \bibfield  {author} {\bibinfo {author} {\bibfnamefont {Marius}\ \bibnamefont
  {Krumm}}\ and\ \bibinfo {author} {\bibfnamefont {Markus~P}\ \bibnamefont
  {M{\"u}ller}},\ }\bibfield  {title} {\enquote {\bibinfo {title} {Quantum
  computation is the unique reversible circuit model for which bits are
  balls},}\ }\href {\doibase https://doi.org/10.1038/s41534-018-0123-x}
  {\bibfield  {journal} {\bibinfo  {journal} {npj Quantum Inf.}\ }\textbf
  {\bibinfo {volume} {5}},\ \bibinfo {pages} {1--8} (\bibinfo {year}
  {2019})}\BibitemShut {NoStop}%
\bibitem [{\citenamefont {Garner}(2018)}]{garner2018interferometric}%
  \BibitemOpen
  \bibfield  {author} {\bibinfo {author} {\bibfnamefont {Andrew~JP}\
  \bibnamefont {Garner}},\ }\bibfield  {title} {\enquote {\bibinfo {title}
  {Interferometric computation beyond quantum theory},}\ }\href {\doibase
  https://doi.org/10.1007/s10701-018-0142-7} {\bibfield  {journal} {\bibinfo
  {journal} {Found. Phys.}\ }\textbf {\bibinfo {volume} {48}},\ \bibinfo
  {pages} {886--909} (\bibinfo {year} {2018})}\BibitemShut {NoStop}%
\bibitem [{\citenamefont {Lee}\ and\ \citenamefont
  {Selby}(2016)}]{lee2016deriving}%
  \BibitemOpen
  \bibfield  {author} {\bibinfo {author} {\bibfnamefont {Ciar{\'a}n M}\ \bibnamefont
  {Lee}}, \ and\ \bibinfo {author} {\bibfnamefont {John H}\ \bibnamefont
  {Selby}},\ }
  \bibfield  {title} {\enquote {\bibinfo {title} {Deriving Grover's lower bound from simple physical principles},}\ }\href {\doibase 10.1088/1367-2630/18/9/093047} 
  {\bibfield  {journal} {\bibinfo  {journal} {New J. Phys.}\ }\textbf {\bibinfo {volume} {18}},\ \bibinfo {pages} {093047}
  (\bibinfo {year} {2016})}\BibitemShut {NoStop}%
\bibitem [{\citenamefont {Barnum}\ and\ \citenamefont
  {Wilce}(2011)}]{barnum2011information}%
  \BibitemOpen
  \bibfield  {author} {\bibinfo {author} {\bibfnamefont {Howard}\ \bibnamefont
  {Barnum}}\ and\ \bibinfo {author} {\bibfnamefont {Alexander}\ \bibnamefont
  {Wilce}},\ }\bibfield  {title} {\enquote {\bibinfo {title} {Information
  processing in convex operational theories},}\ }\href {\doibase
  https://doi.org/10.1016/j.entcs.2011.01.002} {\bibfield  {journal} {\bibinfo
  {journal} {ENTCS}\ }\textbf {\bibinfo {volume} {270}},\ \bibinfo {pages}
  {3--15} (\bibinfo {year} {2011})}\BibitemShut {NoStop}%
\bibitem [{\citenamefont {Lami}\ \emph {et~al.}(2018)\citenamefont {Lami},
  \citenamefont {Palazuelos},\ and\ \citenamefont {Winter}}]{lami2018ultimate}%
  \BibitemOpen
  \bibfield  {author} {\bibinfo {author} {\bibfnamefont {Ludovico}\
  \bibnamefont {Lami}}, \bibinfo {author} {\bibfnamefont {Carlos}\ \bibnamefont
  {Palazuelos}}, \ and\ \bibinfo {author} {\bibfnamefont {Andreas}\
  \bibnamefont {Winter}},\ }\bibfield  {title} {\enquote {\bibinfo {title}
  {Ultimate data hiding in quantum mechanics and beyond},}\ }\href {\doibase
  https://doi.org/10.1007/s00220-018-3154-4} {\bibfield  {journal} {\bibinfo
  {journal} {Commun. Math. Phys.}\ }\textbf {\bibinfo {volume} {361}},\
  \bibinfo {pages} {661--708} (\bibinfo {year} {2018})}\BibitemShut {NoStop}%
\bibitem [{\citenamefont {Sikora}\ and\ \citenamefont
  {Selby}(2020)}]{sikora2020impossibility}%
  \BibitemOpen
  \bibfield  {author} {\bibinfo {author} {\bibfnamefont {Jamie}\ \bibnamefont
  {Sikora}}\ and\ \bibinfo {author} {\bibfnamefont {John~H}\ \bibnamefont
  {Selby}},\ }\bibfield  {title} {\enquote {\bibinfo {title} {Impossibility of
  coin flipping in generalized probabilistic theories via discretizations of
  semi-infinite programs},}\ }\href {\doibase
  https://doi.org/10.1103/PhysRevResearch.2.043128} {\bibfield  {journal}
  {\bibinfo  {journal} {Phys. Rev. Research}\ }\textbf {\bibinfo {volume}
  {2}},\ \bibinfo {pages} {043128} (\bibinfo {year} {2020})}\BibitemShut
  {NoStop}%
\bibitem [{\citenamefont {Selby}\ and\ \citenamefont
  {Sikora}(2018)}]{selby2018make}%
  \BibitemOpen
  \bibfield  {author} {\bibinfo {author} {\bibfnamefont {John~H}\ \bibnamefont
  {Selby}}\ and\ \bibinfo {author} {\bibfnamefont {Jamie}\ \bibnamefont
  {Sikora}},\ }\bibfield  {title} {\enquote {\bibinfo {title} {How to make
  unforgeable money in generalised probabilistic theories},}\ }\href {\doibase
  https://doi.org/10.22331/q-2018-11-02-103} {\bibfield  {journal} {\bibinfo
  {journal} {Quantum}\ }\textbf {\bibinfo {volume} {2}},\ \bibinfo {pages}
  {103} (\bibinfo {year} {2018})}\BibitemShut {NoStop}%
\bibitem [{\citenamefont {Vazirani}\ and\ \citenamefont {Vidick}(2014)}]{QKD3}%
  \BibitemOpen
  \bibfield  {author} {\bibinfo {author} {\bibfnamefont {Umesh}\ \bibnamefont
  {Vazirani}}\ and\ \bibinfo {author} {\bibfnamefont {Thomas}\ \bibnamefont
  {Vidick}},\ }\bibfield  {title} {\enquote {\bibinfo {title} {Fully
  device-independent quantum key distribution},}\ }\href {\doibase
  10.1103/PhysRevLett.113.140501} {\bibfield  {journal} {\bibinfo  {journal}
  {Phys. Rev. Lett.}\ }\textbf {\bibinfo {volume} {113}},\ \bibinfo {pages}
  {140501} (\bibinfo {year} {2014})}\BibitemShut {NoStop}%
\bibitem [{\citenamefont {Short}\ and\ \citenamefont
  {Barrett}(2010)}]{Short_2010}%
  \BibitemOpen
  \bibfield  {author} {\bibinfo {author} {\bibfnamefont {Anthony~J}\
  \bibnamefont {Short}}\ and\ \bibinfo {author} {\bibfnamefont {Jonathan}\
  \bibnamefont {Barrett}},\ }\bibfield  {title} {\enquote {\bibinfo {title}
  {Strong nonlocality: a trade-off between states and measurements},}\ }\href
  {\doibase 10.1088/1367-2630/12/3/033034} {\bibfield  {journal} {\bibinfo
  {journal} {New Journal of Physics}\ }\textbf {\bibinfo {volume} {12}},\
  \bibinfo {pages} {033034} (\bibinfo {year} {2010})}\BibitemShut {NoStop}%
\bibitem [{\citenamefont {Gross}\ \emph {et~al.}(2010)\citenamefont {Gross},
  \citenamefont {M{\"u}ller}, \citenamefont {Colbeck},\ and\ \citenamefont
  {Dahlsten}}]{gross2010all}%
  \BibitemOpen
  \bibfield  {author} {\bibinfo {author} {\bibfnamefont {David}\ \bibnamefont
  {Gross}}, \bibinfo {author} {\bibfnamefont {Markus}\ \bibnamefont
  {M{\"u}ller}}, \bibinfo {author} {\bibfnamefont {Roger}\ \bibnamefont
  {Colbeck}}, \ and\ \bibinfo {author} {\bibfnamefont {Oscar~CO}\ \bibnamefont
  {Dahlsten}},\ }\bibfield  {title} {\enquote {\bibinfo {title} {All reversible
  dynamics in maximally nonlocal theories are trivial},}\ }\href {\doibase
  https://doi.org/10.1103/PhysRevLett.104.080402} {\bibfield  {journal}
  {\bibinfo  {journal} {Phys. Rev. Lett.}\ }\textbf {\bibinfo {volume} {104}},\
  \bibinfo {pages} {080402} (\bibinfo {year} {2010})}\BibitemShut {NoStop}%
\bibitem [{\citenamefont {Pusey}(2013)}]{pusey2013negativity}%
  \BibitemOpen
  \bibfield  {author} {\bibinfo {author} {\bibfnamefont {Matthew~F}\
  \bibnamefont {Pusey}},\ }\bibfield  {title} {\enquote {\bibinfo {title}
  {Negativity and steering: A stronger peres conjecture},}\ }\href {\doibase
  10.1103/PhysRevA.88.032313} {\bibfield  {journal} {\bibinfo  {journal} {Phys.
  Rev. A}\ }\textbf {\bibinfo {volume} {88}},\ \bibinfo {pages} {032313}
  (\bibinfo {year} {2013})}\BibitemShut {NoStop}%
\bibitem [{\citenamefont {Hoban}\ and\ \citenamefont {Sainz}(2018)}]{pqsc}%
  \BibitemOpen
  \bibfield  {author} {\bibinfo {author} {\bibfnamefont {Matty~J}\ \bibnamefont
  {Hoban}}\ and\ \bibinfo {author} {\bibfnamefont {Ana~Bel{\'e}n}\ \bibnamefont
  {Sainz}},\ }\bibfield  {title} {\enquote {\bibinfo {title} {A channel-based
  framework for steering, non-locality and beyond},}\ }\href {\doibase
  https://doi.org/10.1088/1367-2630/aabea8} {\bibfield  {journal} {\bibinfo
  {journal} {New J. Phys.}\ }\textbf {\bibinfo {volume} {20}},\ \bibinfo
  {pages} {053048} (\bibinfo {year} {2018})}\BibitemShut {NoStop}%
\bibitem [{\citenamefont {{Ana Bel{\'e}n Sainz, Leandro Aolita, Marco Piani,
  Matty J.~Hoban, and Paul Skrzypczyk}}(2018)}]{pqso}%
  \BibitemOpen
  \bibfield  {author} {\bibinfo {author} {\bibnamefont {{Ana Bel{\'e}n Sainz,
  Leandro Aolita, Marco Piani, Matty J.~Hoban, and Paul Skrzypczyk}}},\
  }\bibfield  {title} {\enquote {\bibinfo {title} {A formalism for steering
  with local quantum measurements},}\ }\href {\doibase
  https://doi.org/10.1088/1367-2630/aad8df} {\bibfield  {journal} {\bibinfo
  {journal} {New J. Phys.}\ }\textbf {\bibinfo {volume} {20}},\ \bibinfo
  {pages} {083040} (\bibinfo {year} {2018})}\BibitemShut {NoStop}%
\bibitem [{\citenamefont {Ac\'{\i}n}\ \emph {et~al.}(2010)\citenamefont
  {Ac\'{\i}n} \emph {et~al.}}]{toniOform}%
  \BibitemOpen
  \bibfield  {author} {\bibinfo {author} {\bibfnamefont {A.}~\bibnamefont
  {Ac\'{\i}n}} \emph {et~al.},\ }\bibfield  {title} {\enquote {\bibinfo {title}
  {Unified framework for correlations in terms of local quantum observables},}\
  }\href {\doibase 10.1103/PhysRevLett.104.140404} {\bibfield  {journal}
  {\bibinfo  {journal} {Phys. Rev. Lett.}\ }\textbf {\bibinfo {volume} {104}},\
  \bibinfo {pages} {140404} (\bibinfo {year} {2010})}\BibitemShut {NoStop}%
\bibitem [{\citenamefont {{Barnum}}\ \emph {et~al.}(2005)\citenamefont
  {{Barnum}}, \citenamefont {{Fuchs}}, \citenamefont {{Renes}},\ and\
  \citenamefont {{Wilce}}}]{barnum2005influence}%
  \BibitemOpen
  \bibfield  {author} {\bibinfo {author} {\bibfnamefont {Howard}\ \bibnamefont
  {{Barnum}}}, \bibinfo {author} {\bibfnamefont {Christopher~A.}\ \bibnamefont
  {{Fuchs}}}, \bibinfo {author} {\bibfnamefont {Joseph~M.}\ \bibnamefont
  {{Renes}}}, \ and\ \bibinfo {author} {\bibfnamefont {Alexander}\ \bibnamefont
  {{Wilce}}},\ }\bibfield  {title} {\enquote {\bibinfo {title} {{Influence-free
  states on compound quantum systems}},}\ }\href
  {https://arxiv.org/abs/quant-ph/0507108} {\bibfield  {journal} {\bibinfo
  {journal} {Preprint at \url{https://arxiv.org/abs/quant-ph/0507108}}\ }
  (\bibinfo {year} {2005})}\BibitemShut {NoStop}%
\bibitem [{\citenamefont {Cavalcanti}\ \emph {et~al.}(2015)\citenamefont
  {Cavalcanti}, \citenamefont {Skrzypczyk}, \citenamefont {Aguilar},
  \citenamefont {Nery}, \citenamefont {Ribeiro},\ and\ \citenamefont
  {Walborn}}]{cavalcanti2015detection}%
  \BibitemOpen
  \bibfield  {author} {\bibinfo {author} {\bibfnamefont {Daniel}\ \bibnamefont
  {Cavalcanti}}, \bibinfo {author} {\bibfnamefont {Paul}\ \bibnamefont
  {Skrzypczyk}}, \bibinfo {author} {\bibfnamefont {GH}~\bibnamefont {Aguilar}},
  \bibinfo {author} {\bibfnamefont {RV}~\bibnamefont {Nery}}, \bibinfo {author}
  {\bibfnamefont {PH~Souto}\ \bibnamefont {Ribeiro}}, \ and\ \bibinfo {author}
  {\bibfnamefont {SP}~\bibnamefont {Walborn}},\ }\bibfield  {title} {\enquote
  {\bibinfo {title} {Detection of entanglement in asymmetric quantum networks
  and multipartite quantum steering},}\ }\href {\doibase
  https://doi.org/10.1038/ncomms8941} {\bibfield  {journal} {\bibinfo
  {journal} {Nat. Commun.}\ }\textbf {\bibinfo {volume} {6}},\ \bibinfo {pages}
  {1--6} (\bibinfo {year} {2015})}\BibitemShut {NoStop}%
\bibitem [{\citenamefont {Gisin}(1989)}]{gisin1989stochastic}%
  \BibitemOpen
  \bibfield  {author} {\bibinfo {author} {\bibfnamefont {Nicolas}\ \bibnamefont
  {Gisin}},\ }\bibfield  {title} {\enquote {\bibinfo {title} {Stochastic
  quantum dynamics and relativity},}\ }\href@noop {} {\bibfield  {journal}
  {\bibinfo  {journal} {Helvetica Physica Acta}\ }\textbf {\bibinfo {volume}
  {62}},\ \bibinfo {pages} {363--371} (\bibinfo {year} {1989})}\BibitemShut
  {NoStop}%
\bibitem [{\citenamefont {Hughston}\ \emph {et~al.}(1993)\citenamefont
  {Hughston}, \citenamefont {Jozsa},\ and\ \citenamefont
  {Wootters}}]{hughston1993complete}%
  \BibitemOpen
  \bibfield  {author} {\bibinfo {author} {\bibfnamefont {Lane~P}\ \bibnamefont
  {Hughston}}, \bibinfo {author} {\bibfnamefont {Richard}\ \bibnamefont
  {Jozsa}}, \ and\ \bibinfo {author} {\bibfnamefont {William~K}\ \bibnamefont
  {Wootters}},\ }\bibfield  {title} {\enquote {\bibinfo {title} {A complete
  classification of quantum ensembles having a given density matrix},}\ }\href
  {\doibase https://doi.org/10.1016/0375-9601(93)90880-9} {\bibfield  {journal}
  {\bibinfo  {journal} {Phys. Lett. A}\ }\textbf {\bibinfo {volume} {183}},\
  \bibinfo {pages} {14--18} (\bibinfo {year} {1993})}\BibitemShut {NoStop}%
\bibitem [{\citenamefont {Colbeck}\ and\ \citenamefont {Renner}(2012)}]{RAND1}%
  \BibitemOpen
  \bibfield  {author} {\bibinfo {author} {\bibfnamefont {Roger}\ \bibnamefont
  {Colbeck}}\ and\ \bibinfo {author} {\bibfnamefont {Renato}\ \bibnamefont
  {Renner}},\ }\bibfield  {title} {\enquote {\bibinfo {title} {Free randomness
  can be amplified},}\ }\href {\doibase https://doi.org/10.1038/nphys2300}
  {\bibfield  {journal} {\bibinfo  {journal} {Nat. Phys.}\ }\textbf {\bibinfo
  {volume} {8}},\ \bibinfo {pages} {450--453} (\bibinfo {year}
  {2012})}\BibitemShut {NoStop}%
\bibitem [{\citenamefont {Pironio}\ \emph {et~al.}(2010)\citenamefont {Pironio}
  \emph {et~al.}}]{RAND2}%
  \BibitemOpen
  \bibfield  {author} {\bibinfo {author} {\bibfnamefont {Stefano}\ \bibnamefont
  {Pironio}} \emph {et~al.},\ }\bibfield  {title} {\enquote {\bibinfo {title}
  {Random numbers certified by bell’s theorem},}\ }\href {\doibase
  https://doi.org/10.1038/nature09008} {\bibfield  {journal} {\bibinfo
  {journal} {Nature}\ }\textbf {\bibinfo {volume} {464}},\ \bibinfo {pages}
  {1021--1024} (\bibinfo {year} {2010})}\BibitemShut {NoStop}%
\bibitem [{\citenamefont {Hardy}\ and\ \citenamefont
  {Wootters}(2012)}]{hardy2012limited}%
  \BibitemOpen
  \bibfield  {author} {\bibinfo {author} {\bibfnamefont {Lucien}\ \bibnamefont
  {Hardy}}\ and\ \bibinfo {author} {\bibfnamefont {William~K}\ \bibnamefont
  {Wootters}},\ }\bibfield  {title} {\enquote {\bibinfo {title} {Limited holism
  and real-vector-space quantum theory},}\ }\href {\doibase
  https://doi.org/10.1007/s10701-011-9616-6} {\bibfield  {journal} {\bibinfo
  {journal} {Found. Phys.}\ }\textbf {\bibinfo {volume} {42}},\ \bibinfo
  {pages} {454--473} (\bibinfo {year} {2012})}\BibitemShut {NoStop}%
\bibitem [{\citenamefont {Wiesner}(1983)}]{RAC1}%
  \BibitemOpen
  \bibfield  {author} {\bibinfo {author} {\bibfnamefont {Stephen}\ \bibnamefont
  {Wiesner}},\ }\bibfield  {title} {\enquote {\bibinfo {title} {Conjugate
  coding},}\ }\href {\doibase 10.1145/1008908.1008920} {\bibfield  {journal}
  {\bibinfo  {journal} {SIGACT News}\ }\textbf {\bibinfo {volume} {15}},\
  \bibinfo {pages} {78–88} (\bibinfo {year} {1983})}\BibitemShut {NoStop}%
\bibitem [{\citenamefont {Ambainis}\ \emph {et~al.}(1999)\citenamefont
  {Ambainis}, \citenamefont {Nayak}, \citenamefont {Ta-Shma},\ and\
  \citenamefont {Vazirani}}]{RAC2}%
  \BibitemOpen
  \bibfield  {author} {\bibinfo {author} {\bibfnamefont {Andris}\ \bibnamefont
  {Ambainis}}, \bibinfo {author} {\bibfnamefont {Ashwin}\ \bibnamefont
  {Nayak}}, \bibinfo {author} {\bibfnamefont {Ammon}\ \bibnamefont {Ta-Shma}},
  \ and\ \bibinfo {author} {\bibfnamefont {Umesh}\ \bibnamefont {Vazirani}},\
  }\bibfield  {title} {\enquote {\bibinfo {title} {Dense quantum coding and a
  lower bound for 1-way quantum automata},}\ }in\ \href {\doibase
  10.1145/301250.301347} {\emph {\bibinfo {booktitle} {Proceedings of the
  Thirty-First Annual ACM Symposium on Theory of Computing}}},\ \bibinfo
  {series and number} {STOC '99}\ (\bibinfo  {publisher} {Association for
  Computing Machinery},\ \bibinfo {address} {New York, NY, USA},\ \bibinfo
  {year} {1999})\ p.\ \bibinfo {pages} {376–383}\BibitemShut {NoStop}%
\bibitem [{\citenamefont {Barnum}\ \emph {et~al.}(2020)\citenamefont {Barnum},
  \citenamefont {Graydon},\ and\ \citenamefont {Wilce}}]{barnum2020composites}%
  \BibitemOpen
  \bibfield  {author} {\bibinfo {author} {\bibfnamefont {Howard}\ \bibnamefont
  {Barnum}}, \bibinfo {author} {\bibfnamefont {Matthew~A.}\ \bibnamefont
  {Graydon}}, \ and\ \bibinfo {author} {\bibfnamefont {Alexander}\ \bibnamefont
  {Wilce}},\ }\bibfield  {title} {\enquote {\bibinfo {title} {Composites and
  {C}ategories of {E}uclidean {J}ordan {A}lgebras},}\ }\href {\doibase
  10.22331/q-2020-11-08-359} {\bibfield  {journal} {\bibinfo  {journal}
  {{Quantum}}\ }\textbf {\bibinfo {volume} {4}},\ \bibinfo {pages} {359}
  (\bibinfo {year} {2020})}\BibitemShut {NoStop}%
\bibitem [{\citenamefont {Hefford}\ and\ \citenamefont
  {Gogioso}(2020)}]{hefford2020hyper}%
  \BibitemOpen
  \bibfield  {author} {\bibinfo {author} {\bibfnamefont {James}\ \bibnamefont
  {Hefford}}\ and\ \bibinfo {author} {\bibfnamefont {Stefano}\ \bibnamefont
  {Gogioso}},\ }\bibfield  {title} {\enquote {\bibinfo {title}
  {Hyper-decoherence in density hypercubes},}\ }in\ \href {\doibase
  10.4204/EPTCS.340.7} {\emph {\bibinfo {booktitle} {Proceedings of QPL 2020,
  published at EPTCS}}},\ Vol.\ \bibinfo {volume} {340}\ (\bibinfo {year}
  {2020})\ pp.\ \bibinfo {pages} {141--159}\BibitemShut {NoStop}%
\bibitem [{\citenamefont {Schmid}\ \emph {et~al.}(2021)\citenamefont {Schmid},
  \citenamefont {Du}, \citenamefont {Mudassar}, \citenamefont {Coulter-de Wit},
  \citenamefont {Rosset},\ and\ \citenamefont {Hoban}}]{schmid2020postquantum}%
  \BibitemOpen
  \bibfield  {author} {\bibinfo {author} {\bibfnamefont {David}\ \bibnamefont
  {Schmid}}, \bibinfo {author} {\bibfnamefont {Haoxing}\ \bibnamefont {Du}},
  \bibinfo {author} {\bibfnamefont {Maryam}\ \bibnamefont {Mudassar}}, \bibinfo
  {author} {\bibfnamefont {Ghi}\ \bibnamefont {Coulter-de Wit}}, \bibinfo
  {author} {\bibfnamefont {Denis}\ \bibnamefont {Rosset}}, \ and\ \bibinfo
  {author} {\bibfnamefont {Matty~J.}\ \bibnamefont {Hoban}},\ }\bibfield
  {title} {\enquote {\bibinfo {title} {Postquantum common-cause channels: the
  resource theory of local operations and shared entanglement},}\ }\href
  {\doibase 10.22331/q-2021-03-23-419} {\bibfield  {journal} {\bibinfo
  {journal} {{Quantum}}\ }\textbf {\bibinfo {volume} {5}},\ \bibinfo {pages}
  {419} (\bibinfo {year} {2021})}\BibitemShut {NoStop}%
\bibitem [{\citenamefont {Coecke}\ and\ \citenamefont
  {Kissinger}(2018)}]{coecke2018picturing}%
  \BibitemOpen
  \bibfield  {author} {\bibinfo {author} {\bibfnamefont {Bob}\ \bibnamefont
  {Coecke}}\ and\ \bibinfo {author} {\bibfnamefont {Aleks}\ \bibnamefont
  {Kissinger}},\ }\bibfield  {title} {\enquote {\bibinfo {title} {Picturing
  quantum processes},}\ }in\ \href@noop {} {\emph {\bibinfo {booktitle}
  {International Conference on Theory and Application of Diagrams}}}\ (\bibinfo
  {organization} {Springer},\ \bibinfo {year} {2018})\ pp.\ \bibinfo {pages}
  {28--31}\BibitemShut {NoStop}%
\bibitem [{\citenamefont {Gogioso}\ and\ \citenamefont
  {Scandolo}(2017)}]{gogioso2017categorical}%
  \BibitemOpen
  \bibfield  {author} {\bibinfo {author} {\bibfnamefont {Stefano}\ \bibnamefont
  {Gogioso}}\ and\ \bibinfo {author} {\bibfnamefont {Carlo~Maria}\ \bibnamefont
  {Scandolo}},\ }\bibfield  {title} {\enquote {\bibinfo {title} {Categorical
  probabilistic theories},}\ }\href@noop {} {\bibfield  {journal} {\bibinfo
  {journal} {arXiv preprint arXiv:1701.08075}\ } (\bibinfo {year}
  {2017})}\BibitemShut {NoStop}%
\bibitem [{\citenamefont {Selby}\ \emph {et~al.}(2021)\citenamefont {Selby},
  \citenamefont {Scandolo},\ and\ \citenamefont
  {Coecke}}]{selby2018reconstructing}%
  \BibitemOpen
  \bibfield  {author} {\bibinfo {author} {\bibfnamefont {John~H.}\ \bibnamefont
  {Selby}}, \bibinfo {author} {\bibfnamefont {Carlo~Maria}\ \bibnamefont
  {Scandolo}}, \ and\ \bibinfo {author} {\bibfnamefont {Bob}\ \bibnamefont
  {Coecke}},\ }\bibfield  {title} {\enquote {\bibinfo {title} {Reconstructing
  quantum theory from diagrammatic postulates},}\ }\href {\doibase
  10.22331/q-2021-04-28-445} {\bibfield  {journal} {\bibinfo  {journal}
  {{Quantum}}\ }\textbf {\bibinfo {volume} {5}},\ \bibinfo {pages} {445}
  (\bibinfo {year} {2021})}\BibitemShut {NoStop}%
\bibitem [{\citenamefont {Coecke}\ and\ \citenamefont
  {Kissinger}(2016)}]{coecke2016categorical}%
  \BibitemOpen
  \bibfield  {author} {\bibinfo {author} {\bibfnamefont {Bob}\ \bibnamefont
  {Coecke}}\ and\ \bibinfo {author} {\bibfnamefont {Aleks}\ \bibnamefont
  {Kissinger}},\ }\bibfield  {title} {\enquote {\bibinfo {title} {Categorical
  quantum mechanics ii: Classical-quantum interaction},}\ }\href@noop {}
  {\bibfield  {journal} {\bibinfo  {journal} {International Journal of Quantum
  Information}\ }\textbf {\bibinfo {volume} {14}},\ \bibinfo {pages} {1640020}
  (\bibinfo {year} {2016})}\BibitemShut {NoStop}%
\bibitem [{\citenamefont {Janotta}\ and\ \citenamefont
  {Lal}(2013)}]{janotta2013generalized}%
  \BibitemOpen
  \bibfield  {author} {\bibinfo {author} {\bibfnamefont {Peter}\ \bibnamefont
  {Janotta}}\ and\ \bibinfo {author} {\bibfnamefont {Raymond}\ \bibnamefont
  {Lal}},\ }\bibfield  {title} {\enquote {\bibinfo {title} {Generalized
  probabilistic theories without the no-restriction hypothesis},}\ }\href
  {\doibase https://doi.org/10.1103/PhysRevA.87.052131} {\bibfield  {journal}
  {\bibinfo  {journal} {Phys. Rev. A}\ }\textbf {\bibinfo {volume} {87}},\
  \bibinfo {pages} {052131} (\bibinfo {year} {2013})}\BibitemShut {NoStop}%
\bibitem [{\citenamefont {Beckman}\ \emph {et~al.}(2001)\citenamefont
  {Beckman}, \citenamefont {Gottesman}, \citenamefont {Nielsen},\ and\
  \citenamefont {Preskill}}]{beckman2001causal}%
  \BibitemOpen
  \bibfield  {author} {\bibinfo {author} {\bibfnamefont {David}\ \bibnamefont
  {Beckman}}, \bibinfo {author} {\bibfnamefont {Daniel}\ \bibnamefont
  {Gottesman}}, \bibinfo {author} {\bibfnamefont {M.~A.}\ \bibnamefont
  {Nielsen}}, \ and\ \bibinfo {author} {\bibfnamefont {John}\ \bibnamefont
  {Preskill}},\ }\bibfield  {title} {\enquote {\bibinfo {title} {Causal and
  localizable quantum operations},}\ }\href {\doibase
  10.1103/PhysRevA.64.052309} {\bibfield  {journal} {\bibinfo  {journal} {Phys.
  Rev. A}\ }\textbf {\bibinfo {volume} {64}},\ \bibinfo {pages} {052309}
  (\bibinfo {year} {2001})}\BibitemShut {NoStop}%
\end{thebibliography}

\appendix

\begin{center}
\vskip 1cm
\Large{Supplemental Material}
\end{center}

\section{Generalised Probabilistic Theories}
\label{ap:GPT}
\subsection{The basics}\label{se:thebasics}

In this section, we provide a description of generalised probabilistic theories (GPTs) that have the property of being locally tomographic \cite{hardy2001quantum}. Roughly speaking, the GPT framework is a general framework to formulate and describe theories (including quantum theory), that allows for the calculation of probabilities of measurement outcomes when system preparations have states associated to them. In this appendix, we aim for a description of GPTs which connects the diagrammatic \cite{chiribella2010probabilistic,
hardy2011reformulating} and the linear algebraic notations, hoping to make it useful to a broader audience. Although a more general kind of GPT could be defined, for the purpose of defining Witworld, restricting the present discussion to locally tomographic GPTs significantly simplifies the task in hand. The interested reader can find a more general definition of GPTs in, for example, Ref.~\cite{chiribella2010probabilistic}. 

In general, a GPT consists of collections of states that are associated to different system types, a rule for combining state spaces
of simple systems into state spaces of composite systems, collections of allowed transformations between these states, and effects  -- i.e., functions that associate probabilities to each outcome of each measurement for each state in the theory.
Here, as mentioned above, we focus on locally tomographic GPTs, which are those where the states of composite systems can be uniquely determined by the information given by local measurements on its parts. Each of these ingredients are defined and compared to quantum theory in what follows.

We start with the states. For each system $A$ of a GPT, there is a vector space $V^{A}$ associated to it. A convex subset $ \Omega^{A} \subset V^{A}$, called the state space,
 defines the allowed states of the system $A$. This subset has dimension $\mathsf{dim}( \Omega^{A} ) = \mathsf{dim} (V^{A} ) - 1$. We require every state in $ \Omega^{A}$ to be normalised in a sense to be defined later in this section when we introduce effects. The convexity property means that, if $ \sigma \in \Omega^{A} $ and $ \rho \in \Omega^{A}$, then $ p \sigma + (1-p) \rho  \in \Omega^{A}$ for any $p \in [0,1]$. We require convexity so that the GPT accommodates statistical mixtures of state preparations in a natural way. Diagrammatically, the system $A$ and its associated vector space $ V^{A}$ are represented by a labelled wire: 
\begin{equation}\label{eq:thewire}
\begin{tikzpicture}
	\begin{pgfonlayer}{nodelayer}
		\node [style=none] (0) at (0, 0.75) {};
		\node [style=none] (1) at (0, -0.75) {};
		\node [style={right label}] (2) at (0, -0.5) {$A$};
	\end{pgfonlayer}
	\begin{pgfonlayer}{edgelayer}
		\draw (0.center) to (1.center);
	\end{pgfonlayer}
\end{tikzpicture}}.
\end{equation}
\bel{A remark on notation is in order: throughout this section (i.e., Sec.~\ref{ap:GPT}), we review definitions and properties of a general class of GPTs,which  Witworld belongs to, but we do not restrict the presentation to the latter. Hence, the wire type in Eq.~\eqref{eq:thewire} should here be understood as a generic system rather than as a classical system in Witworld (as per Eq.~\eqref{eq:thewireswit}). From Sec.~\ref{app:WW} we will shift the focus back to Witworld, and hence the notation from Eq.~\eqref{eq:thewireswit} will take precedence again.} 
In the case of quantum theory, \bel{hence}, the wires \bel{in Eq.~\eqref{eq:thewire}} represent real vector spaces of Hermitian operators on Hilbert spaces.
For instance, if $A$ is a qubit system, the wire labelled by $A$ represents the vector space $ V^{A} =\{ O \in \mathcal{L} ( \mathds{C}^{2}, \mathds{C}^{2} ) : O = O^{\dagger} \}$, where $ \mathcal{L} ( \mathds{C}^{2}, \mathds{C}^{2} )$ is the space of linear operators on $ \mathds{C}^{2}$. Then, $ \Omega^{A}$ is the set of positive operators whose trace is $1$, that is, $ \Omega^{A} = \{ \rho \in V^{A} : \rho \geq 0, \tr{\rho} = 1 \}$, which is indeed a convex set as required. It is sometimes useful, moreover, to include within the GPT formulation of quantum theory some wires that represent classical variables which store the results of measurements, see, for example, Refs.~\cite{coecke2016categorical,gogioso2017categorical,
selby2018reconstructing}. This is done in section \ref{se:ww2} of the present work.

As mentioned previously, one can construct composite systems by the combination of simpler systems. We denote by $ A \cdot B$ the system composed by a system $A$  and a system $B$. Its states belong to the set $ \Omega^{A \cdot B} \subset V^{A \cdot B} $, which is represented diagrammatically by multiple wires side by side:
\begin{equation}\label{}
\begin{tikzpicture}
	\begin{pgfonlayer}{nodelayer}
		\node [style=none] (0) at (0, 0.75) {};
		\node [style=none] (1) at (0, -0.75) {};
		\node [style={right label}] (2) at (0, -0.5) {$A\cdot B$};
	\end{pgfonlayer}
	\begin{pgfonlayer}{edgelayer}
		\draw (0.center) to (1.center);
	\end{pgfonlayer}
\end{tikzpicture}} = %
} %
\begin{tikzpicture}
	\begin{pgfonlayer}{nodelayer}
		\node [style=none] (0) at (0, 0.75) {};
		\node [style=none] (1) at (0, -0.75) {};
		\node [style={right label}] (2) at (0, -0.5) {$B$};
	\end{pgfonlayer}
	\begin{pgfonlayer}{edgelayer}
		\draw (0.center) to (1.center);
	\end{pgfonlayer}
\end{tikzpicture}}.
\end{equation}
In the locally tomographic GPTs that we consider here, such as quantum theory, $ V^{A \cdot B} = V^{A} \otimes V^{B}$. For convenience,we omit the label when the exact system being discussed is not relevant or it is clear from the context, or we use different kinds of wires to highlight the distinctions, as we do in Sec.~\ref{se:ww2}. If we want to refer diagrammatically to a specific state of $A$, that is, some element $s$ of $ \Omega^{A}$, we use a box (usually, but not necessarily, a triangular box) with an output wire $A$ connected to its top:
\begin{equation}\label{}
	s \ \ {\equiv} \ \ %
\begin{tikzpicture}
	\begin{pgfonlayer}{nodelayer}
		\node [style=none] (0) at (0, 0.75) {};
		\node [style=point] (1) at (0, -0.75) {$s$};
		\node [style={right label}] (2) at (0, 0.5) {$A$};
	\end{pgfonlayer}
	\begin{pgfonlayer}{edgelayer}
		\draw (0.center) to (1);
	\end{pgfonlayer}
\end{tikzpicture}
}.
\end{equation}
The transformations in a (tomographically local) GPT are linear functions from the vector space $V^A$ associated to a system of some type $A$ to the vector space $V^B$ associated to some system of type $B$. Hence, the set of transformations of type $A \to B$, denoted by $ \mathcal{T}^{A \to B}$ is a subset of $ \mathcal{L} ( V^{A} , V^{B} )$, the set of linear transformations from $V^{A}$ to $V^{B}$. Diagrammatically, a particular transformation $T \in \mathcal{T}^{A \to B}$ is denoted by a box with an input wire $A$ connected to its bottom and an output wire $B$ connected to its top:

\begin{equation}\label{}
	T  \ \ {\equiv} \ \  %
\InputIfFileExists{Diagrams/transAB.tikz}{}{\input{./figures/Diagrams/transAB.tikz}}.
\end{equation}
For quantum theory, the set of transformations $ \mathcal{T}^{A \to B}$ is the set of quantum operations, which correspond to completely positive trace-non-increasing maps from $V^{A}$ to $V^{B}$.

Of course, we may want to represent not just the transformation itself, but its action on a specific state. This is done by connecting the input wire of the transformation with a state of matching type:

\begin{equation}\label{}
	T(s) \ \ {\equiv} \ \ %
\InputIfFileExists{Diagrams/atransA.tikz}{}{\input{./figures/Diagrams/atransA.tikz}} \,.
\end{equation}
Note that with this, viewing  $T(s)$ as a vector can be expressed in diagrams by ``sliding'' the box representing $T$ until it merges with the box representing s:
\begin{equation}\label{}
	T \circ s = T(s) \ \ {\equiv} \ \ %
\InputIfFileExists{Diagrams/atransA.tikz}{}{\input{./figures/Diagrams/atransA.tikz}} = %
\begin{tikzpicture}
	\begin{pgfonlayer}{nodelayer}
		\node [style=point] (5) at (0, -0.5) {$T(s)$};
		\node [style=none] (6) at (0, 1) {};
		\node [style=right label] (7) at (0, 0.75) {$B$};
	\end{pgfonlayer}
	\begin{pgfonlayer}{edgelayer}
		\draw (5) to (6.center);
	\end{pgfonlayer}
\end{tikzpicture}
}.
\end{equation}
This is a manipulation of diagrams that is used often in this work. The converse operation, where we split a vector into a product where a transformation $T$ is applied on a state $s$, is also a valid manipulation where we split a diagram with only a state into one where a transformation is connected to a different state. Notice that this mirrors exactly linear algebraic operations where an equation like $ s^\prime  = T \circ s$ is used for substitutions. Furthermore, boxes representing transformations can also be connected, when the wire types match, to represent the sequential composition of them. Because a sequence of linear transformations $T$ and $U$, can also be viewed as a single transformation $U\circ T$, the composition of both, the merging of boxes can also be done with transformations that are connected:

\begin{equation}\label{}
	U \circ T \ \ {\equiv} \ \ %
\InputIfFileExists{Diagrams/transAC.tikz}{}{\input{./figures/Diagrams/transAC.tikz}}\ \ =\ \  %
\InputIfFileExists{Diagrams/transAC2.tikz}{}{\input{./figures/Diagrams/transAC2.tikz}}.
\end{equation}
Naturally, one may need to represent transformations that happen in parallel on the parts of a composite system $A \cdot B$. While in linear algebraic notation this is done with the direct product $ \otimes$, so that, for $T$ of type $A \to B$ and $V $ of type $C \to D$, we write $T \otimes V$, in diagrammatic notation we simply put $T$ and $V$ side by side:
\begin{equation}\label{}
	T \otimes V \ \ {\equiv} \ \ %
\InputIfFileExists{Diagrams/transAB.tikz}{}{\input{./figures/Diagrams/transAB.tikz}}%
\InputIfFileExists{Diagrams/transCD.tikz}{}{\input{./figures/Diagrams/transCD.tikz}}\ \ = \ \ %
\InputIfFileExists{Diagrams/transComp.tikz}{}{\input{./figures/Diagrams/transComp.tikz}},
\end{equation}
where the order of the wires (from left to right) matters just like the order of the product $T \otimes V$.

The effects of a system $A$ in a GPT are linear functionals over $V^{A}$ that evaluate to probabilities, i.e., numbers in $[0,1]$, for every valid state. This means that the set $E^{A}$ of effects of a system $A$ is a subset of $(V^{A})^* $, the dual of $V^{A}$, such that $e \in E^{A}$ implies $e(s) \in [0,1]$ for every $s \in \Omega^{A}$. Diagrammatically, the effects are represented as boxes with only inputs, so

\begin{equation}\label{}
	e \ \ {\equiv} \ \ %
\begin{tikzpicture}
	\begin{pgfonlayer}{nodelayer}
		\node [style=none] (0) at (0, -0.75) {};
		\node [style=copoint] (1) at (0, 0.75) {$e$};
		\node [style={right label}] (2) at (0, -0.5) {$A$};
	\end{pgfonlayer}
	\begin{pgfonlayer}{edgelayer}
		\draw (0.center) to (1);
	\end{pgfonlayer}
\end{tikzpicture}
}
\end{equation}
represents the element $e \in E^{A} \subset (V^{A} )^* $. Just like the linear functions, the action of $e$ on a state $s$, that is, $e(s)$, is given by connecting the input (bottom) wire of the effect to the output (top) wire of the state:

\begin{equation}\label{effectnumber}
	e(s)\ \  =\ \  %
\begin{tikzpicture}
	\begin{pgfonlayer}{nodelayer}
		\node [style=point] (0) at (0, -0.75) {$s$};
		\node [style=copoint] (1) at (0, 0.75) {$e$};
		\node [style=right label] (2) at (0, 0) {$A$};
	\end{pgfonlayer}
	\begin{pgfonlayer}{edgelayer}
		\draw [in=90, out=-90] (1) to (0);
	\end{pgfonlayer}
\end{tikzpicture}
}.
\end{equation}
Using quantum theory again as an example, its effects are trace inner products with the elements of a positive operator-valued measure (POVM). That is to say that the POVM elements give the Riesz representation of the effects of the theory. So, if $M$ is associated to the effect $e_{M}$ through the Riesz representation, then $ e_{M} ( \rho ) = \tr(M \rho )$.

Notice that Diagram \eqref{effectnumber}, unlike those in the previous examples, contains no loose wires. This means that $e(s)$ is a real number, and, similarly, any diagram in this formalism without loose wires represents a real number. Moreover, diagrams with only output (top) loose wires are always states, diagrams with only input (bottom) loose wires are always effects, and diagrams with both input and output loose wires are transformations.

Now that we have discussed the effects, we can define what it means for a state to be normalised in a GPT. This is done through a special effect, called the unit effect, which, for a system $A$, we denote by $u^{A} $. We say that a vector $s \in V^{A}$ is normalised if and only if $u^{A} (s) = 1$. Therefore, by our definition of the set of states $ \Omega^{A} $, if $s \in \Omega^{A}$, then $u^{A}(s) = 1$. This special effect is denoted by a special diagram
\begin{equation}\label{theunitefect}
\begin{tikzpicture}
	\begin{pgfonlayer}{nodelayer}
		\node [style=none] (0) at (0, -0.75) {};
		\node [style=none] (1) at (0, 0.25) {};
		\node [style=right label] (2) at (0, -0.25) {$A$};
		\node [style=upground] (3) at (0, 0.5) {};
	\end{pgfonlayer}
	\begin{pgfonlayer}{edgelayer}
		\draw [in=90, out=-90] (1.center) to (0.center);
	\end{pgfonlayer}
\end{tikzpicture}
},
\end{equation}
thus, diagrammatically, state normalisation is captured by the condition
\beq
\begin{tikzpicture}
	\begin{pgfonlayer}{nodelayer}
		\node [style=point] (0) at (0, -0.75) {$s$};
		\node [style=none] (1) at (0, 0.5) {};
		\node [style=right label] (2) at (0, 0) {$A$};
		\node [style=upground] (3) at (0, 0.75) {};
	\end{pgfonlayer}
	\begin{pgfonlayer}{edgelayer}
		\draw [in=90, out=-90] (1.center) to (0);
	\end{pgfonlayer}
\end{tikzpicture}
} \ \ = \ 1.
\eeq
In the example of quantum theory, the unit effect of any system type is the trace operation, or, in other words, the trace inner product with the identity operator.

An important definition to be made, and that appears nicely in diagrammatic notation, is that of separable effects and states. In quantum theory, a separable state is that which can be written as a convex combination of product quantum states. Here, we simply generalise that notion to any GPT state: $ s \in \Omega^{A \cdot B}$ is separable if $ s = \sum_{i} p_{i} r^{A}_{i}  \otimes r^{B}_{i} $, with $p_{i} \in [0,1]$ and $\sum_{i} p_{i} = 1$, $r_i^A \in \Omega^A$, and $r_i^B \in \Omega^B$. In diagrammatic notation, a separable state can be viewed as: 

\begin{equation}\label{}
	s = \sum_{i} p_{i} r^{A}_{i} \otimes r^{B}_{i}  \ \ {\equiv} \ \  \sum_{i} p_{i}\ %
\begin{tikzpicture}
	\begin{pgfonlayer}{nodelayer}
		\node [style=none] (0) at (0, 1) {};
		\node [style=point] (1) at (0, -0.5) {$r_{i}^{A}$};
		\node [style=right label] (2) at (0, 0.5) {$A$};
	\end{pgfonlayer}
	\begin{pgfonlayer}{edgelayer}
		\draw (0.center) to (1);
	\end{pgfonlayer}
\end{tikzpicture}
}%
\begin{tikzpicture}
	\begin{pgfonlayer}{nodelayer}
		\node [style=none] (0) at (0, 1) {};
		\node [style=point] (1) at (0, -0.5) {$r_{i}^{B}$};
		\node [style=right label] (2) at (0, 0.5) {$B$};
	\end{pgfonlayer}
	\begin{pgfonlayer}{edgelayer}
		\draw (0.center) to (1);
	\end{pgfonlayer}
\end{tikzpicture}
} \ \ = \ \ %
\InputIfFileExists{Diagrams/stateAB.tikz}{}{\input{./figures/Diagrams/stateAB.tikz}} .
\end{equation}
Separable effects are defined similarly, with its diagrammatic representation being like the one above but where the loose wires come from the bottom instead of the top.

\subsection{Further definitions}

We can use these fundamental notions to define some concepts that are necessary in this work. These are positive cones, positive vectors, local tomography, the no-restriction hypothesis, the generalised no-restriction hypothesis, the maximal tensor product, trace-preserving transformation, and positive and completely positive transformations.  Some of these are present in quantum theory, but here we need definitions that generalise them to arbitrary GPTs.

\begin{defn}[Positive Cone]
A positive cone $X_{+}$ generated by a subset $X$ of a vector space $V$ is the set of nonnegative multiples of the elements of X. That is,
\begin{equation}\label{}
	X_{+} = \{\lambda x: \lambda \geq 0, x \in X\}.
\end{equation}
\end{defn}

\begin{defn}[Positive Vector]\label{posvector}
A vector $ v$ of a vector space $V^{A}$ associated to a system of $A$ of a GPT is said to be positive, denoted $v \geq 0$, if $ v \in \Omega^{A}_{+}$, the cone generated by the set of states.
\end{defn}
Note that in quantum theory our notion of positive cones recovers the sets of positive operators from the sets of density matrices. This is because, for any quantum system $A$, the density matrices $\rho \in \Omega^{A}$ satisfy $ \rho \geq 0$  and $ \tr(\rho ) = 1$, so by multiplying then by positive numbers $ \lambda$, we are simply dropping the unit trace assumption. Hence, the cone generated is $ \Omega_{+}^{A} = \{ \rho \in V^{A} : \rho \geq 0 \}$, that is, the set of positive operators on the Hilbert space corresponding to $A$. Conversely, using the unit effects, it is always possible to recover the states from the positive cones by restricting them to normalised vectors.

\begin{defn}[Trace Non-Increasing Operation] A transformation $T$ in $ \mathcal{L} (V^{A} , V^{B} )$ is said to be trace non-increasing if $u^{B} (T (s)) \leq u^{A} (s)$ for every $s$ in $\Omega^{A} $.

\end{defn}
Diagrammatically, $T$ is trace non-increasing means, for all $s\in \Omega^A$, that:
\beq
\InputIfFileExists{Diagrams/TNI1.tikz}{}{\input{./figures/Diagrams/TNI1.tikz}}\ \ \leq\ \  %
\begin{tikzpicture}
	\begin{pgfonlayer}{nodelayer}
		\node [style=point] (0) at (0, -0.75) {$s$};
		\node [style=none] (1) at (0, 0.5) {};
		\node [style=upground] (3) at (0, 0.75) {};
		\node [style=right label] (4) at (0, 0) {$A$};
	\end{pgfonlayer}
	\begin{pgfonlayer}{edgelayer}
		\draw (1.center) to (0);
	\end{pgfonlayer}
\end{tikzpicture}
} \, .
\eeq

\begin{defn}[Trace-Preserving Operation] A transformation $T \in \mathcal{L} (V^{A} , V^{B} )$ is said to be trace preserving if $ \forall s \in \Omega^{A}, u^{B} (T(s)) = u^{A} (s) $. 
\end{defn}
Diagrammatically, $T$ is trace preserving means, for all $s\in \Omega^A$, that:
\beq
\InputIfFileExists{Diagrams/TNI1.tikz}{}{\input{./figures/Diagrams/TNI1.tikz}}\ \ =\ \ %
} \, . 
\eeq

\begin{defn}[Positive Transformation] A transformation $T \in \mathcal{L} (V^{A} , V^{B} )$ is said to be positive if it takes elements in $ \Omega^{A}_{+} $ to elements in $ \Omega^{B}_{+}$, that is, $ s \in \Omega^{A}_{+} \implies T(s) \in \Omega^{B}_{+}$.
\end{defn}
Diagrammatically, for all $s\in\Omega^A$, a positive transformation satisfies:
 \beq
\InputIfFileExists{Diagrams/Pos1.tikz}{}{\input{./figures/Diagrams/Pos1.tikz}} \ \ \in \ \ \Omega_+^B.
 \eeq

\begin{defn}[Completely Positive Transformation]\label{def:CPdef}
	A transformation $T \in \mathcal{L} (V^{A}, V^{B} )$ is said to be completely positive if it is positive and for any system C, the transformation $T \otimes \mathds{1}_{C} \in \mathcal{L}( V^{A \cdot C} , V^{B \cdot C})$ is positive, where $ \mathds{1}_{C}$ is the identity map on $V^{C}$.
\end{defn}
Diagrammatically, for all systems $C$ and all bipartite states $s\in \Omega^{AC}$, this means that:
\beq
\InputIfFileExists{Diagrams/Pos2.tikz}{}{\input{./figures/Diagrams/Pos2.tikz}} \ \ \in \ \ \Omega^{BC}_+ \, . 
\eeq

Any time that a transformation appears in a diagram, it is implied that it is a completely positive transformation for the corresponding GPT because it is an allowed transformation in said theory. The same applies for states: if they appear in a diagram, they must be positive in the corresponding GPT. So, the diagrams drawn in the beginning of this appendix are examples of positive transformations and states. Moreover, note that these notions recover those of positive, completely positive, trace preserving, and trace non-increasing maps when applied to quantum theory, because the cones are generated by the sets of states, and the unit effects are the trace operations.

\begin{defn}[Local Tomography]
A GPT is said to be locally tomographic if any state $ \rho^{A_{1} \cdot ... \cdot A_{n} }$ of a composite system $ A_{1} \cdot ... \cdot A_{n}$ can be uniquely determined by the information obtained from local effects $ \{ e^{A_{1}} \}$,..., $\{ e^{A_{n}} \}$ on its parts $A_{1}$,..., $A_{n}$.
When this holds, the unit vector for the composite system, $ u^{A_{1} \cdot ... \cdot A_{n} }$, is given by $ u^{A_{1}} \otimes ... \otimes u^{A_{n}}$.
\end{defn}
As an example of a GPT satisfying local tomography we have quantum theory. There, any $ \rho^{A_{1} \cdot ... \cdot A_{n}}$ is completely determined by a set of probabilities $ (e_{i_{1}}^{A_{1}} \otimes ... e_{i_{n}}^{A_{n}})[ \rho^{A_{1} \cdot ... \cdot A_{n}} ] $, where each local effect $ e_{i_{j}}^{A_{j}}$ denotes the inner product of the quantum state with the corresponding POVM element.

\begin{defn}[Maximal Tensor Product]\label{def:max}
The maximal tensor product, $ \otimes_{max}$ is a rule for the combination of two systems into one, say, $A$ and $B$ into $A \cdot B$, that defines the positive cone of the composite system as the largest set of vectors in $V^{A} \otimes V^{B} $ that is consistent (in the sense of producing sensible probabilities) with all the separable effects of $ A \cdot B$. That is 
\begin{equation}\label{}
	\begin{aligned}
		\Omega^{A \cdot B}_{+} = \Omega^{A}_{+} \otimes_{max} \Omega^{B}_{+} := \{ &\rho \in V^{A} \otimes V^{B} : ( e^{A} \otimes e^{B} ) [ \rho ] \geq 0 \\&\forall e^{A} \in E^{A}\,,\, e^{B} \in E^{B} \}.
	\end{aligned}
\end{equation}
\end{defn}
The maximal tensor product $ \otimes_{max}$ is associative \cite{barnum2011information}, hence one can unambiguously write
\begin{equation}\label{}
	\Omega_{+}^{A_{1}} \otimes_{max} ... \otimes_{max} \Omega^{A_{n}}_{+},
\end{equation}
 which has the explicit form:
 \begin{equation}\label{}
	 \left\{ \rho \in \bigotimes_{i=1}^{n} V^{A_{i}} : \bigotimes_{i=1}^{n} e^{A_{i}} [\rho ] \geq 0 \quad \forall e^{A_{i}} \in E^{A_{i}} \right\}.
 \end{equation}
As discussed after Def.\ $\ref{posvector}$, this operation fixes the state spaces for the composite systems, because the cone $ \Omega_{+}^{A \cdot B}$ and the unit effect $ u^{A} \otimes u^{B}$ (which is the unit effect for $ u^{A \cdot B}$ in locally tomographic GPTs \cite{chiribella2010probabilistic}) can be used to construct the set of states $ \Omega^{A \cdot B}$.

\begin{defn}[No-Restriction Hypothesis \cite{chiribella2010probabilistic}]
A theory is said to satisfy the no-restriction hypothesis (NRH) if any element $e \in (V^{A})^*$ that gives $ e( \rho ) \in [0,1]$ for every $ \rho \in \Omega^{A}$ is an element of $ E^{A}$. That is, if
\begin{equation}\label{}
	E^{A} = \{ e \in (V^{A} )^* : \forall \rho \in \Omega^{A}, e( \rho ) \in [0,1]\}.
\end{equation}
\end{defn}
This is to say, the NRH is the statement that, given the set of states, the set of effects is the largest possible that still gives sensible  probabilities for every state. 

\begin{defn}[Generalised No-Restriction Hypothesis]
A theory is said to satisfy the Generalised No-Restriction Hypothesis (GNRH), if it satisfies the NRH and every completely positive trace non-increasing transformation is an allowed transformation. That is, for any two systems $A$ and $B$, every transformation $T \in \mathcal{L} (V^{A} , V^{B} )$ that takes elements in $ \Omega^{A \cdot C}_{+}$ to elements in $ \Omega^{B \cdot C}_{+} $ for any third system C, and satisfies $ u^{B} ( T ( \rho ) ) \leq u^{A} ( \rho )$ for any $ \rho \in V^{A}$, is a valid transformation from $A$ to $B$.
\end{defn}
This is a convenient
 assumption to make about a theory because it simplifies its description, since it implies that the specification
 of the state spaces uniquely fixes both the effects and transformations. Again, we can use quantum theory as an example, as it does satisfy the GNRH.

Our last definition in this section is that of entanglement witnesses in a generic GPT. 
This definition follows closely to that found in the context of quantum theory. 

\begin{defn}[Generic Bipartite Entanglement Witness]\label{bipwitness} The set $W^{A \cdot B}$ of entanglement witnesses of a bipartite system $A\cdot B$ for a generic locally tomographic GPT is given by
	\begin{equation}\label{bipwit}
		\begin{aligned}
			W^{A \cdot B} = \{ w \in V^{A} \otimes V^{B} : &\langle w, s^{A} \otimes s^{B} \rangle \geq 0 \\ &\forall s^{A} \in \Omega^{A}_{+} , s^{B} \in \Omega^{B}_{+} \}\,.
		\end{aligned}
	\end{equation}
Note that this will depend on the choice of inner product. In the quantum case the standard choice will be the Hilbert-Schmidt inner product.
\end{defn} 
That is, an entanglement witness is a vector associated, through the Riesz representation, to a linear functional which evaluates to positive numbers for every product state of the bipartite system. This is a simple generalization of the quantum entanglement witnesses that uses arbitrary GPT states instead of quantum states. The generalization for multipartite entanglement witnesses is straightforward. 

\begin{defn}[Generic Multipartite Entanglement Witness]\label{multwitness} The set $ W^{A_{1} \cdot ... \cdot A_{n}}$ of the entanglement witnesses of a multipartite system $ A_{1} \cdot ... \cdot A_{n}$ for a generic locally tomographic GPT is given by
	\begin{equation}\label{}
		\begin{aligned}
			W^{A_{1} \cdot ... \cdot A_{n}} = \{ &w \in V^{A_{1}} \otimes ... \otimes V^{A_{n}} : \\ &\quad \langle w, s^{A_{1}} \otimes ... \otimes s^{A_{n}} \rangle \geq 0 \\ &\quad \forall s^{A_{1}} \in \Omega^{A_{1}}_{+},..., s^{A_{n}} \in \Omega^{A_{n}}_{+} \} \,.
		\end{aligned}
	\end{equation}
	Like in the bipartite case, this will depend on the choice of inner product. Again, in the quantum case the standard choice will be the Hilbert-Schmidt inner product.
\end{defn}

\subsection{Some useful results}
\label{app:usefulresults}
We now prove (or reprove) various results which are useful later on. 
Firstly we note an important consistency condition for the max tensor product. This is well known in the literature (see, e.g., Ref.~\cite{janotta2013generalized}) but we reproduce it here for completeness.

\begin{proposition}
In a theory where systems compose via the max tensor product, and that satisfies the NRH, one can check that the vectors, $v\in V^A$, that can be steered to from bipartite states in $\Omega^{AB}$ correspond to subnormalised states, i.e., live in $\Omega^A_+$ and satisfy $u^A(v)\leq 1$.\footnote{Note that this is an essential consistency condition for any GPT, but here we see that it is automatically satisfied by GPTs satisfying NRH and composing via the max tensor product and, hence, imposes no further constraints.}
\end{proposition}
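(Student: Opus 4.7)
The plan is to unpack what ``can be steered to'' means and then verify the two claims separately, using the defining property of the max tensor product together with NRH.

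First I would fix notation. A vector $v \in V^A$ is steered from a bipartite state $\rho^{AB} \in \Omega^{AB}$ exactly when there is some effect $e^B \in E^B$ such that $v = (\mathds{1}_A \otimes e^B)(\rho^{AB})$. Diagrammatically, this is the vector obtained by plugging $e^B$ into the $B$-output of $\rho^{AB}$. So the claim reduces to showing that for any such choice of $\rho^{AB}$ and $e^B$, the resulting $v$ is positive and subnormalised.

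The subnormalisation claim $u^A(v) \leq 1$ is immediate. Compute
\begin{equation}
u^A(v) \;=\; \bigl(u^A \otimes e^B\bigr)\bigl(\rho^{AB}\bigr).
\end{equation}
Since $u^A \in E^A$ and $e^B \in E^B$, the functional $u^A \otimes e^B$ is a product of local effects, and the defining property of the max tensor product is precisely that $\rho^{AB}$ evaluates in $[0,1]$ on every such product effect. Hence $u^A(v) \in [0,1]$.

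For positivity, i.e.\ $v \in \Omega^A_+$, I would use a bipolar-type duality argument enabled by NRH. By the same reasoning as above, for every $f \in E^A$ we have
\begin{equation}
f(v) \;=\; \bigl(f \otimes e^B\bigr)\bigl(\rho^{AB}\bigr) \;\in\; [0,1],
\end{equation}
so $v$ lies in the dual cone $(E^A)^*$ of the set of effects. NRH states that $E^A$ is the full set of functionals mapping $\Omega^A$ into $[0,1]$; equivalently, the cone generated by $E^A$ is exactly the dual cone $(\Omega^A_+)^*$ of the positive cone. The bipolar theorem (applied to the closed convex cone $\Omega^A_+$) then gives $(\Omega^A_+)^{**} = \Omega^A_+$, so $v \in (E^A)^* = \Omega^A_+$ as required.

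The substantive step here is the duality in the third paragraph: one needs the state space (hence the positive cone) to be closed in order to apply the bipolar theorem, and one needs NRH to identify the cone generated by effects with the dual of the positive cone. These are standard ambient assumptions in the GPT literature, so the main ``obstacle'' is really just making the reader comfortable that the two natural steerability conditions — positivity and subnormalisation — are automatic consequences of the max tensor product construction plus NRH, and impose no extra constraint on the theory. No nontrivial computation is needed beyond invoking the product-effect evaluation once for $u^A \otimes e^B$ and once for a generic $f \otimes e^B$.
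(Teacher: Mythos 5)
Your proof is correct and follows essentially the same route as the paper's: identify the steered vector as $(\mathds{1}_A\otimes e^B)(\rho^{AB})$, get $[0,1]$-valued evaluation against every product effect from the max tensor product, and conclude positivity from NRH plus subnormalisation from the case $f=u^A$. The only difference is that you make explicit the bipolar/closed-cone duality step that the paper compresses into the single sentence ``as the local effects are defined via the NRH this means $v_A$ must be in the cone $\Omega_+^A$,'' which is a welcome clarification rather than a deviation.
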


\proof
Steered states are of the form $v_A=\mathds{1}_A \otimes e_B (s_{AB})$ for a bipartite state $s_{AB}$ and an effect $e_B$ and the identity transformation $\mathds{1}_A$. Note that a special case of this is the reduced state which is given by taking $e_B= u^B$. We want that these vectors are local (subnormalised) states. Note that, by definition of the max tensor product we have that for all $e_A$ that $e_A\otimes e_B (s_{AB}) \in [0,1]$ and hence that $e_A(v_A) \in [0,1]$. As in our theory the local effects $e_A$ are defined via the NRH this means that $v_A$ must be in the cone $\Omega_+^A$. Moreover, it is easy to compute that it is (sub)normalised as $u^A(v_A) = u^A\otimes e^B (s_{AB}) \in [0,1]$.
\endproof

Another well-known result (see, e.g., Ref.~\cite{barnum2011information}) is the following.
\begin{proposition}\label{thm:NL}
In a GPT composed by the max tensor product, every effect on a composite system is a separable effect \cite{barnum2011information}.
\end{proposition}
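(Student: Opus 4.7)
The plan is to exploit cone duality. By the NRH, the effect cone on any system is exactly the dual of the state cone (cut off by the unit-effect bound). By the very definition of the max tensor product, $\Omega^{A\cdot B}_+$ is constructed as the set of vectors in $V^A \otimes V^B$ on which every product functional $e^A \otimes e^B$, with $e^A \in E^A$ and $e^B \in E^B$, evaluates nonnegatively. In cone-theoretic language, letting $K$ denote the conic hull of $\{e^A \otimes f^B : e^A \in E^A, \, f^B \in E^B\}$ inside $(V^A)^* \otimes (V^B)^*$, this is exactly the statement that $\Omega^{A\cdot B}_+ = K^{*}$.

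First I would invoke the bipolar theorem for closed convex cones in finite dimensions: dualising again gives $K^{**} = \overline{K}$. Combined with the NRH-characterisation of the effect cone as $(\Omega^{A\cdot B}_+)^{*}$, this yields
\begin{equation}
E^{A\cdot B}_+ \;=\; (\Omega^{A\cdot B}_+)^{*} \;=\; K^{**} \;=\; \overline{K}.
\end{equation}
Hence every element of the effect cone admits a representation as a finite nonnegative combination $\sum_i \lambda_i \, e^A_i \otimes f^B_i$, with $\lambda_i \geq 0$, $e^A_i \in E^A$, $f^B_i \in E^B$. Local tomography ($V^{A\cdot B} = V^A \otimes V^B$) is what lets us identify $(V^{A\cdot B})^{*}$ with $(V^A)^{*} \otimes (V^B)^{*}$ in the first place, so the argument really does take place in the right ambient space.

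Next I would upgrade "conic combination of product effects" to "separable effect" (i.e.\ convex combination). Any $e \in E^{A\cdot B}$ is subnormalised, $e \leq u^{A\cdot B} = u^A \otimes u^B$, so after absorbing the positive scalars $\lambda_i$ into rescaled product effects $\tilde{e}^A_i \otimes \tilde{f}^B_i$ (each factor remaining a valid effect thanks to NRH) and suitably padding with the zero effect if necessary, one can rewrite $e = \sum_i p_i \, \tilde{e}^A_i \otimes \tilde{f}^B_i$ with $p_i \in [0,1]$ and $\sum_i p_i = 1$, matching the definition of a separable effect.

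The only real obstacle is the usual care needed around closedness of $K$ so that the bipolar theorem gives $K$ rather than merely $\overline{K}$, and a minor bookkeeping step for the normalisation constraint $e \leq u^{A\cdot B}$. In finite dimensions with polytopic (or more generally closed convex) state spaces these issues are routine, so the substantive content of the proposition really is the duality statement $(\Omega^A_+ \otimes_{\max} \Omega^B_+)^{*} = \Omega^A_+{}^{*} \otimes_{\min} \Omega^B_+{}^{*}$, with everything else being dressing.
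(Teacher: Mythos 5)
Your cone-duality argument is the right one, and it is in fact more detailed than what the paper offers: the paper's ``proof'' is a one-sentence assertion that the claim follows from the definition of $\otimes_{max}$, deferring to Barnum et al. Your chain $E^{A\cdot B}_+ \subseteq (\Omega^{A\cdot B}_+)^* = K^{**} = \overline{K}$ is exactly the substance behind that assertion, and your closedness caveat is handled by the standard compact-base argument (take $B_A=\{e\in E^A_+ : e(\omega_A)=1\}$ for an interior state $\omega_A$; then $\mathrm{conv}(B_A\otimes B_B)$ is compact and avoids the origin, so $K$ is closed and the bipolar theorem gives $K$ on the nose). Note also that NRH is not even needed for the direction you want: any functional nonnegative on $\Omega^{A\cdot B}_+=K^*$ lies in $K^{**}$ automatically, so every effect is a conic combination of product effects regardless of how large $E^{A\cdot B}$ is.

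The genuine problem is your final ``upgrade'' from a conic combination to a convex combination with $\sum_i p_i = 1$. That stronger statement is false, and no amount of rescaling or zero-padding rescues it (padding with zero effects can only raise the total weight toward $1$, never lower it, and absorbing scalars into the factors only yields an \emph{unweighted} sum of valid product effects). Concretely, on $\B{2,2}\cdot\B{2,2}$ consider $E=e_{0|0}\otimes f_{0|0}+e_{1|0}\otimes f_{1|0}$, the ``equal outcomes on measurement $0$'' effect. It is a valid effect ($\sum_{ab}p(ab|00)=1$ bounds it) and manifestly a sum of product effects. But if $E=\sum_k p_k\, g_k\otimes h_k$ with $p_k\geq 0$, $\sum_k p_k=1$, then evaluating on the deterministic product states $\alpha\otimes\beta$ and $\alpha'\otimes\beta'$ (Alice and Bob both output $0$, resp.\ both output $1$, on measurement $0$) gives the value $1$, which forces $g_k(\alpha)=h_k(\beta)=g_k(\alpha')=h_k(\beta')=1$ for every $k$ with $p_k>0$; evaluating on $\alpha\otimes\beta'$ then gives $\sum_k p_k=1$, contradicting $E(\alpha\otimes\beta')=0$. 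So ``separable effect'' must be read conically --- an element of $\mathrm{cone}\{e^A\otimes e^B\}$ --- which is the standard convention, is all the paper ever uses (e.g.\ that local quantum measurements are valid and that there are no entangling effects), and is exactly what your duality argument already delivers. Drop the last paragraph of your proof and the argument is complete.
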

\begin{proof}
It follows straightforwardly by noticing that the use of $ \otimes_{max}$ to combine systems implies that the set of effects of the combined system $A \cdot B$ is just the set of separable effects. 
\end{proof}

Next we prove a lemma which is useful for proving a key observation.

\begin{lemma}\label{lemma}
	In a GPT containing systems $A$ and $B$, if the no-restriction hypothesis holds and the map $T \in \mathcal{L} (V^{A} , V^{B}) $ is positive, then 
	\begin{equation}\label{}
\begin{tikzpicture}
	\begin{pgfonlayer}{nodelayer}
		\node [style=none] (0) at (0, -0.5) {};
		\node [style=copoint] (1) at (0, 0.5) {$e$};
		\node [style=right label] (2) at (0, -0.25) {$B$};
	\end{pgfonlayer}
	\begin{pgfonlayer}{edgelayer}
		\draw (0.center) to (1);
	\end{pgfonlayer}
\end{tikzpicture}
} \ \ \in E^{B}_{+}\quad \implies\quad %
\InputIfFileExists{Diagrams/lem1John2.tikz}{}{\input{./figures/Diagrams/lem1John2.tikz}} \ \ \in E^{A}_{+},
	\end{equation}
	\end{lemma} 
\begin{proof}
	Recall that positivity of $ T \in \mathcal{L} ( V^{A} , V^{B} )$ means that
	\beq
\begin{tikzpicture}
	\begin{pgfonlayer}{nodelayer}
		\node [style=none] (0) at (0, 0.75) {};
		\node [style=point] (1) at (0, -0.25) {$s$};
		\node [style=right label] (2) at (0, 0.5) {$A$};
	\end{pgfonlayer}
	\begin{pgfonlayer}{edgelayer}
		\draw (0.center) to (1);
	\end{pgfonlayer}
\end{tikzpicture}
} \ \ \in \Omega^{A}_{+}\quad \implies\quad %
\InputIfFileExists{Diagrams/LEM1jOHN4.tikz}{}{\input{./figures/Diagrams/LEM1jOHN4.tikz}} \ \ \in \Omega^{B}_{+}.\eeq 
	Now, using this positivity and, noting that $e^B \in E^B_+$, we find for all $ s^{A} \in \Omega^{A}_{+} $ that
\beq
\InputIfFileExists{Diagrams/lem1John5.tikz}{}{\input{./figures/Diagrams/lem1John5.tikz}} \ \ \geq 0.
\eeq 
Hence, 
\beq
\InputIfFileExists{Diagrams/lem1John2.tikz}{}{\input{./figures/Diagrams/lem1John2.tikz}} \ \ \in \ (\Omega_+^A)^*\ \stackrel{\textsc{nrh}}{=}\ E^{A}_{+},
\eeq
which completes the proof.
\end{proof}

The lemma above can be used to prove a useful fact for our work.

\begin{thm}\label{thm:PisCP}
	In any GPT that combines systems through $ \otimes_{max}$, the max tensor product, and satisfies the no-restriction hypothesis, if a map $T$ is positive, then it is completely positive.
\end{thm}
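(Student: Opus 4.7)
The plan is to unfold both occurrences of the max tensor product and use Lemma~\ref{lemma} as a bridge between the input side ($A\cdot C$) and the output side ($B\cdot C$). Fix an arbitrary third system $C$; by definition I need to show that $T\otimes \mathds{1}_C$ is positive, i.e., that for every $s \in \Omega^{A\cdot C}_+$ we have $(T\otimes \mathds{1}_C)(s) \in \Omega^{B\cdot C}_+$.

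Because composition is via the max tensor product, the target cone admits the explicit characterization
\begin{equation}
\Omega^{B\cdot C}_+ = \left\{ v \in V^B\otimes V^C \;:\; (e^B\otimes e^C)(v) \geq 0 \;\;\forall\, e^B\in E^B_+,\, e^C\in E^C_+ \right\},
\end{equation}
so it suffices to check that for every pair of positive effects $e^B\in E^B_+$ and $e^C\in E^C_+$,
\begin{equation}
\bigl(e^B\otimes e^C\bigr)\bigl((T\otimes \mathds{1}_C)(s)\bigr) \;=\; \bigl((e^B\circ T)\otimes e^C\bigr)(s) \;\geq\; 0.
\end{equation}
Diagrammatically this is simply the rewrite which slides the effect $e^B$ down through $T$ along the $A$-wire, leaving $e^C$ untouched on the $C$-wire.

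The key step is to recognize $e^B\circ T$ as a positive effect on $A$. This is precisely the content of Lemma~\ref{lemma}: since $T$ is assumed positive and $e^B\in E^B_+$, one has $e^B\circ T \in E^A_+$. (This is where the NRH is used, to identify the dual cone $(\Omega^A_+)^*$ with $E^A_+$.) Once this is in hand, I again invoke the max tensor product characterization, this time on the input side: $s\in \Omega^{A\cdot C}_+$ means exactly that $(e^A\otimes e^C)(s)\geq 0$ for every $e^A\in E^A_+$ and $e^C\in E^C_+$. Applying this with $e^A := e^B\circ T$ closes the argument.

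I do not expect a genuine obstacle here: both NRH and the max tensor product are doing exactly the work they are designed to do, and the whole proof is really just the observation that ``positive effects pulled back through positive maps are positive effects,'' combined twice with the defining inequalities of $\otimes_{\max}$. The only care needed is to keep track of which side ($A\cdot C$ versus $B\cdot C$) is being expanded at each step, and to state the lemma in the form that produces an element of $E^A_+$ rather than merely of $(V^A)^*$, so that it is admissible as a test effect on $s$.
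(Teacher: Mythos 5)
Your proof is correct and follows essentially the same route as the paper's: both hinge on Lemma~\ref{lemma} to pull a positive effect on $B$ back through $T$ to a positive effect on $A$, and then invoke the defining inequalities of $\otimes_{\max}$ on the input and output sides. The only differences are cosmetic — the paper argues by contradiction and inserts an explicit rescaling $\lambda$ to pass from $E^A_+$ back to $E^A$ (since its definition of $\otimes_{\max}$ quantifies over $E^A$ rather than the cone), whereas you argue directly and work with the cones throughout, which is harmless because the defining inequalities are scale-invariant.
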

\begin{proof}
	For the sake of contradiction, assume that, in a GPT satisfying the NRH where $ \otimes_{max}$ is the combination rule, the map $T \in \mathcal{L} (V^{A} , V^{B} ) $ is positive but not completely positive.  That is, that there exists a system $C$ such that $T \otimes \mathds{1}_{C} \in \mathcal{L} ( V^{A} \otimes V^{C} , V^{B} \otimes V^{C} )$ is not positive. That means that there must exist some bipartite state $s\in \Omega^{A\cdot B} = \Omega^{A}_{+} \otimes_{max} \Omega^{C}_{+}$ such that
	\begin{equation}\label{}
\InputIfFileExists{Diagrams/diagProof2.tikz}{}{\input{./figures/Diagrams/diagProof2.tikz}}\ \  \not\in\ \Omega^{B}_{+} \otimes_{max} \Omega^{C}_{+}.
	\end{equation}
	By the definition of $ \otimes_{max}$, this means exists $t \in E^{B}$ and $v \in E^{C}$ such that
	\begin{equation}\label{eq:negativeCond}
\InputIfFileExists{Diagrams/diagProof3.tikz}{}{\input{./figures/Diagrams/diagProof3.tikz}} < 0.
	\end{equation}
However, as $t \in E^B \subset E^B_+$, we know from lemma \ref{lemma} that:
\beq
\InputIfFileExists{Diagrams/diagProof4.tikz}{}{\input{./figures/Diagrams/diagProof4.tikz}} \ \ \in \ \ E^A_+
\eeq
and hence that there exists $\lambda \geq 0 $ such that 	
\beq
\begin{tikzpicture}
	\begin{pgfonlayer}{nodelayer}
		\node [style=none] (0) at (0, -0.5) {};
		\node [style=none] (12) at (0, -0.5) {};
		\node [style=none] (13) at (0, -0.5) {};
		\node [style=copoint] (14) at (0, 0.5) {$t'$};
		\node [style=right label] (15) at (0, -0.25) {$A$};
	\end{pgfonlayer}
	\begin{pgfonlayer}{edgelayer}
		\draw (14) to (13.center);
	\end{pgfonlayer}
\end{tikzpicture}
} \ \ := \ \ \lambda \  %
\InputIfFileExists{Diagrams/diagProof4.tikz}{}{\input{./figures/Diagrams/diagProof4.tikz}} \ \ \in \ \ E^A.
\eeq
Substituting this into eq.~\eqref{eq:negativeCond} gives us that:
\beq
\InputIfFileExists{Diagrams/diagProof6.tikz}{}{\input{./figures/Diagrams/diagProof6.tikz}} \ \ < \ 0
\eeq
but, as $t' \in E^A$ and $v \in E^C$ this means that $s \not\in \Omega^{A}_{+} \otimes_{max} \Omega^{C}_{+}$ and, hence, we have reached a contradiction. 
\end{proof}

\section{Formal definition  and features of Witworld}\label{app:WW}

In this Appendix we provide the formal definition of Witworld as a GPT, and the proofs that it does indeed possess the features mentioned in the main text. That task amounts to explicitly saying what are the states, effects and transformations of Witworld, following the formalities of the GPT framework mentioned in App.~\ref{ap:GPT}.
For Witworld, this is simplified because we define it to satisfy the GNRH, so by providing just the state spaces for each system type (including the multipartite ones, which requires the combination rule), we determine the complete GPT.

As was said in the main text, Witworld contains systems which we call atomic systems, and systems which we call composite systems. A system $A$ is atomic if it cannot be considered as being the result of combining a system $B$ with another system $C$. That is, there are no $B$ and $C$ such that $A = B \cdot C$. This means that any system type can be built from the atomic systems, so to determine all the types in Witworld, we only need to say what the atomic systems are and what the combination rule $ \cdot $ is. Regarding the latter, we choose $ \cdot $ to be the maximal tensor product $ \otimes_{max}$. Regarding the former, we now formally define the atomic systems:

\begin{defn}[Quantum System $\Q{d}$, $d \in \mathds{N}$]\label{quantumsystem} A quantum system of type $d$ has as its vector space $V^{\Q{d}}$, as positive cone $ \Omega^{\Q{d}}_{+}$, as effects set $ E^{\Q{d}}$, and as unit effect the function $u^{\Q{d}}( \_)$, all defined as follows:
	\begin{itemize}
		\item $ V^{\Q{d}} = \{ A \in \mathcal{L} ( \mathcal{H}_{d}, \mathcal{H}_{d}) : A = A^{\dagger} \}$: the space of Hermitian operators on a Hilbert space of dimension $d$.
		\item $ \Omega_{+}^{\Q{d}} = \{ A \in V^{\Q{d}} : A \geq 0 \}$: the set of positive operators on $ \mathcal{H}_{d}$.
		\item $u^{\Q{d}}(\_) = \tr{(\mathds{1} \_)}$: the trace inner product with the identity operator $ \mathds{1}$ in $ \mathcal{L} ( \mathcal{H}_{d}, \mathcal{H}_{d})$.
		\item $ E^{\Q{d}} = \{ \tr(A \_ ) : A \in V^{\Q{d}} \text{ and } 0 \leq A \leq \mathds{1} \}$: the set of trace inner products with operators in $ V^{\Q{d}}$ that are positive and smaller than or equal to the identity, as required by NRH.
	\end{itemize}
\end{defn}
For atomic systems, the quantum type coincides with those of single systems in traditional quantum theory, and indeed local states and effects of atomic systems coincide for quantum types in both theories. However, as we see later on, this no longer holds for either transformations or composite systems -- for the latter, this follows from the fact that the combination rule in Witworld, the maximal tensor product, is not the same as in quantum theory, so $ \Q{d} \cdot \Q{d'} \neq \Q{dd'}$.

\begin{defn}[Classical system $\C{v}$, $v \in \mathds{N}$] \label{classicalsystem} A classical system of type $v$ has as its vector space $ V^{\C{v}}$, as positive cone $ \Omega_{+}^{\C{v}}$, as effects set $ E^{\C{v}}$, and as unit effect the function $ u^{\C{v}}(\_ )$, all defined as follows: 
	\begin{itemize}
		\item $ V^{\C{v}} = \mathds{R}^{v-1} \oplus \mathds{R}^{1} \cong \mathds{R}^{v} $: The direct sum of a ($v-1$)-dimensional real vector space with the real numbers. We can work with the isomorphic space $ \mathds{R}^{v}$ to simplify notation.
		\item $ \Omega^{\C{v}}_{+}=\{v\in V^{\C{v}}: v= \lambda (q \oplus 1), \lambda \geq 0, q_{i} \geq 0, \sum_{i} q_{i} \leq 1 \}$ : the set of vectors in $ \mathds{R}^{v}$ that are the null vector or have a positive last component and whose first $v-1$ components divided by the last give the probabilities for $ v-1$ outcomes of a measurement with $v$ possible outcomes.
		\item $ u^{\C{v}} (\_ ) = \langle (0,...,0,1)^{T}, \_ \rangle$: the Euclidean inner product with the vector in $ \mathds{R}^{v}$ whose only nonzero component is the last one, which is 1.
		\item $ E^{\C{v}} = \{ \langle e, \_ \rangle: e \in V^{\C{v}}, \langle e, s \rangle \in [0,1] \forall s \in \Omega^{\C{v}} \}$: the set of euclidean inner products with vectors in $V^{\C{v}}$ that evaluate to probabilities for every vector in $ \Omega^{\C{v}}$, as required by the NRH.
	\end{itemize}
\end{defn}
Those are the traditional classical systems -- probability distributions written as vectors. Such vectors can always be seen as a convex combination of deterministic states. Note that, unlike for quantum systems, for classical systems we do have that $\C{d}\cdot\C{d'}=\C{dd'}$\footnote{Hence strictly the atomic systems should be taken to be prime dimensional classical systems, however, we do not worry about this subtlety here.}. 
Writing classical states in this form allows us to further notice that they are just particular cases of the Boxworld type.

\begin{defn}[Boxworld system $ \mathcal{B}_{n,k}$, $ (n,k) \in \mathds{N}^{2}$]\label{boxworldsystem} A Boxworld system of type $(n,k)$\footnote{Strictly we should demand that $n>1$ so as not to duplicate the classical systems, however, we do not worry about this subtlety here either.} has as its vector space $ V^{\B{n,k}}$, as positive cone $ \Omega^{\B{n,k}}$, as effects set $ E^{\B{n,k}}$, and as unit effect the function $ u^{\B{n,k}} ( \cdot )$, all defined as follows:
	\begin{itemize}
		\item $ V^{\B{n,k}} = ( \mathds{R}^{n} \otimes \mathds{R}^{k-1}) \oplus \mathds{R}^{1} \cong \mathds{R}^{n(k-1)+1}$: The direct sum of the real numbers with the direct product between two real vector spaces of dimensions $n$ and $k-1$. We can work with the isomorphic space $ \mathds{R}^{n(k-1) + 1}$ to simplify notation.
		\item $ \Omega_{+}^{\B{n.k}} = \{ v \in V^{\B{n,k}}: v = \lambda \left( \sum_{i=1}^{n} \boldsymbol{m}_{i} \otimes \boldsymbol{q}_{i} \oplus 1 \right), \lambda \geq 0, q_{ij} \geq 0, \sum_{j=1}^{k} q_{ij} \leq 1 , m_{ij} = \delta_{ij} \}$: the vectors in $ \mathds{R}^{n(k-1)+1}$ which are the null vector or that have a positive last component and the first $n(k-1)$ components divided by the last component (if positive) can be viewed as probabilities for the first $k-1$ components of $n$ measurements of $k$ possible outcomes stacked in a list.
		\item $ u^{\B{n,k}}(\_) = \langle \overline{u}, \_ \rangle$ where $ \overline{u} = 0_{ \mathds{R}^{n}} \otimes 0_{ \mathds{R}^{k+1}} \oplus  1$: the inner product with the vector in $ \mathds{R}^{n(k-1)+1}$ with only null components except for the last, which is 1.
		\item $ E^{\B{n,k}} = \{ \langle e, \_ \rangle: e \in V^{ \B{n,k}}, \langle e, s \rangle \in [0,1] \forall s \in \Omega^{\B{n,k}} \}$: the set of inner products with vectors in $ V^{\B{n,k}}$ that satisfies the NRH.
	\end{itemize}	

\end{defn}
The Boxworld systems can be viewed as classical systems that require many measurements to uniquely determine a state, rather than just 1, and the probability distributions for those measurements are independent of each other. A classical system of type $v$, then, can be viewed as a Boxworld system of type $(1,v)$.  Note, however, that unlike classical systems, the composite of two more general Boxworld systems is no longer an atomic Boxworld system, that is $\mathcal{B}_{n,k}\cdot \mathcal{B}_{n',k'}\neq \mathcal{B}_{n'',k''}$. 

When using diagrams, we denote the atomic classical, quantum, and Boxworld systems by different types of wires:
\begin{equation}\label{}
\begin{tikzpicture}
	\begin{pgfonlayer}{nodelayer}
		\node [style=none] (0) at (0, 0.75) {};
		\node [style=none] (1) at (0, -0.75) {};
		\node [style={right label}] (2) at (0, -0.5) {$\C{v}$};
	\end{pgfonlayer}
	\begin{pgfonlayer}{edgelayer}
		\draw [c] (0.center) to (1.center);
	\end{pgfonlayer}
\end{tikzpicture}},\quad %
\begin{tikzpicture}
	\begin{pgfonlayer}{nodelayer}
		\node [style=none] (0) at (0, 0.75) {};
		\node [style=none] (1) at (0, -0.75) {};
		\node [style={right label}] (2) at (0, -0.5) {$\Q{d}$};
	\end{pgfonlayer}
	\begin{pgfonlayer}{edgelayer}
		\draw [q] (0.center) to (1.center);
	\end{pgfonlayer}
\end{tikzpicture}},\quad %
\begin{tikzpicture}
	\begin{pgfonlayer}{nodelayer}
		\node [style=none] (0) at (0, 0.75) {};
		\node [style=none] (1) at (0, -0.75) {};
		\node [style={right label}] (2) at (0, -0.5) {$\B{n,k}$};
	\end{pgfonlayer}
	\begin{pgfonlayer}{edgelayer}
		\draw [b=12] (0.center) to (1.center);
	\end{pgfonlayer}
\end{tikzpicture}
}.	
\end{equation}
When we need to talk about an arbitrary kind of system, the wire we use is the following:
\begin{equation}\label{}
\begin{tikzpicture}
	\begin{pgfonlayer}{nodelayer}
		\node [style=none] (0) at (0, 0.75) {};
		\node [style=none] (1) at (0, -0.75) {};
		\node [style={right label}] (2) at (0, -0.5) {$S$};
	\end{pgfonlayer}
	\begin{pgfonlayer}{edgelayer}
		\draw [g] (0.center) to (1.center);
	\end{pgfonlayer}
\end{tikzpicture}}.
\end{equation}

As stated previously, from the atomic systems any general system in Witworld can be constructed as an arbitrary composite of the three fundamental system types, $\Q{d},\C{v},\B{n,k} $.
For instance, $\Q{d} \cdot \Q{d'}$ and $\Q{d} \cdot \C{v} \cdot \C{v'} \cdot \B{n,k} \cdot \Q{d'}$ would both be systems within our theory.
More generally, systems correspond to arbitrary strings of elements from the set $\{ \Q{d},\C{v},\B{n,k}\}_{d,v,n,k \in \mathds{N}}$.
The positive cones for these composite systems are obtained through the max tensor product of the cones of the atomic types, and from those we can obtain the set of states by taking the intersection of the cone with the set of normalised vectors in the product vector space. Here, by normalised vector we mean vectors for which the unit effect evaluates to 1. Since Witworld is a locally tomographic GPT, the unit effect for $ A_{1} \cdot ... \cdot A_{n}$ is simply $ u^{A_{1}} \otimes ... \otimes u^{A_{n}}$.

Since Witworld, by definition, satisfies the NRH, once we establish what the states for every type of system are, the effects are also determined. Given any system like $\Q{d} \cdot \B{n,k} \cdot ...$, every linear functional on $ V^{\Q{d}} \otimes V^{\B{n,k}} \otimes ...$ that gives probabilities for every vector in {$ \Omega^{\Q{d} \cdot \B{n,k} \cdot ...}$ is a valid effect. 

The definition of the transformations in Witworld is similar to that of the effects. Here, we require the theory to satisfy the GNRH, so when we determine the states, the transformations are fixed. In Witworld, any completely positive transformation is allowed. We prove later that this, together with the fact that the combination rule is $ \otimes_{max}$, implies that any positive transformation is an allowed transformation for arbitrary systems in Witworld.

Formally, and concisely, Witworld is therefore defined as follows:

\begin{defn}[Witworld] Witworld is the locally-tomographic GPT that satisfies the generalised no-restriction hypothesis, and whose systems are arbitrary combinations under the max tensor product $ \otimes_{max}$ of the atomic system types described in definitions \ref{quantumsystem}, \ref{classicalsystem}, and \ref{boxworldsystem}.
\end{defn}

Now that we presented the definition of the theory, we can move on to observing or proving the various features of Witworld which we used
in the main text. Note that because Witworld composes via the max tensor product and satisfies the GNRH, that all of the results of App.~\ref{app:usefulresults} hold.

Firstly, Proposition~\ref{thm:NL}, tells us that in Witworld there are only separable effects.
Therefore, for systems that are the combination of atomic quantum systems, there are fewer effects in Witworld than in quantum theory: effects from measurements in an entangled basis are not present in Witworld. For Boxworld and classical systems, such a difference does not exist: that is, Boxworld and classical system types feature separable-only effects both in Witworld and in their respective traditional frameworks.

Theorem~\ref{thm:PisCP}, together with the GNRH means that local transformations which in quantum theory are positive but not completely positive maps are indeed valid transformations in Witworld. This is important for constructing many examples of post-quantum assemblages. Again for Boxworld and classical systems this distinction does not exist as for them the notion of positivity and complete positivity already coincide in their respective traditional frameworks.

Next we show that bipartite states for quantum systems within Witworld correspond to quantum entanglement witnesses. 

\begin{thm}\label{bipgleason} In Witworld, 
the bipartite system resulting from the combination of atomic 
quantum 
systems $\Q{d}$ and $\Q{d'}$ 
contains every bipartite entanglement witness in its positive cone.
\end{thm}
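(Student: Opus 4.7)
The plan is to show that the defining condition for the Witworld positive cone of $\Q{d}\cdot\Q{d'}$ (positivity on separable effects) is implied by the defining condition of a bipartite entanglement witness (positivity on separable positive operators). The key bridge is that every local quantum effect is the Hilbert--Schmidt inner product with a positive operator, hence an element of the local positive cone.

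First, I would unpack the definition of the positive cone of the composite system. Since Witworld composes via the max tensor product,
\begin{equation}
\Omega^{\Q{d}\cdot\Q{d'}}_{+} = \{\rho \in V^{\Q{d}}\otimes V^{\Q{d'}} : (e^A\otimes e^B)[\rho]\geq 0\ \forall e^A\in E^{\Q{d}}, e^B\in E^{\Q{d'}}\}.
\end{equation}
By Definition~\ref{quantumsystem}, every $e^A\in E^{\Q{d}}$ has the form $e^A(\cdot)=\tr(M\,\cdot)$ with $0\leq M\leq \mathds{1}$, and analogously $e^B(\cdot)=\tr(N\,\cdot)$ with $0\leq N\leq \mathds{1}$. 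In particular $M\in\Omega^{\Q{d}}_{+}$ and $N\in\Omega^{\Q{d'}}_{+}$ since both are positive operators.

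Next, I would take an arbitrary bipartite entanglement witness $w\in W^{\Q{d}\cdot\Q{d'}}$ as defined in Definition~\ref{bipwitness}, using the Hilbert--Schmidt inner product as indicated in the remark following that definition. By definition,
\begin{equation}
\tr\!\big(w\,(s^A\otimes s^B)\big)\geq 0\qquad \forall\,s^A\in\Omega^{\Q{d}}_{+},\,s^B\in\Omega^{\Q{d'}}_{+}.
\end{equation}
Applying this to the positive operators $M$ and $N$ associated with an arbitrary pair of effects gives
\begin{equation}
(e^A\otimes e^B)[w]=\tr\!\big((M\otimes N)\,w\big)=\tr\!\big(w\,(M\otimes N)\big)\geq 0.
\end{equation}
Since this holds for every choice of local effects, the max tensor product condition is satisfied and hence $w\in\Omega^{\Q{d}\cdot\Q{d'}}_{+}$, which is exactly the claim.

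The argument is largely a matching of definitions, so there is no serious obstacle; the only subtle point is to confirm that the set of operators $M$ appearing in quantum effects is a separating subset of the positive cone, so that requiring positivity on product effects is equivalent to requiring positivity on all product positive operators. This follows because $\Omega^{\Q{d}}_{+}$ is a cone and any positive operator can be rescaled by a positive constant to satisfy $0\leq M\leq\mathds{1}$, and the positivity condition is invariant under such rescaling. Hence the containment $W^{\Q{d}\cdot\Q{d'}}\subseteq \Omega^{\Q{d}\cdot\Q{d'}}_{+}$ follows, establishing the theorem.
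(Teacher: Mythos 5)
Your proof is correct and follows essentially the same route as the paper's: both identify each local quantum effect with a positive operator via the Hilbert--Schmidt/Riesz representation and then match the max-tensor-product positivity condition against the defining condition of an entanglement witness. The only cosmetic difference is that the paper phrases this as a set equality $\Omega^{\Q{d}\cdot\Q{d'}}_{+}=W^{\Q{d}\cdot\Q{d'}}$ (using the rescaling observation you relegate to your closing remark), whereas you prove just the containment the theorem statement actually asserts.
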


\begin{proof} Take the set of effects $ E^{\Q{d}}$. To each effect $ \varphi_{e} $ in it, there is a vector $ e \in V^{\Q{d}} $ associated to it by its Riesz representation with the Hilbert-Schmidt inner product. Let us call the set of all $ e \in V^{\Q{d}}$ associated to some $ \varphi_{e} \in E^{\Q{d}} $ by $ \tilde{E}^{\Q{d}}$. Do the same to define $ \tilde{E}^{\Q{d'}}$. In quantum theory, $ \tilde{E}_{+} = \Omega_{+}$ because the effects are inner products with positive operators. 
	Now, note that by changing $ E^{A} \to E^{A}_{+}$ and $ E^{B} \to E^{B}_{+}$ we do not change the set defined by equation \ref{bipwitness}. Hence, we can write it, already using the action of the effect as an inner product, as
	\begin{equation}\label{}
		\begin{aligned}
			\Omega^{A}_{+} \otimes_{max} \Omega^{B}_{+} = \{ v &\in V^{A} \otimes V^{B} : \langle v, e^{A} \otimes e^{B} \rangle \geq 0 \\ &\quad \forall e^{A} \in \tilde{E}_{+}^{A}, e^{B} \in \tilde{E}^{B}_{+} \}.
		\end{aligned}
	\end{equation}
	If $A = \Q{d}$ and $B = \Q{d'}$, then $ \tilde{E}^{A}_{+} = \Omega^{\Q{d}}_{+} $ and $ \tilde{E}^{B}_{+} = \Omega^{\Q{d}}_{+}$, so the equation above is by definition the set $ W^{\Q{d} \cdot \Q{d'}} $ of entanglement witnesses of $\Q{d'} \cdot \Q{d'}$. 
\end{proof}
This theorem means that if we compare the positive cones in quantum theory with the positive cones in Witworld when combining two atomic quantum systems, we see that the positive cone in Witworld is larger than the positive cone in quantum theory.
To see this more explicitly, suppose $A$ and $B$ are two atomic quantum systems, then refer to $A$ and $B$ combined as prescribed by quantum theory by $A \otimes B$, and by $A \cdot B$ when combined as prescribed by Witworld.
Then, since definition \ref{bipwitness} is independent of the combination rule and is equivalent to what is used in quantum theory, theorem \ref{bipgleason} implies $ \Omega^{A \cdot B}_{+} = W^{A \cdot B} = W^{A \otimes B} \supset \Omega^{A \otimes B}_{+}$, where the last inclusion is given from quantum theory. Nevertheless, since classical and Boxworld systems originally combine through $ \otimes_{max}$, in Witworld the combination of atomic systems of said types do not build more states than what we would normally have. Finally, states that are the combination of different types of atomic systems are incomparable to states in quantum, Boxworld or classical systems. 

The fact that we can view quantum entanglement witnesses as valid states in Witworld is a key feature which underpins many of our realisations of post-quantum assemblages. This is also true in the multipartite generalisation which we now prove.

\begin{thm}\label{thm:B7}
	In Witworld, the multipartite system resulting from the combination of atomic quantum systems $\Q{d_{1} } = A_{1}, ... , \Q{d_{n} } = A_{n}$ contains every multipartite entanglement witness in its positive cone.
\end{thm}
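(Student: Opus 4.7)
The plan is to generalize the bipartite argument used in the proof of Theorem~\ref{bipgleason} directly to the multipartite setting, exploiting the associativity of the max tensor product. Since Witworld composes via $\otimes_{\max}$ and the proof of the bipartite case relied only on (i) the defining separable-effect characterisation of $\otimes_{\max}$, and (ii) the fact that for atomic quantum systems the Riesz representatives (with respect to the Hilbert--Schmidt inner product) of the effect cone coincide with the state cone, the same two ingredients should suffice here.

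First, I would invoke associativity of $\otimes_{\max}$ (as noted after the definition of the max tensor product) to write the positive cone of the composite system unambiguously as
\[
\Omega^{A_1 \cdot \cdots \cdot A_n}_{+} \;=\; \Omega^{A_1}_{+} \otimes_{\max} \cdots \otimes_{\max} \Omega^{A_n}_{+},
\]
which by the explicit form given in the paper equals the set of $w \in \bigotimes_{i} V^{A_i}$ such that $\bigl(\bigotimes_{i} e^{A_i}\bigr)[w] \geq 0$ for all $e^{A_i} \in E^{A_i}$.

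Second, as in the bipartite proof, I would pass from each local effect $\varphi_{e^{A_i}}$ to its Riesz representative $e^{A_i} \in V^{A_i}$ under Hilbert--Schmidt, collecting these into the sets $\tilde{E}^{A_i}$, and replace $E^{A_i}$ by the cone $\tilde{E}^{A_i}_{+}$ in the above characterisation without changing the resulting set (scaling each local effect by a positive constant preserves the sign of the inner product). For atomic quantum systems $\Q{d_i}$, Definition~\ref{quantumsystem} yields $\tilde{E}^{A_i}_{+} = \Omega^{A_i}_{+}$. Substituting this into the characterisation gives
\[
\Omega^{A_1 \cdot \cdots \cdot A_n}_{+} = \bigl\{ w \in \bigotimes_{i} V^{A_i} : \langle w, s^{A_1}\otimes\cdots\otimes s^{A_n}\rangle \geq 0 \ \forall\, s^{A_i}\in\Omega^{A_i}_{+}\bigr\},
\]
which is exactly the definition of $W^{A_1 \cdot \cdots \cdot A_n}$ in Definition~\ref{multwitness}.

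I do not expect any serious obstacle: the argument is essentially a bookkeeping extension of Theorem~\ref{bipgleason} to $n$ factors, and the key identification $\tilde{E}^{A_i}_{+} = \Omega^{A_i}_{+}$ is already built into the definition of atomic quantum systems. The only mildly delicate point is ensuring that the replacement $E^{A_i}\leadsto\tilde{E}^{A_i}_{+}$ in the max-tensor-product characterisation is legitimate at each of the $n$ tensor slots simultaneously; this follows because each slot can be rescaled independently by a positive factor without altering positivity of the pairing, and because associativity of $\otimes_{\max}$ lets us carry out this replacement one slot at a time.
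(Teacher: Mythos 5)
Your proposal is correct and follows essentially the same route as the paper's proof: both start from the explicit product-effect characterisation of the iterated max tensor product, pass to Hilbert--Schmidt Riesz representatives, replace each local effect set by its cone $\tilde{E}^{A_i}_{+}$, and use the identification $\tilde{E}^{A_i}_{+} = \Omega^{A_i}_{+}$ for atomic quantum systems to land exactly on Definition~\ref{multwitness}. The paper presents this as a single chain of set equalities and calls it a ``straightforward generalization of the bipartite case,'' which is precisely the bookkeeping extension you describe.
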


\begin{proof}
This is a straightfoward generalization of the bipartite case:
	\begin{equation}\label{}
		\begin{aligned}
			\Omega_{+}^{A_{1}}&\otimes_{max} \cdots \otimes_{max} \Omega_+^{A_n} = \\ &= \left\{ s \in \bigotimes_{i=1}^{n} V^{A_{i}} : \bigotimes_{i=1}^{n} \varphi_{e}^{A_{i}} [s] \geq 0, \quad \forall \varphi_{e}^{A_{i}} \in E^{A_{i}} \right\}\\\
			&= \left\{ s \in \bigotimes_{i=1}^{n} V^{A_{i}} : \left\langle \bigotimes_{i=1}^{n} e^{A_{i}} , s \right\rangle \geq 0 \quad \forall e^{A_{i}} \in \tilde{E}^{A_{i}}_{+} \right\}\\
			&= \left\{ s \in \bigotimes_{i=1}^{n} V^{A_{i}} : \left\langle \bigotimes_{i=1}^{n} e^{A_{i}} , s \right\rangle \geq 0 \quad \forall e^{A_{i}} \in \Omega^{A_{i}}_{+} \right\} \\
			&= W^{A_{i} \otimes ... \otimes A_{n}}
		\end{aligned}
	\end{equation}
	where $e^{A_{i}} \in V^{A_{i}}$ is associated to $ \varphi_{e}^{A_{i}} \in (V^{A_{i}})^*$ by the Riesz representation and $ \tilde{E}^{A_{i}}_{+} = \Omega^{A_{i}}_{+}$ because the $A_{i}$ are atomic quantum systems.
\end{proof}

\section{How to realise a PR-box in Boxworld and Witworld}\label{ap:PRbox}

To make explicit that Witworld can, in fact, realise a PR-box, we explicitly write down the elements in the diagram
\beq
\InputIfFileExists{Diagrams/prCorr.tikz}{}{\input{./figures/Diagrams/prCorr.tikz}}
\eeq
and thereby show that the following holds: 
\begin{equation}
p_{\mathrm{PR}}(ab|xy)  := \frac{1}{2} \delta_{a\oplus b = xy} = %
\InputIfFileExists{Diagrams/prCorr.tikz}{}{\input{./figures/Diagrams/prCorr.tikz}},
\end{equation}
where $ \oplus$ is addition modulo 2, and $a,b,x,y \in \{0,1\}$.

The first step is to define the measurements $M_{\mathrm{PR}}$ and $M'_{\mathrm{PR}}$. We characterise these by their associated set of effects. For example, $M_{\mathrm{PR}}$ by $\{e_{a|x}^{A}\}$, for the outcome $a$ when measurement $x$ is performed, diagrammatically these are defined as:
\beq\label{eq:40}
\begin{tikzpicture}
	\begin{pgfonlayer}{nodelayer}
		\node [style=none] (4) at (-0.5, -1.5) {};
		\node [style=none] (26) at (-0.5, -1.5) {};
		\node [style=copoint] (27) at (-0.5, 0) {$e_{a|x}$};
	\end{pgfonlayer}
	\begin{pgfonlayer}{edgelayer}
		\draw [b=6, in=90, out=-90] (27) to (26.center);
	\end{pgfonlayer}
\end{tikzpicture}
} \ \ := \ \ %
\InputIfFileExists{Diagrams/PREffect1.tikz}{}{\input{./figures/Diagrams/PREffect1.tikz}}.
\eeq
Similarly we can characterise the measurement $M'_{\mathrm{PR}}$ by the set of effects $\{e^{B}_{b|y}\}$ which are defined analogously. We therefore want to verify that there exists measurements and states such that:
\beq %
\InputIfFileExists{Diagrams/PREffect3.tikz}{}{\input{./figures/Diagrams/PREffect3.tikz}} \ \ = \ \ \frac{1}{2} \delta_{a\oplus b = xy},\eeq
which, to make this more explicit, can be rewritten symbolically as:
\beq 
 e_{a|x} \otimes e_{b|y} ( s^{AB}) = \frac{1}{2} \delta_{a\oplus b = xy}.
\eeq

Following definition \ref{boxworldsystem}, the states of $\B{2,2}$ are three component real vectors, whose first component is $p(0|0)$, second component is $p(0|1)$ and last component is $1$.
Therefore, for $e_{a|x} (s) = p(a|x)$ to hold, we need the vectors $ \tilde{e}_{a|x} $, associated to $e_{a|x}$ by the Riesz representation, to be given by
\begin{equation}\label{eq:thePReffectsp}
	\begin{aligned}
		\tilde{e}_{0|0} &=\begin{pmatrix} 1 \\ 0 \\ 0 \end{pmatrix},\quad \tilde{e}_{1|0} = \begin{pmatrix} -1 \\0 \\ 1 \end{pmatrix} \\ \tilde{e}_{0|1} &= \begin{pmatrix} 0 \\ 1 \\ 0 \end{pmatrix}, \quad \tilde{e}_{1|1} = \begin{pmatrix} 0 \\ -1 \\ 1 \end{pmatrix}.
	\end{aligned}
\end{equation}
Also by definition \ref{boxworldsystem}, the Riesz representation of the unit effect $u^{A}$ is given by
\begin{equation}\label{}
	\tilde{u}^{A} = \begin{pmatrix} 0 \\ 0 \\ 1 \end{pmatrix},
\end{equation}
so that $ \tilde{e}_{0|x} + \tilde{e}_{1|x } = \tilde{u}^{A} $, which makes $\{ e_{a|x} \}$ a valid measurement in $\B{2,2}$ for each $x$.

From the vectors above, we can, using the Kronecker product, write $ \tilde{e}_{a|x} \otimes  \tilde{e}_{b|y}$, which are associated to the product effects of the composite system $ \B{2,2} \cdot \B{2,2}$.
Now, any vector $s^{AB}$  in $V^{\B{2,2}} \otimes V^{\B{2,2}} \cong \mathds{R}^{9}$ such that $ e_{a|x} \otimes e_{b|y} ( s^{AB}) = \langle \tilde{e}_{a|x} \otimes \tilde{e}_{b|y} , s^{AB} \rangle \geq 0$ for all $a,b,x,y \in \{0,1\}$ is in the positive cone $ \Omega^{\B{2,2}}_{+} \otimes_{max} \Omega^{\B{2,2}}_{+} $.

We now show that the following vector describes a normalised state in the positive cone of bipartite states, and, moreover, reproduces the statistics of the PR box as we desire. That is, consider the vector:
\begin{equation}\label{eq:sPR}
s_{\mathrm{PR}} = \begin{pmatrix} 1/2 \\ 1/2 \\ 1/2 \\ 1/2 \\ 0 \\ 1/2 \\ 1/2 \\ 1/2 \\ 1 \end{pmatrix}.
\end{equation}
and note that it satisfies
\begin{equation}\label{}
	\langle \tilde{e}_{a|x} \otimes \tilde{e}_{b|y} , s_{\mathrm{PR}} \rangle = \frac{1}{2} \delta_{a \oplus b, xy},
\end{equation}
which can be verified by direct calculation.
Therefore $s_{\mathrm{PR}}$ is in the positive cone of $ \B{2,2} \cdot \B{2,2}$ and moreover reproduces the PR box statistics. Finally it is also normalised as:
\begin{equation}\label{}
	u^{A} \otimes u^{B} (s_{\mathrm{PR}} ) = 1.
\end{equation}
Hence, $s_{\mathrm{PR}}$ is a valid state of $\B{2,2} \cdot \B{2,2}$ which recovers the PR-box under the separable measurements $\{ e_{a|x}^{A} \otimes e_{b|y}^{B}\}$, which proves that Witworld can indeed realise a PR-box.

\section{Formalities of steering scenarios}\label{ap:stee}

Here we introduce the basic concepts in steering, starting from the simple case of a bipartite scenario and building up to more general cases.
We present steering scenarios by comparison with Bell scenarios, so that the former can be viewed as a modification of the latter where one or more of the parties does not perform a measurement. After the transition from Bell scenarios to a simple steering scenario, we introduce more general ones and proceed to define some important types of assemblages.
In particular, we focus on generalised steering scenarios which display post-quantum features.

Consider the pictorial description\footnote{For now, these diagrams are not, strictly speaking, the same kind of mathematical diagrams from App.~\ref{ap:GPT}, because drawing such diagrams presupposes that all the parts of them are objects existing in some GPT, and we don't know if there exists a GPT capable of realizing all assemblages to lend us its diagrams.} of  a no-signalling box in a Bell scenario where Alice (Bob) measures $x \in \mathds{X} $ ($y \in \mathds{Y}$) and obtains the outcome $a \in \mathds{A} $ ($b \in \mathds{B}$):
\begin{equation}\label{}
\InputIfFileExists{Diagrams/bellScenario.tikz}{}{\input{./figures/Diagrams/bellScenario.tikz}} = p(ab|xy).
\end{equation}
Now, suppose instead that Bob decides not to perform the measurement $y$ or indeed any other measurement, and, instead, merely keeps his system -- by assumption, a quantum one. Then Bob has the subnormalised states $ \sigma_{a|x}$ which are given by the following scheme:
\begin{equation}\label{}
\InputIfFileExists{Diagrams/steeringScenario-John.tikz}{}{\input{./figures/Diagrams/steeringScenario-John.tikz}} = %
\InputIfFileExists{Diagrams/sigmaax.tikz}{}{\input{./figures/Diagrams/sigmaax.tikz}} \equiv \sigma_{a|x} = p(a|x) \rho_{a|x},
\end{equation}
where $ \rho_{a|x}$ is the normalised state in possession of Bob when Alice obtains outcome $a$ upon measuring $x$ with probability $p(a|x)$.
The complete description of this scenario is specified by the the set $ \Sigma_{\mathds{A}|\mathds{X}} = \{ \sigma_{a|x} \}_{a \in \mathds{A},x \in \mathds{X}}$ of subnormalised quantum states, which contains the information about the states that Bob can have by the end of Alice's measurement, each of them conditioned on some measurement $x$ and outcome $a$ on Alice's side, together with the probability of the outcome $a$ happening.
This set of subnormalised quantum states, $ \Sigma_{\mathds{A}|\mathds{X}}$, is known as an assemblage \cite{pusey2013negativity}.
Note that the assemblage elements $ \sigma_{a|x}$ indeed contain the complete information about the scenario because $p(a|x) = \tr( \sigma_{a|x}) $ and $ \rho_{a|x} = \sigma_{a|x} / \tr( \sigma_{a|x} )$. 

Of course, if signalling is permitted between Alice and Bob, then (within quantum theory) any assemblage can be trivially prepared, so we restrict our discussion to the non-signalling scenarios. The assemblages that can possibly be produced in this case are called non-signalling assemblages. We define them as being those satisfying conditions analogous to those that define non-signalling boxes in Bell scenarios:
\begin{defn}[No signalling bipartite assemblages] An assemblage $\Sigma_{\A|\X}$ is no signalling iff
\begin{equation}\label{}
	\sigma_{a|x} \in \Omega^{B}_{+},
\end{equation}
\begin{equation}\label{}
	\sum_{a} \sigma_{a|x} = \sum_{a} \sigma_{a|x'} = \rho^{B} \quad \forall x,x',
\end{equation}
\begin{equation}\label{}
	u^{B} ( \rho^{B} ) = 1.
\end{equation}
\end{defn}
In the channel based picture, these constraints can be captured diagrammatically by the condition that $\Sigma$ is a `causal' channel \cite{beckman2001causal}:
\begin{equation}\label{}
\InputIfFileExists{Diagrams/NS1-John.tikz}{}{\input{./figures/Diagrams/NS1-John.tikz}} \ \ =\ \ %
\InputIfFileExists{Diagrams/NS2-John.tikz}{}{\input{./figures/Diagrams/NS2-John.tikz}}.
\end{equation}
It can be seen that this is equivalent to the standard no-signalling condition by noting that whatever input $x$ is chosen for the classical system $\X$ can have no influence over the quantum system, as it is in a fixed normalised state $\rho^B$.
The study of steering is the study of the properties of assemblages  -- or equivalently, the study of the properties of `causal' classical-quantum channels. We use this to make a classification of types of assemblages in a meaningful way. Notice that while the set $ \Sigma_{\mathds{A}|\mathds{X}}$ is a set of (subnormalised) quantum states, the fact that the system that Alice measures
is not specified opens up the possibility for post-quantumness in the joint scenario, while keeping the quantum theoretical description valid for the local state of Bob.

Of course, like in the study of Bell non-locality, this scenario can be generalised. The two generalisations that we consider here are: i) adding more parties that are steering Bob; and, ii) allowing for Bob to have a setting variable $y\in\mathds{Y}$.
For the purpose of investigating post-quantumness, this is not only possible, but necessary, as it has been proven \cite{gisin1989stochastic,hughston1993complete} that every assemblage in the standard bipartite scenario can be realised in quantum theory. That is, any standard bipartite no signalling assemblage can be constructed by Alice performing a controlled measurement on one half of a bipartite quantum state shared with Bob.
We call the assemblages which are beyond the powers of quantum theory in non-signalling scenarios, that is, which cannot be realised in this way, \emph{post-quantum assemblages}. 

In multipartite scenarios, the assemblage elements now carry the labels for the outcomes $ a_{i} \in \mathds{A}_{i}$ for multiple parties $i \leq N$, and similarly for the measurement choices $x_{i} \in \mathds{X}_{i}$. They form assemblages $ \Sigma_{\mathds{A}_{1} ... \mathds{A}_{N}| \mathds{X}_{1} ... \mathds{X}_{N }}= \{\sigma_{a_{1} ... a_{N} | x_{1} ... x_{N}}\} $ and are given in diagrammatic notation by 
\beq
\InputIfFileExists{Diagrams/suggestion5.tikz}{}{\input{./figures/Diagrams/suggestion5.tikz}}
\eeq

\begin{defn}[No-signalling multipartite assemblages]\label{def:NSassem}
A multipartite assemblage, $\Sigma_{\mathds{A}_{1} ... \mathds{A}_{N}| \mathds{X}_{1} ... \mathds{X}_{N }}$, is said to be no-signalling if it satisfies the following non-signalling constraints:\\
a)
\begin{equation}\label{}
	\sigma_{a_{1} ... a_{N} | x_{1} ..._{N}} \in \Omega^{B}_{+},
\end{equation}
b)
\begin{equation}\label{mcausal}
	\sum_{a_{1}... a_{N}} \sigma_{a_{1} ... a_{N} | x_{1}... x_{N} } = \rho^{B},
\end{equation}
c)
\begin{equation}\label{eq1}
	u( \rho^{B} ) =1,
\end{equation}
\bel{d) Let $S = \{s_1, \ldots, s_r\} \subseteq \{1,...,N\}$ be an arbitrary set of $r$ parties, with $1 \leq r \leq N$, and denote by $\{t_1,...,t_{N-r}\} = \{1,...,N\} \setminus S$. Then, for all such $S$,}
\begin{align}\label{eq2}
	\sum_{a_{s}\,:\,s \in S} \sigma_{a_{1}... a_{N}| x_{1} ... x_{N}} =& \sigma_{a_{t_{1}}... t_{j_{N-r}}|x_{t_{1}} ... x_{t_{N-r}}}.
\end{align}
\end{defn}
\bel{The no-signalling constraints of Def.~\ref{def:NSassem} can alternatively be expressed in diagrammatic notation in a simple way. For each arbitrary partitioning of $\{1,..N\}=\{s_1,...,s_r\}\sqcup\{t_1,...,t_{N-r}\}$, with $S=\{s_1,...,s_r\}$ and now $0 \leq r \leq N$, describe this partitioning via a physical splitting of the wires into a left hand group (the $s_i$) and a right hand group (the $t_j$) depicted by the process $\mathsf{Part}_S$. Then diagrammatically, Eqs.~\eqref{mcausal}, \eqref{eq1}, and \eqref{eq2} read:}
\beq
\InputIfFileExists{Diagrams/multiNS-John.tikz}{}{\input{./figures/Diagrams/multiNS-John.tikz}} \ \ = \ \ %
\InputIfFileExists{Diagrams/multiNS2-John.tikz}{}{\input{./figures/Diagrams/multiNS2-John.tikz}}.
\eeq

The other kind of generalization of the bipartite scenario that we consider is to allow Bob to, instead of staying passive, perform a local transformation labelled by $y \in \mathds{Y}$ to his share of the system. Here, we require that Bob's system is locally a quantum one only after his transformation. In this case, the assemblage is denoted by $ \Sigma_{\mathds{A}|\mathds{X}\mathds{Y}}=\{\sigma_{a|xy}\}$, and pictorially is represented by
\begin{equation}\label{}
\InputIfFileExists{Diagrams/suggestion3.tikz}{}{\input{./figures/Diagrams/suggestion3.tikz}}
\end{equation}

\begin{defn}[No-signalling Bob-with-input assemblages]\label{def:BWI} A Bob-with-input assemblage $\Sigma_{\A|\X\Y}$ is  no-signalling iff the following no-signalling constraints are satisfied:
\begin{equation}\label{}
	\sigma_{a|xy} \in \Omega^{B}_{+} \hspace{0.5cm} \forall a,x,y,
\end{equation}
\begin{equation}\label{}
	\sum_{a} \sigma_{a|xy} = \sum_{a} \sigma_{a|x'y} \hspace{0.5cm} \forall x,x',y,
\end{equation}
\begin{equation}\label{}
	u^{B} (\sigma_{a|xy}) = p(a|x) \hspace{0.5cm} \forall a,x,y.
\end{equation}
\end{defn}
These can be pictorially represented by:

\begin{align}
\InputIfFileExists{Diagrams/BWINS-John.tikz}{}{\input{./figures/Diagrams/BWINS-John.tikz}} \ \ &= \ \ %
\InputIfFileExists{Diagrams/BWINS2-John.tikz}{}{\input{./figures/Diagrams/BWINS2-John.tikz}}, \text{ and} \\
\InputIfFileExists{Diagrams/BWINS3-John.tikz}{}{\input{./figures/Diagrams/BWINS3-John.tikz}} \ \ &= \ \ %
\InputIfFileExists{Diagrams/BWINS4-John.tikz}{}{\input{./figures/Diagrams/BWINS4-John.tikz}}.
\end{align}

The last scenario which is important to us is the instrumental steering scenario. This can be seen as a Bob-with-input scenario where Bob's input $y$ is completely determined by Alice's output $a$. The assemblage $ \Sigma^I_{\mathds{A}|\mathds{X}}=\{\sigma_{a|x}\}$ in the instrumental scenario
can be diagrammatically represented as a wiring of a Bob-with-input assemblage $\Sigma$:
\begin{equation}\label{eq:instdiag}
\InputIfFileExists{Diagrams/instr2-John.tikz}{}{\input{./figures/Diagrams/instr2-John.tikz}}
	\ \ = \ \ 
\InputIfFileExists{Diagrams/instr-John.tikz}{}{\input{./figures/Diagrams/instr-John.tikz}} \,, 
\end{equation}
where the small circle in the bifurcation represents the copy operation, which is available for classical systems (which are the types of systems carrying measurement inputs and outputs). The class of instrumental assemblages of interest are known as general instrumental assemblages, in contrast to the other cases in which they were no-signalling assemblages. The reason for this is that in this scenario there is explicit signalling from Alice to Bob so the term no-signalling would be inappropriate.
\begin{defn}[General instrumental assemblage]\label{ap:Instapp}
An instrumental assemblage $\Sigma^I_{\A|\X}$ is said to be a general instrumental assemblage iff it is a wiring (as per Eq.~\eqref{eq:instdiag}) of a no-signalling Bob-with-input assemblage (Def.~\ref{def:BWI}).
\end{defn}

As was said previously, not only the scenarios are important but also some types of assemblages in each scenario should be defined. By doing so, we become able to talk precisely about what post-quantumness means for steering. These types are defined as follows. 

\begin{defn}[Local Hidden State (LHS) (N+1)-Partite Assemblage]\label{def:LHSapp}
	An assemblage $ \Sigma_{\mathds{A}_{1} ... \mathds{A}_{N} | \mathds{X}_{1} ... \mathds{X}_{N}}$ in the (N+1)-partite steering scenario has a local hidden state model iff it can be prepared by the parties $A_{i}$ performing local measurements on a shared classical random variable $\lambda$, whilst $B$ prepares a quantum state conditioned on this random variable. That is,
	\begin{equation}\label{}
			\sigma_{a_{1} ... a_{n} | x_{1} ... x_{N} } 
			= \sum_{\lambda} p(\lambda ) p_{\lambda}(a_{1} | x_{1} ) ... p_{\lambda }(a_{N} | x_{N}) \rho_{\lambda}^{B}  
	\end{equation}
		for some 
		probability distribution $p(\lambda )$ over $\lambda$, and quantum states $ \rho_{\lambda}^{B}$.
\end{defn}

\begin{defn}[Quantum (N+1)-Partite Assemblage]\label{def:Qapp}
	An assemblage $ \Sigma_{\mathds{A}_{1} ... \mathds{A}_{N} | \mathds{X}_{1} ... \mathds{X}_{N}}$ in the (N+1)-partite steering scenario has a quantum realization iff it can be prepared by the parties $A_{i}$ performing local quantum measurements on a shared quantum system. That is,

	\begin{align}\label{}
		&\sigma_{a_{1} ..._{N} | x_{1} ... x_{N}} \nonumber \\
		& \qquad = \tr_{A_{1}... A_{N}} \left( M^{x_{1}}_{a_{1} } \otimes ... \otimes M^{x_{N}}_{a_{N}} \otimes \mathds{1}_{B} \cdot \rho^{A_{1}  ...  A_{N}  B }\right)\,,
	\end{align}
	for some local POVMs $\{M_{a_{i}}^{x_{i}} \}$ and joint quantum state $ \rho^{A_{1} ...  A_{N}  B}$.
\end{defn}

\begin{defn}[Local Hidden State (LHS) Bob-with-input Assemblage]
	An assemblage $ \Sigma_{\mathds{A}|\mathds{X}\mathds{Y}}$ in the Bob-with-input steering scenario has a local hidden state model if and only if it can be prepared by the parties performing local operations on a shared classical system. That is,

\begin{equation}\label{}
		\rho_{a|xy} 
		= \sum_{\lambda} p(\lambda) p_{\lambda} (a|x) \, \rho_{\lambda,y}^{B}\,,
\end{equation}
where $ \rho_{\lambda,y}^{B}$ are local quantum states. 
\end{defn}

\begin{defn}[Quantum Bob-with-input Assemblage]
	An assemblage $\Sigma_{\mathds{A}|\mathds{X}\mathds{Y}}$ in the Bob-with-input steering scenario has a quantum realization if and only if it can be prepared by the parties performing local operations on a shared quantum system. That is,
	\begin{equation}\label{}
		\sigma_{a|xy} = T_{y}\left(\tr_{A} \left[ M_{a}^{x} \otimes \mathds{1}_{B} \cdot \rho^{AB} \right]\right)
	\end{equation}
	for some local POVM $\{ M_{a}^{x}\}$, joint quantum state $ \rho^{AB}$, and quantum operations $\{ T_{y}\}$.
\end{defn}

The LHS and quantum assemblages in the instrumental scenario are defined just like in the Bob-with-input scenario, but with the constraint that $y=a$. 

With the steering scenarios and assemblages defined in a general way, we can proceed to describe how these appear within the GPT framework. We follow a similar path, starting from Bell nonlocality scenarios and legitimately use the GPT diagrams for each case.

In a GPT, a Bell scenario where the no-signalling condition is satisfied is produced when Alice (Bob) makes local measurements $ M_{A}$ ($M^{\prime}_{B}$) with input $x$ ($y$) on a shared state $s$. The set of no-signalling boxes that can be realised in such a way are equivalent to the set of `causal' classical channels, $N$, that can be realised by:

\begin{equation}\label{}
\InputIfFileExists{Diagrams/bellScenario-John.tikz}{}{\input{./figures/Diagrams/bellScenario-John.tikz}}\ \  =\ \  %
\InputIfFileExists{Diagrams/bellScenario0-John.tikz}{}{\input{./figures/Diagrams/bellScenario0-John.tikz}}\,.
\end{equation}

Again, we now let Bob be passive and perform no measurement, under the assumption that his system is a quantum one, and the resulting diagram is an assemblage element in the bipartite steering scenario:
\begin{defn}[GPT realisable assemblages] 
i) A bipartite assemblage $\Sigma_{\A|\X}$ is GPT realisable for a given GPT iff the channel associated to it can be written as:

\beq
\InputIfFileExists{Diagrams/suggestion1.tikz}{}{\input{./figures/Diagrams/suggestion1.tikz}}\ \ = \ \ %
\InputIfFileExists{Diagrams/suggestion2.tikz}{}{\input{./figures/Diagrams/suggestion2.tikz}}\,.
\eeq
Note that the transformation $T$ can be viewed as the process by which Bob characterises his system, which could be a post-quantum system, as a quantum system. This could be incorporated into the state $s$ and we could view Bob as being given a quantum system to start with, this picture, however, is useful for later generalisations.

\noindent ii) A multipartite assemblage $\Sigma_{\A_1...\A_N|\X_1...\X_N}$ is GPT realisable if and only if its associated causal channel can be written as:
\beq\label{nsteering}
\InputIfFileExists{Diagrams/suggestion5.tikz}{}{\input{./figures/Diagrams/suggestion5.tikz}}\ \ = \ \ %
\InputIfFileExists{Diagrams/suggestion6.tikz}{}{\input{./figures/Diagrams/suggestion6.tikz}}\,.
\eeq

\noindent iii) A Bob-with-input assemblage $\Sigma_{\A|\X\Y}$ is GPT realisable iff its associated channel can be written as: 
\beq\label{bobin}
\InputIfFileExists{Diagrams/suggestion3.tikz}{}{\input{./figures/Diagrams/suggestion3.tikz}}\ \ = \ \ %
\InputIfFileExists{Diagrams/suggestion4.tikz}{}{\input{./figures/Diagrams/suggestion4.tikz}}\,.
\eeq

\noindent iv) An Instrumental assemblage $\Sigma^I_{\A|\X}$ is GPT realisable if and only if it is a wiring of a GPT realisable assemblage in the Bob-with-input scenario: 

\begin{equation}\label{instr}
\InputIfFileExists{Diagrams/instr2-John.tikz}{}{\input{./figures/Diagrams/instr2-John.tikz}} \ \ = \ \ %
\InputIfFileExists{Diagrams/suggestion10.tikz}{}{\input{./figures/Diagrams/suggestion10.tikz}}\,.
\end{equation}

\end{defn}

With the definition of GPT realisable assemblages in place, we can revisit the LHS and quantum assemblages, and see how these amount to restrictions on the shared state $s$ and the GPT to which it belongs. That is, if we consider the assemblages that are realisable in the GPT of quantum theory, then, within this GPT, the GPT realisable assemblages are exactly the Quantum assemblages. If we moreover restrict to the state $s$ being a separable state, then we recover the LHS assemblages.

\section{Details on the Remote State Preparation protocol}\label{app:rsp}

In this section, we prove that 
Eq.~\eqref{eq:RSPwit} is true. Namely, that for all $\psi$ we have:
\beq
\InputIfFileExists{Diagrams/RSP.tikz}{}{\input{./figures/Diagrams/RSP.tikz}} \quad = \quad %
\begin{tikzpicture}
	\begin{pgfonlayer}{nodelayer}
		\node [style=point] (5) at (0.75, -0.75) {$\psi$};
		\node [style=none] (7) at (0.75, 1) {};
		\node [style=right label] (9) at (0.75, 0) {$\Q{2}$};
	\end{pgfonlayer}
	\begin{pgfonlayer}{edgelayer}
		\draw [q] (7.center) to (5);
	\end{pgfonlayer}
\end{tikzpicture}
}.
\eeq

To see this we note some basic results regarding the various components of the diagram on the left hand side. First note that, as we are considering a qubit system $\Q{2}$, then for each pure state $\psi$ there is a unique orthogonal state, which we denote as $\psi^\perp$. This uniqueness, in particular, means that:
\beq\label{eq:doublePerp}
\begin{tikzpicture}
	\begin{pgfonlayer}{nodelayer}
		\node [style=point] (5) at (0, -0.75) {$\psi^{\perp^\perp}$};
		\node [style=none] (7) at (0, 0.75) {};
		\node [style=right label] (9) at (0, 0.25) {$\Q{2}$};
	\end{pgfonlayer}
	\begin{pgfonlayer}{edgelayer}
		\draw [q] (7.center) to (5);
	\end{pgfonlayer}
\end{tikzpicture}
} \ \ = \ \ %
\begin{tikzpicture}
	\begin{pgfonlayer}{nodelayer}
		\node [style=point] (5) at (0, -0.5) {$\psi$};
		\node [style=none] (7) at (0, 0.75) {};
		\node [style=right label] (9) at (0, 0.25) {$\Q{2}$};
	\end{pgfonlayer}
	\begin{pgfonlayer}{edgelayer}
		\draw [q] (7.center) to (5);
	\end{pgfonlayer}
\end{tikzpicture}
}.
\eeq

Now, turning to basic properties of the diagrammatic elements we have:
\bit
\item[i.] the singlet state
\beq
\InputIfFileExists{Diagrams/RSP-State.tikz}{}{\input{./figures/Diagrams/RSP-State.tikz}}
\eeq
satisfies
\beq\label{eq:RSPState}
\InputIfFileExists{Diagrams/RSP-State2.tikz}{}{\input{./figures/Diagrams/RSP-State2.tikz}} \ \ = \ \ \frac{1}{2} \  %
\begin{tikzpicture}
	\begin{pgfonlayer}{nodelayer}
		\node [style=none] (10) at (0, 1.25) {};
		\node [style=point] (11) at (0, -0.25) {$\psi^\perp$};
		\node [style=right label] (12) at (0, 1) {$\Q{2}$};
	\end{pgfonlayer}
	\begin{pgfonlayer}{edgelayer}
		\draw [q, in=270, out=90] (11) to (10.center);
	\end{pgfonlayer}
\end{tikzpicture}
} \quad \forall \psi;
\eeq
\item[ii.] the measurement
\beq
\InputIfFileExists{Diagrams/RSP-Measurement.tikz}{}{\input{./figures/Diagrams/RSP-Measurement.tikz}}
\eeq
 is the computational basis measurement satisfying
\beq\label{eq:compMeas}
\InputIfFileExists{Diagrams/RSP-Measurement2.tikz}{}{\input{./figures/Diagrams/RSP-Measurement2.tikz}} \ \ = \ \ %
\begin{tikzpicture}
	\begin{pgfonlayer}{nodelayer}
		\node [style=copoint] (10) at (0, 0.5) {$i$};
		\node [style=none] (12) at (0, -1) {};
		\node [style=right label] (13) at (0, -0.5) {$\Q{2}$};
	\end{pgfonlayer}
	\begin{pgfonlayer}{edgelayer}
		\draw [q, in=-90, out=90] (12.center) to (10);
	\end{pgfonlayer}
\end{tikzpicture}
} \quad \forall \ i\in\{0,1\};
\eeq
\item[iii.] the family  of unitaries
\beq
\InputIfFileExists{Diagrams/RSP-Unitary.tikz}{}{\input{./figures/Diagrams/RSP-Unitary.tikz}}
\eeq
satisfies (for all $\bm{\uppsi}$)
\beq\label{eq:RSPUnitary}
\InputIfFileExists{Diagrams/RSP-Unitary2.tikz}{}{\input{./figures/Diagrams/RSP-Unitary2.tikz}} \ \ = \ \ %
\begin{tikzpicture}
	\begin{pgfonlayer}{nodelayer}
		\node [style=none] (4) at (0, -1.25) {};
		\node [style=copoint] (6) at (0, 0.25) {$\psi^\perp$};
		\node [style=right label] (8) at (0, -0.75) {$\Q{2}$};
	\end{pgfonlayer}
	\begin{pgfonlayer}{edgelayer}
		\draw [q] (4.center) to (6);
	\end{pgfonlayer}
\end{tikzpicture}
} \quad \text{and} \quad %
\InputIfFileExists{Diagrams/RSP-Unitary3.tikz}{}{\input{./figures/Diagrams/RSP-Unitary3.tikz}} \ \ = \ \ %
\begin{tikzpicture}
	\begin{pgfonlayer}{nodelayer}
		\node [style=none] (4) at (0, -1.25) {};
		\node [style=copoint] (6) at (0, 0.25) {$\psi$};
		\node [style=right label] (8) at (0, -0.75) {$\Q{2}$};
	\end{pgfonlayer}
	\begin{pgfonlayer}{edgelayer}
		\draw [q] (4.center) to (6);
	\end{pgfonlayer}
\end{tikzpicture}
};
\eeq
\item[iv.] the controlled transformation
\beq
\InputIfFileExists{Diagrams/RSP-Correction.tikz}{}{\input{./figures/Diagrams/RSP-Correction.tikz}}
\eeq
satisfies
\beq\label{eq:RSPCorr}
\InputIfFileExists{Diagrams/RSP-Correction2.tikz}{}{\input{./figures/Diagrams/RSP-Correction2.tikz}} \ \ = \ \ %
\begin{tikzpicture}
	\begin{pgfonlayer}{nodelayer}
		\node [style=none] (5) at (0, -1.5) {};
		\node [style=none] (7) at (0, 1.5) {};
		\node [style=right label] (9) at (0, 0) {$\Q{2}$};
	\end{pgfonlayer}
	\begin{pgfonlayer}{edgelayer}
		\draw [q] (7.center) to (5.center);
	\end{pgfonlayer}
\end{tikzpicture}
} \quad \text{and} \quad %
\InputIfFileExists{Diagrams/RSP-Correction3.tikz}{}{\input{./figures/Diagrams/RSP-Correction3.tikz}} \ \ = \ \ %
\InputIfFileExists{Diagrams/RSP-Correction4.tikz}{}{\input{./figures/Diagrams/RSP-Correction4.tikz}};
\eeq
\item[v.] the universal not gate
\beq
\InputIfFileExists{Diagrams/RSP-UNot.tikz}{}{\input{./figures/Diagrams/RSP-UNot.tikz}}
\eeq
satisfies
\beq\label{eq:RSPUnot}
\InputIfFileExists{Diagrams/RSP-UNot2.tikz}{}{\input{./figures/Diagrams/RSP-UNot2.tikz}} \ \ = \ \ %
\begin{tikzpicture}
	\begin{pgfonlayer}{nodelayer}
		\node [style=point] (5) at (0, -0.5) {$\psi^\perp$};
		\node [style=none] (7) at (0, 0.75) {};
		\node [style=right label] (9) at (0, 0.25) {$\Q{2}$};
	\end{pgfonlayer}
	\begin{pgfonlayer}{edgelayer}
		\draw [q] (7.center) to (5);
	\end{pgfonlayer}
\end{tikzpicture}
} \quad \forall \psi.
\eeq
\eit
Finally, note that we can decompose the classical system as a sum of projectors. Consider the basis for classical system $\C{v}$ labelled by $i \in \{1,...,v\}$, then
\begin{equation}\label{eq:classDecomp}
	\vspace{7cm}%
\begin{tikzpicture}
	\begin{pgfonlayer}{nodelayer}
		\node [style=none] (0) at (0, -1) {};
		\node [style=none] (1) at (0, 1) {};
		\node [style=right label] (2) at (0, 0) {$\C{v}$};
	\end{pgfonlayer}
	\begin{pgfonlayer}{edgelayer}
		\draw [c, in=270, out=90] (0.center) to (1.center);
	\end{pgfonlayer}
\end{tikzpicture}
} = \sum_{i=1}^{v}%
\begin{tikzpicture}
	\begin{pgfonlayer}{nodelayer}
		\node [style=none] (0) at (0, -2) {};
		\node [style=copoint] (1) at (0, -0.75) {$i$};
		\node [style=point] (2) at (0, 0.75) {$i$};
		\node [style=none] (3) at (0, 2) {};
	\end{pgfonlayer}
	\begin{pgfonlayer}{edgelayer}
		\draw [c] (3.center) to (2);
		\draw [c, in=270, out=90] (0.center) to (1);
	\end{pgfonlayer}
\end{tikzpicture}
}.
\end{equation}

From these, together with linearity of GPTs, we can conclude that the protocol does indeed work as we want:

\begin{align}
\InputIfFileExists{Diagrams/RSP.tikz}{}{\input{./figures/Diagrams/RSP.tikz}} \ \ &\stackrel{\ref{eq:classDecomp}}{=} \ \ \sum_i \  %
\InputIfFileExists{Diagrams/RSP-Proof1.tikz}{}{\input{./figures/Diagrams/RSP-Proof1.tikz}} \\
&\stackrel{\ref{eq:compMeas}}{=} \ \ \sum_i \  %
\InputIfFileExists{Diagrams/RSP-Proof2.tikz}{}{\input{./figures/Diagrams/RSP-Proof2.tikz}} \\
&\stackrel{\ref{eq:RSPUnitary}}{=} \ \ %
\InputIfFileExists{Diagrams/RSP-Proof3.tikz}{}{\input{./figures/Diagrams/RSP-Proof3.tikz}} \ + \ %
\InputIfFileExists{Diagrams/RSP-Proof4.tikz}{}{\input{./figures/Diagrams/RSP-Proof4.tikz}} \\
&\stackrel{\ref{eq:RSPCorr}}{=} \ \ %
\InputIfFileExists{Diagrams/RSP-Proof5.tikz}{}{\input{./figures/Diagrams/RSP-Proof5.tikz}} \ + \ %
\InputIfFileExists{Diagrams/RSP-Proof6.tikz}{}{\input{./figures/Diagrams/RSP-Proof6.tikz}} \\
&\stackrel{\ref{eq:RSPState}}{=} \ \ \frac{1}{2} \  %
\begin{tikzpicture}
	\begin{pgfonlayer}{nodelayer}
		\node [style=point] (5) at (0.75, -1) {$\psi^{\perp^\perp}$};
		\node [style=none] (7) at (0.75, 1) {};
		\node [style=right label] (21) at (0.75, 0.75) {$\Q{2}$};
	\end{pgfonlayer}
	\begin{pgfonlayer}{edgelayer}
		\draw [q] (7.center) to (5);
	\end{pgfonlayer}
\end{tikzpicture}
} \ + \ \frac{1}{2} \ %
\InputIfFileExists{Diagrams/RSP-Proof8.tikz}{}{\input{./figures/Diagrams/RSP-Proof8.tikz}} \\
&\stackrel{\ref{eq:RSPUnot}}{=} \ \ \frac{1}{2} \ %
\begin{tikzpicture}
	\begin{pgfonlayer}{nodelayer}
		\node [style=point] (5) at (0.75, -1) {$\psi^{\perp^\perp}$};
		\node [style=none] (7) at (0.75, 1) {};
		\node [style=right label] (9) at (0.75, 0.25) {$\Q{2}$};
	\end{pgfonlayer}
	\begin{pgfonlayer}{edgelayer}
		\draw [q] (7.center) to (5);
	\end{pgfonlayer}
\end{tikzpicture}
} \ + \ \frac{1}{2} \ %
} \\
&\stackrel{\ref{eq:doublePerp}}{=} %
},
\end{align}
which is the state $ \ket{\psi}\bra{\psi}$ chosen by Alice but prepared at Bob's lab, as required.

\end{document}